\documentclass[tbtags,authorcolumns,numberwithinsect,footnotebackref]{no-lipics}

\usepackage{amssymb}
\usepackage{mathtools}
\usepackage{bm}
\usepackage{stmaryrd}
\usepackage[draft]{fixme}
\usepackage{booktabs}
\usepackage{tikz-3dplot}
\usepackage{caption}
\usepackage{subcaption}
\usepackage{cases}
\usepackage{colortbl}

\usetikzlibrary{shapes, positioning}

\SetKwComment{Comment}{$\triangleright$\ \rm}{}

\usetikzlibrary{arrows, arrows.meta,shapes.misc, decorations.pathreplacing,decorations.pathmorphing,calligraphy, patterns, patterns.meta, calc}
\tikzset{dotmark/.style={circle,fill,inner sep=1.5pt}}
\tikzset{emptymark/.style={circle,draw,fill=white,inner sep=1.5pt}}
\tikzset{crossmark/.style={thick,inner sep=1.5pt}}

\newcommand{\eps}{\varepsilon}
\newcommand{\A}{\mathcal{A}}

\SetKwInput{KwInput}{Input}
\SetKwInput{KwOutput}{Output}

\def\ShowAuthNotes{1}
\ifnum\ShowAuthNotes=1
\newcommand{\authnote}[3]{\textcolor{#3}{[{\bf #1:} { {#2}}]}}
\else
\newcommand{\authnote}[3]{}
\fi

\newcommand{\Oh}{\mathcal{O}}
\newcommand{\Ohtilde}{\tilde{\Oh}}

\DeclareMathOperator*{\poly}{poly}
\DeclareMathOperator*{\polylog}{polylog}

\def\fragmentco#1#2{\bm{[}\,#1\,\bm{.\,.}\,#2\,\bm{)}}
\def\fragmentoc#1#2{\bm{(}\,#1\,\bm{.\,.}\,#2\,\bm{]}}
\def\fragmentoo#1#2{\bm{(}\,#1\,\bm{.\,.}\,#2\,\bm{)}}
\def\fragment#1#2{\bm{[}\,#1\,\bm{.\,.}\,#2\,\bm{]}}

\newcommand{\yes}{{\sc yes}\xspace}
\newcommand{\no}{{\sc no}\xspace}

\newcommand{\APSP}{\textsf{APSP}\xspace}
\newcommand{\RP}{\textsf{RP}\xspace}
\newcommand{\SSRP}{\textsf{SSRP}\xspace}
\newcommand{\FRP}{\textsf{FRP}\xspace}

\newcommand{\hi}{\hat{\imath}}
\newcommand{\hj}{\hat{\jmath}}

\newcommand{\sT}{\mathsf{T}}

\newcommand{\bG}{\mathbf{G}}
\newcommand{\bH}{\mathbf{H}}

\newcommand{\bL}{\mathbf{L}}
\newcommand{\bR}{\mathbf{R}}
\newcommand{\bA}{\mathbf{A}}
\newcommand{\bB}{\mathbf{B}}

\newcommand{\bT}{\mathbf{T}}

\renewcommand{\epsilon}{\varepsilon}

\DeclarePairedDelimiter\abs{\lvert}{\rvert}
\DeclarePairedDelimiter\norm{\lVert}{\rVert}

\makeatletter
\let\oldabs\abs
\def\abs{\@ifstar{\oldabs}{\oldabs*}}
\let\oldnorm\norm
\def\norm{\@ifstar{\oldnorm}{\oldnorm*}}
\makeatother

\def\problembox#1{%
    \vspace{2mm}%
    \noindent\fbox{%
    \begin{minipage}{.985\linewidth}%
        #1
    \end{minipage}%
    }%
    \vspace{2mm}%
}
\newcommand{\defproblem}[3]{%
    \problembox{%
        \textbf{#1}\\
        {\bf{Input:}} #2 ~\\
        {\bf{Output:}} #3
    }%
}

\makeatletter
\renewenvironment{cases}{%
  \matrix@check\cases\env@cases
}{%
  \endarray\right.%
}
\def\env@cases{%
  \let\@ifnextchar\new@ifnextchar
  \left\lbrace
  \def\arraystretch{1.1}%
  \array{@{\;}c@{\quad}l@{}}%
}
\makeatother

\def\mid{\ensuremath :}

\def\emptyset{\varnothing}

\newcommand\thefont{\expandafter\string\the\font}

\title{Undirected Replacement Paths:\\Dual Fault Reduces to Single Source}

\author{Jakob Nogler}{Massachusetts Institute of Technology\\Cambridge, United States}{jnogler@mit.edu}{https://orcid.org/0009-0002-7028-2595}{Supported by the Akamai Presidential Fellowship}
\author{Virginia Vassilevska Williams}{Massachusetts Institute of Technology\\Cambridge, United States}{virgi@mit.edu}{https://orcid.org/0000-0003-4844-2863}{Supported by NSF Grant CCF-2330048, BSF Grant 2024233 and a Simons Investigator Award}

\authorrunning{J. Nogler and V. Vassilevska Williams}
\titlerunning{Undirected Replacement Paths: Dual Fault Reduces to Single Source}

\nolinenumbers

\begin{document}
\maketitle
\begin{abstract}
Given a graph and two fixed vertices $s$ and $t$, the \emph{Replacement Path} Problem (\RP) is to compute for every edge $e$, the distance between $s$ and $t$ when $e$ is removed. There are two natural extensions to \RP:

\begin{itemize}
    \item \textbf{Single Source Replacement Paths (\SSRP):} Given a graph $\bG$ and a source node $s$, compute for every vertex $v$ and every edge $e$ the $s$-$v$ distance in $\bG \setminus e$. That is, we do not fix the target anymore.
    \item \textbf{$2$-Fault Replacement Paths (2-\FRP):} Given a graph $\bG$ and two nodes $s$ and $t$,  compute for every pair of edges $e,e'$ the $s$-$t$ distance in $\bG \setminus e,e'$. That is, we consider two failures instead of one.
\end{itemize}

Previously, there was no known formal reduction between \SSRP and 2-\FRP. It seemed plausible that 2-\FRP would be computationally harder because there are no settings where 2-\FRP admits a faster algorithm than \SSRP. In directed unweighted graphs there is a provable gap in complexity, and in undirected graphs many of the known 2-\FRP algorithms in a variety of settings are much slower than those for \SSRP in the same setting. 

The main contribution of this paper is a tight reduction from undirected $2$-\FRP to undirected \SSRP, showing that contrary to prior intuition, 2-\FRP is not harder than \SSRP.
As our reduction is weight-preserving, we obtain the first algorithms for $2$-\FRP that match the best-known runtimes for \SSRP:
\begin{enumerate}[(a)]
    \item $\Ohtilde(M n^{\omega})$ for weights in $\fragment{1}{M}$ [Grandoni and Vassilevska Williams, FOCS 2012 \& TALG 2019], improving upon $\Oh(Mn^{2.87})$ [Chechik and Zhang, ICALP 2024];
    \item $n^3/2^{\Omega(\sqrt{\log n})}$ for weights in $\fragment{1}{\poly(n)}$ [Grandoni and Vassilevska Williams, FOCS 2012 \& TALG 2019], improving over the previous $n^3\polylog(n)$ running time [Vassilevska W., Woldeghebriel and Xu, FOCS 2022];
    \item $\Ohtilde(mn^{1/2}+n^{2})$ combinatorial time for unweighted graphs [Chechik and Cohen, SODA 2019], and more generally for rational weights in $[1,2]$ [Chechik and Magen, ICALP 2020], improving upon $\Ohtilde(n^{3-1/18})$ [Chechik and Zhang, ICALP 2024].
\end{enumerate}

We complement these upper bounds with tight lower bounds under established fine-grained hypotheses.
\end{abstract}

\clearpage

\section{Introduction}

In the \emph{Replacement Path} problem (\RP) one is given a graph $\bG$ and two fixed nodes $s$ and $t$ and is asked to compute for every possible failed edge $e$ of $\bG$, the distance between $s$ and $t$ in the subgraph $\bG\setminus e$. 
Beyond its clear application for distance computation in error-prone networks, the problem serves as a critical algorithmic primitive in diverse domains: it can be used to identify segments in biological sequence alignments \cite{BW84}, to calculate Vickrey Prices in an auction on a distributed network \cite{NR01, HS01}, and as a subproblem in an algorithm for computing the $k$th shortest simple $s$-$t$ paths in both directed and undirected graphs \cite{L72, Y71, R07, RZ12}.

For undirected graphs, a near-linear, $\Ohtilde(m)$\footnote{The $\Ohtilde(\cdot)$ notation suppresses factors in $\Oh(\polylog(n))$.} time  algorithm was first proposed by Malik et al. \cite{MMG89} (with corrections in \cite{SBK95}), and later improved by Nardelli, Proietti, and Widmayer to $\mathcal{O}(m\alpha(n,m))$ \cite{NPW01}. Interestingly, \RP is computationally harder in directed graphs: the general weighted version is equivalent to the All-Pairs Shortest Paths (\APSP) problem, with reductions existing in both directions \cite{GL09, VWW18}.

The popular \APSP Hypothesis from Fine-Grained Complexity (see e.g. \cite{vsurvey}) states that \APSP in $n$-node graphs requires $n^{3-o(1)}$ time in the word-RAM model of computation. Due to the equivalence, the \APSP Hypothesis implies an $n^{3-o(1)}$ time lower bound
for \RP in directed graphs as well. Truly subcubic\footnote{A running time is truly subcubic if it is of the form $\Oh(n^{3-\eps})$ for constant $\eps>0$.} time algorithms exist for graphs with small integer weights \cite{RZ12, VW11, WY13} and for the approximate version of the problem \cite{B10}.

The two most studied generalizations of \RP involve either increasing the number of failures or relaxing the target constraint. The first considers the failure of \emph{two} edges simultaneously, while the second generalizes the target by requiring the computation of the distances from a fixed source $s$ to all other vertices $v \in V(\bG)$ after the failure of each edge $e$.
More formally, the two obtained problems read as follows.

\defproblem
{Single Source Replacement Paths Problem (\SSRP)}
{A graph $\bG$ and a node $s \in V(\bG)$.}
{The distance $d_{\bG\setminus e}(s,v)$ for each edge $e \in E(\bG)$ and node $v \in V(\bG)$.}

\defproblem
{Dual-Fault Replacement Path Problem (2-\FRP)}
{A graph $\bG$ and nodes $s,t \in V(\bG)$.}
{The distance $d_{\bG\setminus e,e'}(s,t)$ for all $e,e' \in E$.}

We remark that for both problems the output size is actually quadratic.
In \SSRP this is because for each $v \in V(\bG)$ we only care about $d_{\bG\setminus e}(s,v)$ for $e$ on the shortest path $\pi_{\bG}(s,v)$ between $s$ and $v$.
Similarly, for $2$-\FRP we only need to consider $e,e'$ on the shortest $s$-$t$ path in $\bG$ and $\bG \setminus e$, respectively.

Next, we summarize our algorithmic understanding of \SSRP and 2-\FRP (see also \cref{table:algos}).

\subparagraph*{Previous results on \SSRP.}
In a general graph, \SSRP admits a simple $\Ohtilde(mn)$ time algorithm by solving SSSP in $\bG \setminus e$ for each $e$ in the shortest path tree of $s$ in $\bG$. This is optimal even in undirected graphs under the \APSP hypothesis \cite{CC19} (a previous $\Omega(nm)$ lower bound was already established in the path comparison model \cite{HSB07, KKP}).

The first to study \SSRP in depth were Grandoni and Vassilevska Williams \cite{GVW12, GV19}, who positioned the problem not only as a generalization of \RP but also as a fundamental component for constructing faster \emph{distance sensitivity oracles} (DSOs). They provided algorithms running in $\Ohtilde(M^{1/(4-\omega)}n^{2 + 1/(4-\omega)})$ time for directed graphs with weights in $\fragment{-M}{M}$ and in $\Ohtilde(Mn^{\omega})$ time for weights in $\fragment{1}{M}$; the latter matches the runtime for the \RP problem itself. Notably, \cite{GVW12, GV19} demonstrates that \SSRP in general graphs also reduces to \APSP, establishing a computational equivalence between the two problems and, by extension, directed \RP. Subsequent improvements for the $\fragment{-M}{M}$ weight domain were later provided by \cite{GPVWX21} and \cite{BCFS21}.

Combinatorial algorithms for \RP and \SSRP have also been intensively studied. For \RP in $m$-edge, $n$-node directed unweighted graphs, Roditty and Zwick \cite{RZ12} gave an $\Ohtilde(m\sqrt n)$ time algorithm and Vassilevska Williams and Williams \cite{VWW18} showed that this is optimal for ``combinatorial''\footnote{Not a well-defined notion but used in the literature anyway.} algorithms unless the so called BMM Hypothesis is violated; the latter asserts that Boolean Matrix Multiplication (BMM) does not admit an $\Oh(n^{3-\eps})$ time combinatorial algorithm for any $\eps>0$. Chechik and Cohen \cite{CC19} showed that roughly the same running time and conditional lower bounds hold for undirected unweighted \SSRP by giving an $\Ohtilde(\sqrt{n}m + n^{2})$ time combinatorial algorithm and a matching combinatorial lower bound under the BMM Hypothesis. Then \cite{CM20} obtained the same running time for directed unweighted graphs and even for directed graphs with rational weights upper bounded by a constant (derandomized in \cite{BCFS21}) and showed that the latter version requires $mn^{0.5-o(1)}$ time even under the \APSP hypothesis. Lastly, \SSRP has also been studied in specialized graph classes, such as planar graphs \cite{BLM12}, and in the approximate setting \cite{BK13, BCGLPP18, BCHL20, BCFS21, HKIM24}. (In the latter, to bypass the quadratic output lower bound, \SSRP is often framed as an oracle problem.) 

\newcommand{\citesub}[1]{\scriptsize{\cite{#1}}}

\begin{table}[t]
\small
\centering
\setlength{\tabcolsep}{5pt} 

\begin{tabular}{|c|c|cc|cc|}
\hline
\multirow{2}{*}{\textbf{Setting}} &
\multirow{2}{*}{\textbf{Weights}} 
  & \multicolumn{2}{c|}{\SSRP} 
  & \multicolumn{2}{c|}{2-\FRP} \\ 
& & Upper Bound & Lower Bound & Upper Bound & Lower Bound \\ \hline
undir. & $\fragment{1}{n^c}$ & \APSP-time \citesub{GV19} & \APSP-hard \citesub{CM20} & \cellcolor{yellow!40}$\Ohtilde(n^3)$ \citesub{VWWX22} & \APSP-hard \citesub{HLNVW17} \\ \hline
dir. & $\fragment{1}{n^c}$ & \APSP-time \citesub{GV19} & \APSP-hard \citesub{CM20} & $\Ohtilde(n^3)$ \citesub{VWWX22} & \APSP-hard \citesub{VWW18} \\ \hline
dir. & $\fragment{-M}{M}$ & $\Oh(M^{0.81}n^{2.49})$\citesub{GPVWX21} & \textbf{?} & $\Oh(Mn^{2.87})$ \citesub{CT24} & \textbf{?} \\ \hline
undir. & $\fragment{1}{M}$ & $\Ohtilde(Mn^{\omega})$ \citesub{GV19}  & \textbf{?} & \cellcolor{yellow!40}$\Oh(Mn^{2.87})$ \citesub{CT24} & \textbf{?} \\ \hline
dir. & $\fragment{1}{M}$ & $\Ohtilde(Mn^{\omega})$ \citesub{GV19}  & \textbf{?} & $\Oh(Mn^{2.87})$ \citesub{CT24} & \textbf{?} \\ \hline
undir. & $\mathbb{Q} \cap [1,2]$ & $\Ohtilde(mn^{1/2}+n^{2})$ \citesub{CM20}& $mn^{1/2-o(1)}$ \citesub{CM20} & \cellcolor{yellow!40}$\Ohtilde(n^3)$ \citesub{VWWX22} & \cellcolor{yellow!40}\textbf{?} \\ \hline
dir. & $\mathbb{Q} \cap[1,2]$ & $\Ohtilde(mn^{1/2}+n^{2})$ \citesub{CM20} & $mn^{1/2-o(1)}$ \citesub{CM20} & $\Ohtilde(n^3)$ \citesub{VWWX22} & \cellcolor{yellow!40}\textbf{?} \\ \hline
undir. & unweigh. & $\Ohtilde(mn^{1/2}+n^{2})$ \citesub{CC19} & $mn^{1/2-o(1)}$ \citesub{CC19} & \cellcolor{yellow!40}$\Oh(n^{3-1/18})$\citesub{CT24} & \cellcolor{yellow!40}\textbf{?} \\ \hline
dir.  & unweigh. & $\Ohtilde(mn^{1/2}+n^{2})$ \citesub{CM20} & $mn^{1/2-o(1)}$ \citesub{CM20}& $\Oh(n^{3-1/18})$\citesub{CT24} & $n^{8/3-o(1)}$ \citesub{VWWX22} \\ \hline
\end{tabular}
 \bigskip
\caption{Upper and lower bounds for \SSRP and 2-\FRP under different graph settings before this work. Under the current bounds \APSP can be solved in time $n^3/\exp({\Omega(\sqrt{\log n})})$ \cite{W18}. The last two rows refer to combinatorial algorithms.
The cells colored in yellow are improved by this paper.}
\label{table:algos}
\end{table}

\subparagraph*{Previous results on 2-\FRP.} 
Early on, it was conjectured that $2$-\FRP might admit an $\Ohtilde(n^3)$ time algorithm \cite{BG04}, matching the complexity of the single-failure case in directed graphs. The existence of such an algorithm was eventually established nearly twenty years later by Vassilevska Williams, Woldeghebriel, and Xu \cite{VWWX22}. Interestingly, in undirected graphs, $2$-\FRP reduces to \APSP \cite{HLNVW17,CDWX25}. Consequently, the two-failure case in both undirected and directed graphs shares the same cubic-time complexity. This is in stark contrast to the single-failure setting, where the problems exhibit significantly different computational profiles.

Similar to the developments in \SSRP, research has also focused on small-weight and approximate algorithms for $2$-\FRP. In \cite{VWWX22}, an (algebraic) algorithm with a running time of $\Oh(M^{2/3}n^{2.9153})$ was introduced for weights in $\fragment{-M}{M}$. The authors also obtained a conditional lower bound for combinatorial algorithms for unweighted directed graphs under the BMM Hypothesis. However, this lower bound of $n^{8/3-o(1)}$, was subcubic and thus left open the possibility that a subcubic combinatorial algorithm is possible. Subsequently, Chechik and Zhang \cite{CT24} were able to obtain the first truly subcubic combinatorial algorithm for unweighted $2$-\FRP, running in $\Oh(n^{3-1/18})$ time. They also obtained a slightly improved $\Oh(Mn^{2.8716})$ (algebraic) algorithm for weights in $\fragment{-M}{M}$.

Finally, regarding approximation schemes, Chechik and Zhang \cite{CZ24} also demonstrated that a $(1+\varepsilon)$-approximation for $2$-\FRP can be achieved in near-optimal $\Ohtilde_{\varepsilon}(n^2)$ time.

\subparagraph*{\SSRP vs 2-\FRP.} 
The preceding literature suggests that 2-\FRP is inherently more difficult than \SSRP. There are no settings where the former admits a faster algorithm than the latter; even in the undirected case—specifically for small edge weights and combinatorial 
algorithms in unweighted graphs—a clear complexity gap exists. 
In unweighted directed graphs, this gap is even provable for combinatorial algorithms: 
directed \SSRP admits an $\Ohtilde(mn^{1/2}+n^{2})$-time combinatorial algorithm \cite{CM20}, whereas directed 2-\FRP is subject to a conditional combinatorial lower bound of $n^{8/3-o(1)}$ \cite{VWWX22}.

We consider the following tantalizing question:
\begin{center}
{\em Is 2-\FRP computationally harder than \SSRP in undirected graphs? \\Is it harder than \APSP in arbitrary integer weighted graphs?}
\end{center}

\subsection*{Our Results}

In this paper, we demonstrate that for undirected graphs the opposite of our intuition holds: we show that the seemingly more difficult $2$-\FRP problem can, in fact, be reduced to the seemingly easier \SSRP problem. 

Our reduction considers two cases: whether both failed edges $e, e'$ lie on the $s$-$t$ shortest path or if only one does (note that at least one must always lie on this path). For the former case, we show that it can be handled in the same time as \SSRP on graphs with the same edge weight set.

\begin{restatable*}{mtheorem}{both} \label{thm:both}
   Let $\sT_{\SSRP}(n,m, W)$ be the time needed to compute \SSRP in an undirected graph with $n$ nodes, $m$ edges, and positive weight set $W$. If there is a constant $0 < \rho \leq 1$ such that
   \begin{center}
       $\sT_{\SSRP}(n_1,m_1,W) + \sT_{\SSRP}(n_2,m_2,W) \leq \rho \cdot \sT_{\SSRP}(n,m,W)$
   \end{center}
   for all $n_1+n_2\leq n+1$, $m_1+m_2 \leq m$ and $n_1,n_2 \leq 2n/3$,
   then we can compute all values $d_{\bG \setminus e,e'}(s,t)$ such that $e,e' \in \pi_{\bG}(s,t)$ in time $\Oh(\sT_{\SSRP}(n,m, W) + n^2)$ if $\rho < 1$, and $\Oh(\sT_{\SSRP}(n,m, W) \log^2 n + n^2)$ otherwise.
\end{restatable*}

Note that in \cref{thm:both}, the assumptions on $\sT_{\SSRP}$ are more than reasonable and align with standard algorithmic running times. For instance, any function of the form $n^{\alpha}m$ for $\alpha > 0$, or $n^{\alpha}m^{\beta}$ for $\alpha \geq 1$ and $\beta > 0$, satisfies these conditions with $\rho < 1$; similarly, any function of the form $m^{\alpha}$ for $\alpha \geq 1$ satisfies it with $\rho = 1$.

For the latter case when $e' \notin \pi_{\bG}(s,t)$, we obtain a reduction in a stronger sense. Specifically, we show that it suffices to compute \SSRP on the \emph{same graph} from both $s$ and $t$, followed by $\Ohtilde(n^2)$ additional work.

\begin{restatable*}{mtheorem}{single} \label{thm:single}
   Let $\bG$ be an undirected graph with $n$ nodes and positive weight set, and let $s,t \in \bG$ be two nodes. Further, let $\sT$ be the time needed to solve \SSRP from $s$ and $t$. Then, we can compute all values $d_{\bG \setminus e,e'}(s,t)$ such that $e \in \pi_{\bG}(s,t)$ and $e' \notin \pi_{\bG}(s,t)$ in time $\sT + \Oh(n^2 \log^2 n)$.
\end{restatable*}

Our reduction also yields faster algorithms; specifically, it allows us to take the upper bounds for \SSRP from each row for undirected graphs in \cref{table:algos} and apply them directly to 2-\FRP.
More specifically, combining \cref{thm:single}, \cref{thm:both} with \cite{GVW12, GV19} and \cite{CC19,CM20}\footnote{Works \cite{GVW12, GV19,CM20} are formulated on directed graphs, but it is possible to verify that the algorithms work on undirected graphs as well.} (and the fastest \APSP algorithm to date \cite{W18}), we obtain the following faster algorithms for 2-\FRP in undirected graphs.

\begin{corollary}\label{cor:algos}
    In undirected graphs there are the following algorithms for 2-\FRP:
    \begin{enumerate}[(a)]
        \item An $\Ohtilde(M n^{\omega})$-time algorithm for weights in $\fragment{1}{M}$;
        \label{it:algos:a}
        \item An $n^3/2^{\Omega(\sqrt{\log n})}$-time algorithm for weights in $\fragment{1}{\poly(n)}$; and
        \label{it:algos:b}
        \item An $\Ohtilde(mn^{1/2}+n^{2})$-time (combinatorial) algorithm for rational weights in $[1,2]$;  
        \label{it:algos:c}\lipicsEnd
    \end{enumerate}
\end{corollary}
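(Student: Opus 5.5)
The corollary follows by combining \cref{thm:both} and \cref{thm:single} with the known \SSRP algorithms. The plan is to run the two reductions, which together cover every relevant pair of failed edges, and then to check that each \SSRP running time meets the hypothesis of \cref{thm:both}.

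We first reduce to the case where $e\in\pi_{\bG}(s,t)$. If neither edge lies on $\pi_{\bG}(s,t)$, then $d_{\bG\setminus e,e'}(s,t)=d_{\bG}(s,t)$. So every pair either has both edges on $\pi_{\bG}(s,t)$, handled by \cref{thm:both}, or exactly one, handled by \cref{thm:single} up to swapping the order of the pair. The total cost is $\Oh(\sT_{\SSRP}(n,m,W)\log^2 n+n^2\log^2 n)$ plus two \SSRP calls on $\bG$, one from $s$ and one from $t$. Writing the output needs $\Oh(n^2)$ entries, because only pairs on the relevant shortest paths matter.

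It then remains to plug in each bound for $\sT_{\SSRP}$.

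For (a), we use $\sT_{\SSRP}=\Ohtilde(Mn^{\omega})$ from \cite{GVW12,GV19}; that algorithm runs on undirected graphs by replacing each edge with two opposite arcs. The hypothesis holds with $\rho<1$: since $n_1+n_2\le n+1$ and $n_1,n_2\le 2n/3$, we get $n_1^{\omega}+n_2^{\omega}\le \max(n_1,n_2)^{\omega-1}(n+1)\le (2/3)^{\omega-1}n^{\omega}(1+o(1))$. The polylog factors only add polylogarithmic overhead, which $\Ohtilde$ absorbs.

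For (b), the \SSRP-to-\APSP reduction of \cite{GV19} combined with \cite{W18} gives $\sT_{\SSRP}=n^3/2^{\Omega(\sqrt{\log n})}$. This function is superadditive in the required sense with $\rho<1$, by the same convexity argument. The additive $\log^2 n$ and $n^2\log^2 n$ terms are absorbed because $2^{\Omega(\sqrt{\log n})}$ dominates any polylogarithmic factor.

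For (c), we use the $\Ohtilde(mn^{1/2}+n^2)$ combinatorial algorithms of \cite{CC19,CM20}. The function has the form $n^{\alpha}m+n^2$ with $\alpha=1/2$, so $\rho<1$ as noted after \cref{thm:both}; in any case $\rho=1$ would cost only $\log^2 n$, which $\Ohtilde$ hides. Both reductions preserve combinatoriality, since apart from the \SSRP calls they do only $\Ohtilde(n^2)$ explicit work.

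The step that needs care is checking that the \SSRP bounds genuinely satisfy the recursion hypothesis. The $\Ohtilde$ hides polylog factors, so we apply the hypothesis to the clean main term ($Mn^{\omega}$, $n^3/2^{c\sqrt{\log n}}$, or $mn^{1/2}+n^2$) times a fixed polylog. We then note that even the weaker $\rho=1$ conclusion costs only an extra $\log^2 n$, which every stated bound absorbs. This makes the conditions on $\rho$ essentially cosmetic for the corollary.
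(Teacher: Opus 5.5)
Your proposal is correct and matches the paper's argument, which simply combines \cref{thm:both} and \cref{thm:single} with the \SSRP algorithms of \cite{GVW12,GV19} and \cite{CC19,CM20} and the \APSP algorithm of \cite{W18}; your explicit check of the $\rho$-hypothesis and of the absorbed polylogarithmic overheads fills in details the paper leaves implicit. Your bidirected-arc conversion is a valid black-box substitute for the paper's footnote that the directed algorithms also work on undirected graphs, though you should state why it works: a shortest path avoiding the tree arc $(u,v)$ never uses $(v,u)$, because $u$ is the parent of $v$ in $T_{\bG}(s)$.
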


As a side-result of our reduction, we establish that undirected 2-\FRP is \emph{computationally equivalent} to the following other problems on general weighted graphs: directed \RP, directed or undirected \SSRP, and directed or undirected \APSP. In other words, tolerating a single failure is computationally equivalent to tolerating two, provided we make the underlying network undirected.

\subparagraph*{Lower bounds.}
We complement the algorithmic results presented in \cref{cor:algos} with corresponding lower bounds. The tightness of \cref{cor:algos}\ref{it:algos:b} is immediate; by combining the results of \cite{HLNVW17, CDWX25} with our reduction, we establish the \APSP-equivalence of 2-\FRP for weights in $\fragment{1}{\poly(n)}$.

Regarding \cref{cor:algos}\ref{it:algos:a} and \cref{cor:algos}\ref{it:algos:c}, we provide lower bounds based on the BMM and \APSP hypotheses for the regime $m = \Omega(n^{3/2})$, where the output-size lower bound is non-dominant.

\begin{restatable*}{lemma}{lb} \label{thm:lb}
    Consider the problem of solving 2-\FRP in a graph with $n$ nodes and $m$ edges. 
    If $m = \Omega(n^{3/2})$, then for any $\varepsilon > 0$, there is no
    \begin{enumerate}[(i)]
        \item $\Oh(mn^{1/2-\varepsilon})$ combinatorial algorithm for unweighted graphs, unless the BMM hypothesis is false;
        \label{it:lb:i}
        \item $\Oh(mn^{1/2-\varepsilon})$ algorithm for graphs with rational weights in $[1,2]$, unless the \APSP hypothesis is false; 
        \label{it:lb:ii}\qedhere
    \end{enumerate}
\end{restatable*}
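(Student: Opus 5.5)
Both bounds will come from reducing undirected \SSRP to undirected 2-\FRP while preserving the parameters $n$, $m$ and the weight set. Once that is done we can reuse the known conditional lower bounds for undirected \SSRP. For unweighted graphs, Chechik and Cohen~\cite{CC19} show that a combinatorial $\Oh(mn^{1/2-\varepsilon})$ algorithm would break the BMM hypothesis. For rational weights in $[1,2]$, Chechik and Magen~\cite{CM20} show that an $\Oh(mn^{1/2-\varepsilon})$ algorithm would break the \APSP hypothesis. Both bounds are stated in the regime where $mn^{1/2}$ dominates the $n^2$ output size, namely $m=\Omega(n^{3/2})$, which matches our assumption. Because the parameters and weights are preserved, a fast 2-\FRP algorithm would give a fast \SSRP algorithm in the same class (combinatorial, respectively arbitrary).

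\textbf{The reduction.} Given an \SSRP instance $(\bG,s)$ with $n$ nodes, we build a graph $\bG'$ as follows.
\begin{itemize}
\item Add a new target $t$.
\item For every $v\in V(\bG)$, attach $t$ to $v$ through a private gadget edge $f_v$.
\item Weight the gadgets so that $d_{\bG'\setminus f_u,f_v}$ is not affected by the other gadgets except through $v$. One option is a long tail: $v$ connects to $t$ via a path whose length encodes the index of $v$. Another option is that all attachments have a common large weight $L$.
\end{itemize}
The natural idea is that deleting an edge $e$ of $\bG$ together with a gadget edge makes the $s$-$t$ distance reveal $d_{\bG\setminus e}(s,v)$ plus a known offset.

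\textbf{Isolating a single $v$ (the main difficulty).} Two failures only kill one gadget, so with all $n$ attachments present the $s$-$t$ distance is $\min_v(d_{\bG\setminus e}(s,v)+L)$ over the surviving attachments, not the value for a single $v$. I would instead exploit that 2-\FRP outputs values for \emph{all} pairs $(e,e')$, and use gadgets whose weights make the minimum land on a controlled vertex.

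Concretely, I would use a path $t=p_0,p_1,\dots,p_n$ and connect $p_i$ to $v_i$ with weight $W_i$. Deleting the path edge $(p_{i-1},p_i)$ disconnects $p_i,\dots,p_n$ from $t$, so only the attachments of $v_1,\dots,v_{i-1}$ can be used. The weights $W_i$ should be chosen decreasing and very spread out, so that the minimum is attained at $v_{i-1}$ unless a tie occurs. Then the differences of consecutive answers recover each $d_{\bG\setminus e}(s,v_{i-1})$.

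This needs weights of order $n\cdot$diameter, which breaks the $[1,2]$ and unweighted settings. So the weight-preservation constraint is the real obstacle. To handle it with only weights in $[1,2]$ (or unit weights), I would replace the heavy weights by subdivided paths. This costs $\Oh(n^2)$ extra nodes, which is too many.

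Instead, I would use the structure of the known hard instances. The \cite{CC19,CM20} hard instances are layered: the relevant targets lie in a set of $\Oh(\sqrt n)$ special vertices, with $m=\Theta(n^{3/2})$ and $\Oh(n^{1/2})$ relevant failures. For such instances, paths of length $\Oh(n)$ attached to $\Oh(\sqrt n)$ targets add only $\Oh(n^{3/2})$ nodes, or $\Oh(n)$ nodes if the paths are shared in a tree-like fashion. This keeps $n'=\Oh(n)$ and $m'=\Oh(m)$. So I would directly adapt the hard instance construction of \cite{CC19} (for \ref{it:lb:i}) and \cite{CM20} (for \ref{it:lb:ii}), replacing its many targets by a single $t$ reached through these distance-encoding paths. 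I expect the careful verification of this adapted construction to be the main work.
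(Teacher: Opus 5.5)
\textbf{There is a genuine gap: your proposal never actually produces a reduction.} The generic \SSRP-to-2-\FRP reduction fails for the reason you give yourself. The fallback, adapting the hard instances of \cite{CC19,CM20}, is where the whole proof lives, and you leave it as verification to be done later. The parts you do state are also not right.

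\begin{itemize}
\item \emph{The structural premise is unchecked.} You assume those instances have only $\Oh(\sqrt n)$ relevant targets, but you do not verify this. If, as in a Boolean-product encoding, the targets are the $\Theta(n)$ vertices of the last layer, your single-path cut selector must separate $n$ targets. Its detour lengths must then grow by a constant per index, which means $\Omega(n^2)$ subdivision vertices, and you are back at your original obstacle.
\item \emph{The size accounting is wrong even in the favourable case.} $\Oh(\sqrt n)$ paths of length $\Oh(n)$ add $\Theta(n^{3/2})$ vertices. Then $n'=\Theta(n^{3/2})$, and an $\Oh(m'n'^{1/2-\varepsilon})$ algorithm yields nothing subcubic. ``Shared in a tree-like fashion'' is not a construction.
\item \emph{The cited bounds cannot be used as black boxes.} Your reduction would only handle special instances. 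So ``a fast 2-\FRP algorithm gives a fast \SSRP algorithm'' does not let you invoke \cite{CC19,CM20} directly; you would have to redo their amplification argument.
\end{itemize}

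\textbf{The missing idea is to skip \SSRP and encode sparse $(\sqrt n,n,n)$-triangle detection directly, letting \emph{both} failures act as selectors.} This is what the paper does.

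\begin{itemize}
\item \emph{Construction.} Take a path $s=x_1,\dots,x_{\ell+1}$, with a detour from $x_i$ to $a_i\in A$ whose length decreases steeply in $i$. Mirror it on the $t$ side: a path $t=y_1,\dots,y_{\ell+1}$ with detours to a copy $a_i'$. Redirect every edge $\{c,a_i\}$ to $\{c,a_i'\}$. Deleting $e_i=\{x_i,x_{i+1}\}$ and $e_i'=\{y_i,y_{i+1}\}$ makes the best route go through $a_i$, some $b$, some $c$, then $a_i'$. Hence $d_{\bG\setminus e_i,e_i'}(s,t)$ hits a fixed threshold iff $a_i$ lies in a triangle. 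Your $p$-path gadget is exactly the $t$-side half of this.
\item \emph{Size.} Only $\ell=\Oh(\sqrt n)$ indices need separating on each side, and the middle part varies within an $\Oh(1)$ window. So detours of length $\Oh(\ell)$ suffice, adding only $\Oh(\ell^2)=\Oh(n)$ vertices. This is also what the $\Oh(n)$ vertex count in \cref{lem:lb_emb} requires; lengths $4(n-i+1)$ as written there would give $\Theta(n^{3/2})$.
\item \emph{Amplification.} Split $A$ into $\Theta(\sqrt n)$ blocks and the $B$--$C$ edges into $\Oh(n^2/m)$ groups of $\Oh(m)$ edges. Use $m=\Omega(n^{3/2})$ to keep each block's $A$--$B$ and $A$--$C$ edges at $\Oh(m)$. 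The total time is $\Oh(\sqrt n\cdot\frac{n^2}{m}\cdot mn^{1/2-\varepsilon})=\Oh(n^{3-\varepsilon})$.
\item \emph{Part (ii).} Do the same with the triangle edges reweighted to $1$ plus a small fraction of the original weight, reducing from negative or minimum-weight triangle.
\end{itemize}
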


Interestingly, our reduction proceeds via a version of triangle detection in sparse graphs with unbalanced partition sizes; for certain values of $m$, this could imply a non-combinatorial lower bound even for the unweighted case. (See \cref{sec:lb} for a more detailed discussion.)

\subparagraph*{A note on the structural connection between \SSRP and 2-\FRP.}
From a \emph{structural} perspective, existing results already demonstrate a relationship between dual-failure fixed-endpoints and single-source single-failure instances in undirected graphs. To see this, we recall the Restoration Lemma, a classic result by Afek et al. \cite{BBAKCM01} (recently improved by \cite{BW25}).

\begin{theorem}[Restoration Lemma \cite{BBAKCM01}]
    Given an undirected graph $\bG$, any shortest path $\pi_{\bG \setminus F}(s,t)$ where $F \subseteq E(\bG)$ and $|F|=f$ can be decomposed into $f+1$ shortest paths in $\bG$ and $f$ edges in-between. \lipicsEnd
\end{theorem}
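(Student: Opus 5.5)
The statement is the Restoration Lemma of Afek et al., and I would prove it by induction on $f=|F|$. The base case $f=0$ is trivial, since the path is itself a shortest path in $\bG$. The key tool is a single-failure statement.

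\emph{Claim.} For any graph $\bH$, edge $e$ and vertices $s,t$, some shortest $s$-$t$ path in $\bH\setminus e$ is the concatenation of a shortest path $\pi_{\bH}(s,x)$, an edge $(x,y)$, and a shortest path $\pi_{\bH}(y,t)$.

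To prove the claim, take a shortest path $P$ in $\bH\setminus e$ with $e=(u,v)$. Let $x$ be the last vertex of $P$ such that the prefix $P[s,x]$ is a shortest path in $\bH$, and let $y$ be the successor of $x$ on $P$.

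\begin{itemize}
\item Since $P[s,y]$ is not shortest in $\bH$, every shortest $s$-$y$ path in $\bH$ must use $e$. Otherwise we could splice in a shortest $s$-$y$ path avoiding $e$ and obtain a valid replacement path whose prefix up to $y$ is shortest.
\item Undirectedness gives that $e$ is traversed in a consistent orientation on these paths, say $u$ before $v$, so $d(s,v)=d(s,u)+w(e)$.
\item Symmetrically, every shortest $y$-$t$ path cannot use $e$ in the orientation that would make it also shortcut the $s$-side.
\end{itemize}

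The standard argument compares $d(s,y)+d(y,t)$ with the length of $P$, using the triangle inequality together with the fact that both $s \to y$ and $y\to t$ shortest paths would need $e$ in opposite directions. That would give $d(s,t)$ via $y$ shorter than allowed, or a path avoiding $e$. Hence $P[y,t]$ can be replaced by a shortest $\bH$-path avoiding $e$ without increasing length, and then $P[y,t]$ itself is shortest in $\bH$. Equivalently, I would choose $P$ lexicographically to maximize the shortest prefix and then the shortest suffix, and argue that the middle portion is a single edge.

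For the inductive step, write $F=F'\cup\{e\}$ and set $\bH=\bG\setminus F'$. Apply the claim in $\bH$ to get a replacement path of the form $\pi_{\bH}(s,x)\circ(x,y)\circ\pi_{\bH}(y,t)$. I would then decompose each subpath using induction, which is where the count becomes delicate. A naive application to both halves with $f-1$ failures each gives $2f$ pieces. To obtain $f+1$ pieces, the failures of $F'$ must be split between the halves: a failure whose removal does not affect $\pi_{\bH}(s,x)$ is irrelevant to that half. The better route is to induct on a stronger statement about suffixes.

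\emph{Stronger statement.} Take $P$ to be a shortest path in $\bG\setminus F$ with a longest prefix that is shortest in $\bG$. Then its remainder after one edge is a shortest path in $\bG\setminus F''$ for some $F''\subseteq F$ with $|F''|\le f-1$. The reason is that at least one failure of $F$ must lie on every $\bG$-shortest $s$-$y$ path, and that failure is "consumed" by the prefix.

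The main obstacle is justifying this consumption step: showing that the failing edge blocking the prefix cannot also be relevant to the suffix. This again relies on undirectedness. If the suffix needed to avoid that same edge, one could reroute through it to contradict the choice of $x$ as the last vertex with a shortest prefix. I would first try to make this exchange argument precise for $f=1$, then lift it to general $f$ by the suffix induction.
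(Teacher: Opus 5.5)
The paper gives no proof of this statement; it cites it from \cite{BBAKCM01}. Judged on its own, your proposal has a genuine gap exactly at the step you flag, and that step is not merely unproven but false. Your ``stronger statement'' says: after the longest $\bG$-shortest prefix $P[s,x]$ and the edge $\{x,y\}$, the suffix $P[y,t]$ is shortest in $\bG\setminus F''$ for some $F''\subseteq F$ with $|F''|\le f-1$. This already fails for $f=2$.

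Take vertices $s,b_1,a_2,b_2,y,t$ with the following edges:
\begin{itemize}
\item unit-weight edges $e_1=\{s,b_1\}$, $\{b_1,a_2\}$, $e_2=\{a_2,b_2\}$, $\{b_2,y\}$, $\{b_1,t\}$, $\{a_2,t\}$;
\item $\{s,y\}$ of weight $5$ and $\{y,t\}$ of weight $8$.
\end{itemize}
Let $F=\{e_1,e_2\}$. In $\bG\setminus F$, the vertex $s$ is a leaf attached to $y$, and $\{y,t\}$ is the only edge between $\{s,y,b_2\}$ and $\{b_1,a_2,t\}$. Hence $P=(s,y,t)$ is the \emph{unique} shortest path, so choosing $P$ lexicographically cannot help.

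Since $d_{\bG}(s,y)=4$ via $s,b_1,a_2,b_2,y$, the longest $\bG$-shortest prefix is $[s]$, and the remainder is the single edge $\{y,t\}$. However, $d_{\bG\setminus e_1}(y,t)=3$ via $y,b_2,a_2,t$, and $d_{\bG\setminus e_2}(y,t)=7$ via $y,s,b_1,t$. So $\{y,t\}$ is a shortest path only in $\bG\setminus F$. The unique $\bG$-shortest $s$--$y$ path carries both faults, and the suffix still needs each of them: $e_1$ is reused in the same direction, $e_2$ in the opposite direction. Nothing is ``consumed'' by the prefix. Your exchange idea works when the blocking fault is the only fault on a $\bG$-shortest $s$--$y$ path, which is the $f=1$ situation. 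It breaks once that path carries several faults.

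The lemma itself does hold for this $P$, as $[s]\circ\{s,y\}\circ[y]\circ\{y,t\}\circ[t]$. But an induction can only reach it by cutting at $\{y,t\}$: the prefix $P[s,y]$ is shortest in $\bG\setminus e_1$, where $s$ has only the neighbour $y$, and the suffix $[t]$ is trivial. Running the same example from $t$ to $s$ shows that always peeling off a $\bG$-shortest suffix fails as well.

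So a cut-based induction must choose the edge $\{x,y\}$ adaptively and distribute the remaining faults between $P[s,x]$ and $P[y,t]$ so that the two counts sum to at most $f-1$. That is the fault-splitting idea you mention and then drop. Alternatively, you could replace induction by a global charging of the breaks of the greedy decomposition to distinct faults. Showing that such a cut or charging always exists is the substantive content of the lemma, and the proposal does not supply it.

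Your $f=1$ claim is correct, but it is not closed by comparing $d(s,y)+d(y,t)$ with $|P|$. Write $e=\{u,v\}$ with $u$ before $v$ on the $\bG$-shortest $s$--$y$ paths. A $\bG$-shortest $y$--$t$ path through $e$ in direction $u\to v$ contradicts $d_{\bG}(s,y)=d_{\bG}(s,u)+w(e)+d_{\bG}(v,y)$. One in direction $v\to u$ contradicts $d_{\bG}(u,t)=d_{\bG}(v,t)+w(e)$. The latter identity holds because every $\bG$-shortest $s$--$t$ path uses $e$ from $u$ to $v$.
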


This lemma implies that any shortest path $\pi_{\bG \setminus e,e'}(s,t)$ can be decomposed into $\pi_{\bG \setminus e}(s,u) \circ \{u,v\} \circ \pi_{\bG \setminus e}(v,t)$ for some $u,v \in V(\bG)$. This establishes a clear structural link between dual-failure scenarios with source/target $s,t$ and single-failure instances with source $s$ and $t$ and arbitrary target.  However, it remains highly unclear how to use this lemma \emph{algorithmically}; the search for the intermediate vertices $u,v$ ranges over $n$ possible vertices each, and both must ensure that the subpaths $\pi_{\bG \setminus e}(s,u)$ and $\pi_{\bG \setminus e}(v,t)$ do not contain the second failed edge $e'$. In fact, our reduction does not rely on this lemma; instead, for most cases, we manage to use a simpler decomposition consisting of a shortest path in $\bG$, an individual edge, and a shortest path in $\bG \setminus e$.

\subparagraph*{Open questions.}

We pose the following two questions connected to this work:
\begin{itemize}
    \item Is a reduction from $2$-\FRP to \SSRP possible for directed graphs, at least in a weaker form?
    \item In the directed case, if no reduction from $2$-\FRP to \SSRP can be found, can we at least demonstrate that 2-\FRP reduces to \APSP, thereby establishing that one and two failures are computationally equivalent in general graphs?
\end{itemize}

\section{Overview}
\label{sec:to}

In this section, we give a brief overview of the reduction.
The overview is composed of two parts: one for the case where both failed edges $e$ and $e'$ are on the shortest path $\pi_{\bG}(s,t)$  (i.e., \cref{thm:both}) and one for the case when only $e$ is on  $\pi_{\bG}(s,t)$ (i.e., \cref{thm:single}). The notation should be clear from the context; if more clarity is needed, the reader can skip ahead to \cref{sec:prelims}.

Throughout this section, for simplicity, assume $\bG$ has unique shortest paths (with the essential property that subpaths of shortest paths are shortest paths themselves). Furthermore, assume that we are given an algorithm for computing \SSRP on any graph utilizing the same weight set as $\bG$.

\subsection*{Case $e',e \in \pi_{\bG}(s,t)$}

To compute $d_{\bG}(s,t)$ for all $e, e' \in \pi_{\bG}(s,t)$, say we adopt a divide-et-impera approach. Then, the most natural thing is to attempt the following strategy: first select a midpoint $c \in \pi_{\bG}(s,t)$ (see \cref{fig:both:a}) and compute $d_{\bG\setminus e,e'}(s,t)$ for all $e \in \pi_{\bG}(s,c)$ and $e' \in \pi_{\bG}(c,t)$ (here our \SSRP subroutine should prove helpful). We then construct two auxiliary graphs, $\bA$ and $\bB$ (sharing the same weight set as $\bG$), and recurse to compute  $d_{\bG \setminus e,e'}(s,t)$ for all $e,e' \in \pi_{\bG}(s,c)$ and $e,e' \in \pi_{\bG}(c,t)$, recursively.

\begin{figure*}[!t]
         \centering
         \begin{subfigure}[b]{0.30\textwidth}
             \centering
             \scalebox{0.8}{\begin{tikzpicture}[scale=0.5, dot/.style={
        draw, 
        circle, 
        fill, 
        minimum size=1mm, 
        inner sep=0pt
    }]

    \node[] at (1,2.5) {$\bG$};
    
    \node[dot, label={above:$s$}] (n1) at (1,0) {};
    \node[dot] (n2) at (2,0) {};
    \node[dot] (n3) at (3,0) {};
    \node[dot] (n4) at (4,0) {};
    \node[dot, label={above:$c$}] (n5) at (5,0) {};
    \node at (5,-0.7) {$\pi_{\bG}(s,t)$};
    \node[dot] (n6) at (6,0) {};
    \node[dot] (n7) at (7,0) {};
    \node[dot] (n8) at (8,0) {};
    \node[dot, label={above:$t$}] (n9) at (9,0) {};
    
    \foreach \i [evaluate=\i as \nexti using int(\i+1)] in {1,...,8} {
        \draw (n\i) -- (n\nexti);
    }

    \draw[dotted] (5,0) ellipse [x radius=5, y radius=2];
\end{tikzpicture}}
             \caption{}
              \label{fig:both:a}
         \end{subfigure}
         \hfill
         \begin{subfigure}[b]{0.30\textwidth}
             \centering
             \scalebox{0.8}{\begin{tikzpicture}[scale=0.5, dot/.style={
        draw, 
        circle, 
        fill, 
        minimum size=1mm, 
        inner sep=0pt
    }]

    \node[] at (1,2.5) {$\bL$};
    
    \node[dot, label={above:$s$}] (n1) at (1,0) {};
    \node[dot] (n2) at (2,0) {};
    \node[dot] (n3) at (3,0) {};
    \node[dot] (n4) at (4,0) {};
    \node[dot, label={above:$c$}] (n5) at (5,0) {};
    \node[dot] (n6) at (6,0) {};
    \node[dot] (n7) at (7,0) {};
    \node[dot] (n8) at (8,0) {};
    \node[dot, label={above:$t$}] (n9) at (9,0) {};
    
    \foreach \i [evaluate=\i as \nexti using int(\i+1)] in {1,...,4} {
        \draw (n\i) -- (n\nexti);
    }

    \draw[dotted] (5,0) ellipse [x radius=5, y radius=2];
\end{tikzpicture}}
             \caption{}
             \label{fig:both:b}
         \end{subfigure}
         \hfill
         \begin{subfigure}[b]{0.30\textwidth}
             \centering
             \scalebox{0.8}{\begin{tikzpicture}[scale=0.5, dot/.style={
        draw, 
        circle, 
        fill, 
        minimum size=1mm, 
        inner sep=0pt
    }]

    \node at (1,2.5) {$\bR$};
    
    \node[dot, label={above:$s$}] (n1) at (1,0) {};
    \node[dot] (n2) at (2,0) {};
    \node[dot] (n3) at (3,0) {};
    \node[dot] (n4) at (4,0) {};
    \node[dot, label={above:$c$}] (n5) at (5,0) {};
    \node[dot] (n6) at (6,0) {};
    \node[dot] (n7) at (7,0) {};
    \node[dot] (n8) at (8,0) {};
    \node[dot, label={above:$t$}] (n9) at (9,0) {};
    
    \foreach \i [evaluate=\i as \nexti using int(\i+1)] in {5,...,8} {
        \draw (n\i) -- (n\nexti);
    }

    \draw[dotted] (5,0) ellipse [x radius=5, y radius=2];
\end{tikzpicture}}
             \caption{}
             \label{fig:both:c}
         \end{subfigure}
         \\[15pt]
         \begin{subfigure}[b]{0.30\textwidth}
             \centering
             \scalebox{0.8}{\begin{tikzpicture}[scale=0.5, dot/.style={
        draw, 
        circle, 
        fill, 
        minimum size=1mm, 
        inner sep=0pt
    }]

    \node[] at (1,2.5) {$\bG$};
    
    \node[dot, label={above:$s$}] (n1) at (1,0) {};
    \node[dot, label={above:$z$}] (n2) at (2,0) {};
    \node[dot, label={below:$u$}] (n3) at (3,0) {};
    \node[dot, label={below:$v$}] (n4) at (4,0) {};
    \node[dot, label={above:$c$}] (n5) at (5,0) {};
    \node[dot] (n6) at (6,0) {};
    \node[dot, label={below:$u'$}] (n7) at (7,0) {};
    \node[dot, label={below:$v'$}] (n8) at (8,0) {};
    \node[dot, label={above:$t$}] (n9) at (9,0) {};
    
    \foreach \i [evaluate=\i as \nexti using int(\i+1)] in {1,...,8} {
        \draw (n\i) -- (n\nexti);
    }

    \draw[red, thick] (n3) --node[above] {$e$} (n4);
    \draw[red, thick] (n7) --node[above] {$e'$} (n8);
    \draw[green, opacity=0.5, line width=3pt] (n1) -- (n2);
    \draw[green, opacity=0.5, line width=3pt] (n2) to[out=45, in=180] (4,1.2);
    \draw[green, opacity=0.5, line width=3pt, dashed] (4,1.2) -- (5.5,1.2);

    \draw[dotted] (5,0) ellipse [x radius=5, y radius=2];
\end{tikzpicture}}
             \caption{}
             \label{fig:both:d}
         \end{subfigure}
         \hfill
         \begin{subfigure}[b]{0.30\textwidth}
             \centering
             \scalebox{0.8}{\begin{tikzpicture}[scale=0.5, dot/.style={
        draw, 
        circle, 
        fill, 
        minimum size=1mm, 
        inner sep=0pt
    }]

    \node[] at (1,2.5) {$\bG$};
    
    \node[dot, label={above:$s$}] (n1) at (1,0) {};
    \node[dot, label] (n2) at (2,0) {};
    \node[dot, label={below:$u$}] (n3) at (3,0) {};
    \node[dot, label={below:$v$}] (n4) at (4,0) {};
    \node[dot, label={above:$c$}] (n5) at (5,0) {};
    \node[dot] (n6) at (6,0) {};
    \node[dot, label={below:$u'$}] (n7) at (7,0) {};
    \node[dot, label={below:$v'$}] (n8) at (8,0) {};
    \node[dot, label={above:$t$}] (n9) at (9,0) {};
    
    \foreach \i [evaluate=\i as \nexti using int(\i+1)] in {1,...,8} {
        \draw (n\i) -- (n\nexti);
    }

    \draw[red, thick] (n3) --node[above] {$e$} (n4);
    \draw[red, thick] (n7) --node[above] {$e'$} (n8);
    \draw[green, opacity=0.5, line width=3pt] (n8) -- (n9);
    \draw[green, opacity=0.5, line width=3pt] (n8) to[out=45, in=0] (7,1) to[out=180, in=135] (n6);
    \draw[green, opacity=0.5, line width=3pt] (n6) -- (n4);
    \draw[green, opacity=0.5, line width=3pt] (n4) to[out=45, in=0] (3,1) to[out=180, in=135] (n2);
    \draw[green, opacity=0.5, line width=3pt] (n2) -- (n1);

    \draw[dotted] (5,0) ellipse [x radius=5, y radius=2];
\end{tikzpicture}}
             \caption{}
             \label{fig:both:e}
         \end{subfigure}
         \hfill
         \begin{subfigure}[b]{0.30\textwidth}
             \centering
             \scalebox{0.8}{\begin{tikzpicture}[scale=0.5, dot/.style={
        draw, 
        circle, 
        fill, 
        minimum size=1mm, 
        inner sep=0pt
    }]

    \node[] at (1,2.5) {$\bG$};
    
    \node[dot, label={above:$s$}] (n1) at (1,0) {};
    \node[dot, label] (n2) at (2,0) {};
    \node[dot, label={below:$u$}] (n3) at (3,0) {};
    \node[dot, label={below:$v$}] (n4) at (4,0) {};
    \node[dot, label={above:$c$}] (n5) at (5,0) {};
    \node[dot] (n6) at (6,0) {};
    \node[dot, label={below:$u'$}] (n7) at (7,0) {};
    \node[dot, label={below:$v'$}] (n8) at (8,0) {};
    \node[dot, label={above:$t$}] (n9) at (9,0) {};
    
    \foreach \i [evaluate=\i as \nexti using int(\i+1)] in {1,...,8} {
        \draw (n\i) -- (n\nexti);
    }

    \draw[red, thick] (n3) --node[above] {$e$} (n4);
    \draw[red, thick] (n7) --node[above] {$e'$} (n8);
    \draw[green, opacity=0.5, line width=3pt] (n8) -- (n9);
    \draw[green, opacity=0.5, line width=3pt] (n8) to[out=45, in=0] (6,1) to[out=180, in=135] (n4);
    \draw[green, opacity=0.5, line width=3pt] (n6) -- (n4);
    \draw[green, opacity=0.5, line width=3pt] (n6) to[out=45, in=0] (4,1) to[out=180, in=135] (n2);
    \draw[green, opacity=0.5, line width=3pt] (n2) -- (n1);

    \draw[dotted] (5,0) ellipse [x radius=5, y radius=2];
\end{tikzpicture}}
             \caption{}
             \label{fig:both:f}
         \end{subfigure}

        \caption{
           Some visualizations of the introduced notation when computing $d_{\bG \setminus e,e'}(s,t)$. 
         }
         \label{fig:both}
\end{figure*}

\subparagraph*{Computing $d_{\bG \setminus e,e'}(s,t)$ for all $e \in \pi_{\bG}(s,c)$ and $e' \in \pi_{\bG}(c,t)$.}
This turns out not to be the hard part.
Let $\bL$ and $\bR$ be $\bG$ where we remove all edges of $\pi_{\bG}(c,t)$ and $\pi_{\bG}(s,c)$, respectively (see \cref{fig:both:b} and \cref{fig:both:c}).
Suppose we compute \SSRP in $\bL$ and $\bR$ from $s$ and $t$, respectively.
Order the nodes on $\pi_{\bG}(s,t)$ from the closest to $s$ to the farthest, letting $\preceq$ be the preceding relation.
For edges $e=\{u,v\} \in \pi_{\bG}(s,c)$ and $e'=\{u',v'\} \in \pi_{\bG}(c,t)$ such that $u$ and $u'$ are closer to $s$ than $v$ and $v'$, respectively, we claim that the desired value $d_{\bG \setminus e,e'}(s,t)$ equals:
\begin{align}
    \min \Big\{ 
        \min_{v' \preceq z \preceq t} \Big\{ d_{\bL \setminus e}(s,z) + d_{\bG}(z,t) \Big\} \ , \
        \min_{s \preceq z \preceq u} \Big\{ d_{\bG}(s,z) + d_{\bR \setminus e'}(z,t) \Big\} \ , \
        d_{\bG \setminus e}(s,c) + d_{\bG \setminus e'}(c,t)
    \Big\}. \label{eq:to:1}
\end{align}

It turns out that by using some simple tricks, \cref{eq:to:1} can be computed in $\Oh(n^2)$ overall time for each $e \in \pi_{\bG}(s,c)$ and $e' \in \pi_{\bG}(c,t)$. We defer the discussion to the full proof.

To provide intuition for the correctness, we explain why \cref{eq:to:1} is upper bounded by $d_{\bG \setminus e,e'}(s,t)$. To see this, we perform a case distinction on $\pi_{\bG \setminus e,e'}(s,t)$:
\begin{itemize}
    \item $\pi_{\bG \setminus e,e'}(s,t)$ does not use any vertex $x \in \pi_{\bG}(s,t)$ such that $v \preceq x \preceq c$.
    Then, the last node $z \in \pi_{\bG}(s,t)$ visited by $\pi_{\bG \setminus e,e'}(s,t)$ must satisfy $z \preceq u$ (see \cref{fig:both:d}). In this case, $\pi_{\bG \setminus e,e'}(s,t)$ between $s$ and $z$ equals $\pi_{\bG}(s,z)$, and between $z$ and $t$, where no other edge of $\pi_{\bG}(s,c)$ is visited, it must coincide with $\pi_{\bR \setminus e'}(z,t)$. 
    We conclude that $\pi_{\bG \setminus e,e'}(s,t)$ is captured by the second term of \cref{eq:to:1}.
    
    \item $\pi_{\bG \setminus e,e'}(s,t)$ does not use any vertex $y \in \pi_{\bG}(s,t)$ such that $c \preceq y \preceq u'$.
    Using a symmetric argument to the one above, this case is captured by the first term of \cref{eq:to:1}.
    
    \item $\pi_{\bG \setminus e,e'}(s,t)$ uses a vertex $x \in \pi_{\bG}(s,t)$ such that $v \preceq x \preceq c$ and a vertex $y \in \pi_{\bG}(s,t)$ such that $c \preceq y \preceq u'$ (refer to \cref{fig:both:e} and \cref{fig:both:f} to see the two possible ways $\pi_{\bG \setminus e,e'}(s,t)$ might look in this case). 
    Since no edge on $\pi_{\bG}(s,t)$ between $x$ and $y$ equals $e$ or $e'$, the path $\pi_{\bG \setminus e,e'}(s,t)$ between $x$ and $y$ equals $\pi_{\bG}(x,y)$ and, in particular, passes through $c$. 
    Since $d_{\bG \setminus e}(s,c)$ and $d_{\bG \setminus e'}(c,t)$ are always less than or equal to $d_{\bG \setminus e,e'}(s,c)$ and $d_{\bG \setminus e,e'}(c,t)$, the third term of \cref{eq:to:1} is upper bounded by $d_{\bG \setminus e,e'}(s,t)$.
\end{itemize}
It remains to show why \cref{eq:to:1} is lower bounded by $d_{\bG \setminus e,e'}(s,t)$. It is not difficult to see that the first and second terms are lower bounded by $d_{\bG \setminus e,e'}(s,t)$ because the described paths never use $e$ or $e'$. For the third term, the complete argument for which we omit, we essentially need to argue that if $\pi_{\bG \setminus e}(s,c)$ or $\pi_{\bG \setminus e'}(c,t)$ uses $e'$ or $e$, then there exists a shorter option for $\pi_{\bG \setminus e,e'}(s,t)$ different from the path described by those terms and captured by the other first two terms.

\subparagraph*{Constructing $\bA$ and $\bB$.}

Ideally, we would like to pick $c$ and construct $\bA$ and $\bB$ such that each contains roughly half of the nodes of $\bG$. 
If we pick $c$ to be roughly in the middle of $\pi_{\bG}(s,t)$ (the exact definition of which we leave unspecified for now), a natural candidate for $\bA$ would be the subtree rooted at $c$ within the shortest path tree rooted at $t$. By picking $\bA$ in this manner, we obtain the useful property that for any $e, e' \in \pi_{\bG}(s,c)$, as soon as $\pi_{\bG \setminus e,e'}(s,t)$ visits the first vertex $z \notin \bA$, the path $\pi_{\bG \setminus e,e'}(s,t)$ from $z$ to $t$ coincides with $\pi_{\bG}(z,t)$, as this path contains neither $e$ nor $e'$. By further adding to $\bA$ an edge from $c$ to every node $x \in \bA$ with weight $\min_{z \notin \bA} \{w(x,z) + d_{\bG}(z,t)\} - d_{\bG}(c,t)$, we have $d_{\bA \setminus e,e'}(s,c) + d_{\bG}(c,t) = d_{\bG \setminus e,e'}(s,t)$ because such new edge captures the segment $\pi_{\bG}(z,t)$ and the edge $\{x,z\}$ before $z$ on $\pi_{\bG \setminus e,e'}(s,t)$.

This last construction is nice because it lets us condense $\bG$ into $\bA$ that acts well for the purpose of computing $d_{\bG \setminus e,e'}(s,t)$ for $e, e' \in \pi_{\bG}(s,c)$. However, the construction works against the objective of creating a universal reduction that preserves edge weights: The newly added edge might be of increased edge weight and $\bA$ would not have the same edge set as $\bG$! 
Generally, when condensing $\bG$ into $\bA$, we should always expect to encounter this behavior, regardless of how we choose the smaller subgraph $\bA$.

\subparagraph*{A technical contribution to \SSRP as workaround.}

As a workaround, we establish that the added weights do not, in fact, impact our capability to compute \SSRP on $\bA$. Formally, let $\bH$ be a graph with a positive weight set $W$ and let $s$ denote a vertex. We define \emph{short-cut weights $w_s$} as weights that satisfy $d_{\bH}(s,x) \leq w_s(x)$ for every $x$, and the \emph{augmented graph} as the result of adding edges $\{s,x\}$ with weights $w_s(x)$ to $\bH$. (Clearly, $\bA$ is captured by this construction because $\min_{z \notin \bA} \{w_{\bG}(x,z) + d_{\bG}(z,t)\} \geq d_{\bG}(x,c) + d_{\bG}(c,t)$ for each $x \in \bA$.) We show that \SSRP can be solved in the same time complexity even with such weights.

\begin{restatable*}{theorem}{wssrp} \label{thm:wssrp}
    Let $\bG$ be an undirected (resp. directed) $n$-node graph with a set of positive weights $W$. Let $s \in \bG$ be a node, and let $w_s$ denote shortcut weights w.r.t. $s$ (where $w_s(x)$ is not necessarily in $W$). Furthermore, let $\tilde{\bG}$ be the graph $\bG$ augmented by $w_s$.
    Then, for $\sT_{\SSRP}(n,m,W)$ and $\rho$ as in \cref{thm:both}, we can solve \SSRP in $\tilde{\bG}$ with source $s$ in time $\Oh(\sT_{\SSRP}(n,m, W) + n^2)$ if $\rho < 1$, and $\Oh(\sT_{\SSRP}(n,m, W) \log n + n^2)$ otherwise.
\end{restatable*}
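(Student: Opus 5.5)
The plan is to exploit two facts about $\tilde{\bG}$. First, distances from $s$ do not change: since $w_s(x)\ge d_{\bG}(s,x)$, we have $d_{\tilde{\bG}}(s,x)=d_{\bG}(s,x)$ for all $x$. Hence a shortest path tree $\bT$ of $\bG$ rooted at $s$ is also one of $\tilde{\bG}$, after breaking ties in favour of original edges.

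From this I would derive the basic reductions.
\begin{itemize}
\item A failed shortcut edge $\{s,x\}$, or any non-tree edge, changes nothing, so only failures $e\in\bT$ matter.
\item For a tree edge $e$ whose lower endpoint is $c$, only vertices of the subtree $\bT_c$ are affected.
\item In that case only shortcut edges into $\bT_c$ can help. A shortcut to $x\notin \bT_c$ is dominated by the intact tree path to $x$, which has length $d_{\bG}(s,x)\le w_s(x)$.
\end{itemize}
Consequently
\[
d_{\tilde{\bG}\setminus e}(s,v)=\min\Big\{d_{\bG\setminus e}(s,v),\ \min_{x\in \bT_c} \big(w_s(x)+d_{\bG\setminus e}(x,v)\big)\Big\}
\qquad\text{for } v\in\bT_c .
\]

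The algorithm is a divide and conquer on $\bT$ that matches the hypothesis of \cref{thm:both}. I pick a vertex $c$ with $n/3\le|\bT_c|\le 2n/3$. This is possible whenever no single vertex has a huge subtree with light children; otherwise I walk down the heavy path until the condition holds. I then build two instances, each again an augmented graph with shortcut weights, which is why the recursion closes.

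\emph{Inner instance $\bA$.} Its vertex set is $\bT_c$, with $c$ playing the role of the source, and its edges are the $\bG$-edges inside $\bT_c$. Every $\bG$-edge leaving $\bT_c$, together with every original shortcut into $\bT_c$, is folded into a new shortcut weight
\[
w'_c(y)=\min\Big\{w_s(y),\ \min_{z\notin \bT_c}\big(d_{\bG}(s,z)+w(z,y)\big)\Big\}-d_{\bG}(s,c).
\]
This is a valid shortcut weight, since it is at least $d_{\bG}(c,y)$ because $\bT_c$ hangs below $c$. It correctly answers all failures $e$ strictly inside $\bT_c$: for those failures the vertices outside $\bT_c$ keep their distances, so reaching $y$ from outside costs exactly the folded weight.

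\emph{Outer instance $\bB$.} Its vertex set is $V\setminus\bT_c$ plus $c$, so $n_1+n_2=n+1$, with the edges not inside $\bT_c$; hence $m_1+m_2\le m$ because cross edges became shortcut weights. It handles failures outside $\bT_c$ and off the path $\pi_{\bT}(s,c)$. For these failures the affected vertices lie outside $\bT_c$, and $\bT_c$ can be summarized by shortcut-type edges hanging off $c$. These need care, since $c$ is not the source of $\bB$. I would try to argue that routes through $\bT_c$ are captured by composing through $c$, whose distance is unaffected.

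The main obstacle is the remaining family of failures $e$ on $\pi_{\bT}(s,c)$. For these, all of $\bT_c$ as well as parts of $V\setminus\bT_c$ are affected, and the shortcuts into $\bT_c$ interact with the whole graph. My first attempt is to run the given \SSRP algorithm once on the unaugmented $\bG$ at the current level, which costs $\sT_{\SSRP}(n,m,W)$. This yields $d_{\bG\setminus e}(s,\cdot)$, and by symmetric runs from the relevant endpoints, the distances needed in the formula above restricted to $e\in\pi_{\bT}(s,c)$. I would then evaluate the minimum over $x$ incrementally along the path in $\Oh(n^2)$ total time, processing the path edges from $c$ upwards so that the candidate sets $\bT_c\subseteq\bT_{c'}\subseteq\cdots$ are nested.

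If that works, the recurrence is $T(n,m)\le T(n_1,m_1)+T(n_2,m_2)+\Oh(\sT_{\SSRP}(n,m,W)+n^2)$ with $n_i\le 2n/3$. The hypothesis gives a geometric decay with ratio $\rho<1$, so the total is $\Oh(\sT_{\SSRP}+n^2)$. With $\rho=1$ there are $\Oh(\log n)$ levels, giving $\Oh(\sT_{\SSRP}\log n+n^2)$; the $n^2$ terms still sum to $\Oh(n^2)$ because $\sum n_i^2$ decays geometrically. I expect the delicate part to be the correctness of the path-failure case. In particular I must make sure that the $\bG\setminus e$ distances from the intermediate vertices $x\in\bT_c$ are not actually needed. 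To avoid them I would replace them by a single \SSRP computation from $s$ on $\bG$ with $\bT_c$'s shortcuts folded into one auxiliary vertex adjacent to $s$. The risk is that this folding itself breaks the weight set $W$.
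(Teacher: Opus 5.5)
The genuine gap is the failures $e\in\pi_{\bG}(s,c)$, which is where the difficulty of the theorem lies, and the proposal leaves this case open. Your handling of the other failures is sound. For failures inside $\bT_c$, the instance $\bA$ works. For failures off the path, the outer instance needs no folding at all: a path that starts with a short-cut edge and later enters $\bT_c\setminus\{c\}$ can have its prefix replaced by the intact tree path to the entry vertex, so it is covered by $d_{\bG\setminus e}(s,x)$.

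For a path failure $e$, your formula needs $d_{\bG\setminus e}(x,v)$ for every short-cut endpoint $x$ below $e$. When $e$ is not incident to $c$, these endpoints range over a subtree strictly containing $\bT_c$. That means replacement distances from $\Theta(n)$ sources. A constant number of \textsf{SSRP} runs does not give them, and nesting the candidate sets does not help, because the distances themselves change with $e$. Your fallback (fold the short-cuts into an auxiliary vertex next to $s$ and run \textsf{SSRP} from $s$) is circular. That graph is again an augmented instance of the same size with weights outside $W$, i.e. exactly the statement being proved. A minor point as well: a vertex with $n/3\le|\bT_c|\le 2n/3$ need not exist (e.g. a star rooted at $s$). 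You need the separator form in which the lower part is $c$ together with a subset of its child subtrees.

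The missing idea is that distances from intermediate vertices are never needed. Put $D(\cdot)\coloneqq d_{\tilde{\bG}\setminus E(\pi_{\bG}(s,c))}(s,\cdot)$; this is one Dijkstra run, independent of $e$. Fix $e=\{u,v\}\in\pi_{\bG}(s,c)$ and let $Q$ be a shortest $s$-$x$ path in $\tilde{\bG}\setminus e$ that starts with a short-cut edge. Let $y$ be the first vertex of $Q$ on $\pi_{\bG}(s,c)$ other than $s$.

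\begin{itemize}
\item If $y\preceq u$, replacing the prefix by the intact $\pi_{\bG}(s,y)$ gives a short-cut-free path, covered by $d_{\bG\setminus e}(s,x)$.
\item So $v\preceq y\preceq c$, and everything on $Q$ before $y$ avoids $E(\pi_{\bG}(s,c))$, hence is priced by $D$.
\item If $Q$ enters $\bT_c\setminus\{c\}$ after $y$, the stretch from $y$ to the last such vertex can be replaced by the intact tree path through $c$. This gives the bound $\min_{v\preceq y\preceq c}\{D(y)-d_{\bG}(s,y)\}+d_{\bG}(s,c)+d_{\bG\setminus e}(c,x)$. It needs only an unaugmented \textsf{SSRP} run from $c$ and prefix minima along the path.
\item Otherwise, either $x\in\bT_c\setminus\{c\}$ and $Q$ avoids the path entirely (term $D(x)$), or $Q$ ends with a stretch inside the outer part.
\end{itemize}

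In the last case, path failures must be handed to the outer recursive instance, which contains $\pi_{\bG}(s,c)$. Any earlier excursion into $\bT_c$ is folded into its short-cut weights as $\min\{w_s(x),\min_{z\in\bT_c\setminus\{c\}}(D(z)+w_{\bG}(z,x))\}$. These weights are realizable in $\tilde{\bG}\setminus e$ for every failure $e$ of the outer tree. On the path this holds by definition of $D$; off the path it holds because $d_{\tilde{\bG}\setminus e}(s,z)=d_{\bG}(s,z)\le D(z)$. By contrast, weights built from intact distances $d_{\bG}(s,z)$ with $z\in\bT_c$, as in your $w'_c$ and your envisaged summary for $\bB$, are sound only off the path. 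That is why path failures fell through both of your instances.

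This is essentially the paper's proof. Each level then costs two unaugmented \textsf{SSRP} runs, one Dijkstra run and $\Oh(n^2)$ time, so your recurrence analysis applies. One caution if you consult the paper: it writes the outer weights as $\min\{w_s(x),D(x)\}$. Taken literally, this undercuts the true value when $x$ has degree one, is not adjacent to $s$, has a tree path avoiding $\pi_{\bG}(s,c)$, and its edge fails while $w_s(x)>d_{\bG}(s,x)$. Routing through a cross edge, as above, is the safe version.
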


\Cref{thm:wssrp} serves not only as an essential component of our reduction but also as a mean to simplify existing \SSRP results. Notably, these weights have appeared (either implicitly or explicitly) in some prior algorithms \cite{GVW12, GV19, CM20}. Consequently, we consider \cref{thm:wssrp} to be a technical contribution of our work.

\subparagraph*{Getting back to the construction of $\bA$ and $\bB$.}
\Cref{thm:wssrp} is the key to devise our divide-et-impera approach, yet the implementation remains non-trivial. Several technical details must still be addressed:
\begin{itemize}
    \item The recursive scheme must operate on graphs $\bG$ augmented by short-cut weights $w_s$ and $w_t$ (relative to $s$ and $t$ in $\bG$), which are accumulated as the recursion descends.
    \item Defining $\bA$ as previously established is insufficient; we must ensure that the shortest path from $s$ to any $x \in \bA$ is contained within $\bA$ itself. Otherwise, shortcut edges might become shorter than the true distances from $s$. The current definition of $\bA$ does not guarantee this property, but some similar variant of it will.
    \item We must select both $\bA$ and $\bB$ such that they are disjoint and partition the nodes roughly in half.
    \item We must prove we can compute \SSRP with shortcut weights as fast as without shortcut weights not only within $\bA$ (resp. $\bB$) but also within $\bR$ (resp. $\bL$). Indeed, in $\bR$ (resp. $\bL$), upon removing a specific set of edges in $\pi_{\bG}(c,t)$ (resp. $\pi_{\bG}(s,c)$), the short-cut property $d_{\bH}(s,x) \leq w_s(x)$ might not hold anymore; however, the removed edges are sufficiently structured that we can show that this is possible.
\end{itemize}

\subsection*{Case $e' \notin \pi_{\bG}(s,t)$}

While the case where $e' \in \pi_{\bG}(s,t)$ was handled via a divide-et-impera approach, the case $e' \notin \pi_{\bG}(s,t)$ requires extensive structural analysis and several case distinctions. 

The underlying strategy is to characterize the path $\pi_{\bG \setminus e,e'}(s,t)$ as the concatenation of a shortest path $\pi_{\bG}(s,x)$ in $\bG$, a single edge $\{x,y\}$, and a path of the form $\pi_{\bG \setminus e}(y,t)$ or $\pi_{\bG \setminus e'}(y,t)$. While it is challenging to provide a holistic treatment of every case, we limit our discussion to describing the primary tools used to simplify $\pi_{\bG \setminus e,e'}(s,t)$ to the desired form and provide one example of such simplification.

To this end, let us introduce some notation.
Let $T_{\bG}(s)$ denote the shortest path tree rooted at $s$ in $\bG$. For any node $x \in V(\bG)$, let $T_{\bG}(s, x)$ denote the subtree of $T_{\bG}(s)$ rooted at $x$. Lastly, given an edge $e \in T_{\bG}(s)$, we let $T_{\bG}(s,e)$ be the subtree of $T_{\bG}(s)$ below $e$.

The first (well-known) observation (which also highlights one of the primary reasons our reduction is restricted to undirected graphs) that allows us to simplify ways paths are written is the following:

\begin{proposition}\label{prp:two_paths}\label{prp:same_subtree}
    Let $\bG$ be a graph, and let $s,t \in V$ be nodes.
    Then, for any edge $e = \{u,v\} \in \pi_{\bG}(s,t)$ such that $u$ is closer to $s$ than $v$ we have $T_{\bG}(s,v) \cap T_{\bG}(t,u) = \emptyset$, which implies $\pi_{\bG \setminus e}(x,t) = \pi_{\bG}(x,t)$ for any $x \in T_{\bG}(s,v)$.
\end{proposition}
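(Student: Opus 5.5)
The plan is a direct argument by contradiction using distances, under the standing assumption of unique shortest paths and positive weights. Write $\pi_{\bG}(s,t)$ as passing through $u$ then $v$, so that $d_{\bG}(s,v) = d_{\bG}(s,u) + w(e)$ and $d_{\bG}(t,u) = d_{\bG}(t,v) + w(e)$. By consistency of shortest paths, $e$ is the parent edge of $v$ in $T_{\bG}(s)$ and the parent edge of $u$ in $T_{\bG}(t)$. Hence $T_{\bG}(s,v) = T_{\bG}(s,e)$ and $T_{\bG}(t,u) = T_{\bG}(t,e)$.

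For the disjointness, suppose some $x$ lies in both subtrees. Then $\pi_{\bG}(s,x)$ passes through $v$ after $u$, giving
\[
d_{\bG}(s,x) = d_{\bG}(s,v) + d_{\bG}(v,x).
\]
Likewise, $\pi_{\bG}(t,x)$ passes through $u$ after $v$, giving
\[
d_{\bG}(t,x) = d_{\bG}(t,u) + d_{\bG}(u,x).
\]
Adding these and using undirectedness, $d_{\bG}(s,x) + d_{\bG}(x,t) = d_{\bG}(s,v) + d_{\bG}(t,u) + d_{\bG}(v,x) + d_{\bG}(u,x)$. Substituting $d_{\bG}(s,v)+d_{\bG}(t,u) = d_{\bG}(s,t) + w(e)$ and applying the triangle inequality $d_{\bG}(v,x)+d_{\bG}(u,x) \geq d_{\bG}(u,v) = w(e)$, we get $d_{\bG}(s,x)+d_{\bG}(x,t) \geq d_{\bG}(s,t) + 2w(e) > d_{\bG}(s,t)$.

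This does not contradict the triangle inequality yet, so I will use the tree structure more carefully instead. Since $x \in T_{\bG}(t,u)$, the path $\pi_{\bG}(x,t)$ traverses $e$ from $u$ to $v$. Then $d_{\bG}(x,v) = d_{\bG}(x,u) + w(e) > d_{\bG}(x,u)$. Symmetrically, $x \in T_{\bG}(s,v)$ means $\pi_{\bG}(s,x)$ reaches $x$ via $u,v$, so $d_{\bG}(v,x) < d_{\bG}(u,x)$, since the path from $u$ to $x$ goes through $v$ with positive weight. These two inequalities contradict each other, which proves $T_{\bG}(s,v)\cap T_{\bG}(t,u)=\emptyset$. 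This step uses undirectedness essentially: $d(x,v)=d(v,x)$.

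For the implication, take $x \in T_{\bG}(s,v)$. By disjointness $x \notin T_{\bG}(t,u)$, so the tree path from $x$ to $t$ in $T_{\bG}(t)$, which is $\pi_{\bG}(x,t)$, does not use the parent edge of $u$, namely $e$. A shortest path in $\bG$ avoiding $e$ remains shortest in $\bG\setminus e$, so $\pi_{\bG\setminus e}(x,t)=\pi_{\bG}(x,t)$ under uniqueness.

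The main subtlety is identifying the subtrees with "the tree path uses $e$ in a given direction," which requires the consistent shortest-path tree given by unique shortest paths. With ties, the argument goes through once ties are broken consistently.
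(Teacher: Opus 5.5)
Your proof is correct and is essentially the paper's argument. Both derive the contradiction from the same fact: $x\in T_{\bG}(s,v)$ forces the route to $x$ to cross $e$ from $u$ to $v$, while $x\in T_{\bG}(t,u)$ forces it to cross from $v$ to $u$. The paper phrases this through canonical-path identities ($\pi_{\bG}(t,x)=\pi_{\bG}(t,v)\circ\pi_{\bG}(v,x)$ with $e\notin\pi_{\bG}(v,x)$), whereas you use $d_{\bG}(u,x)=d_{\bG}(v,x)+w(e)$ and $d_{\bG}(v,x)=d_{\bG}(u,x)+w(e)$, which needs only positive weights rather than shared canonical subpaths. You can simply delete the abandoned first computation, and your explicit derivation of the implication, which the paper leaves implicit, is fine.
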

\begin{proof}
    For the sake of contradiction, assume that there is $x \in T_{\bG}(s,v) \cap T_{\bG}(t,u)$.
    From $x \in T_{\bG}(s,v)$, it follows that $e \notin \pi_{\bG} (v,x)$. 
    Moreover, from $x \in T_{\bG}(t,u)$ follows $v \in \pi_{\bG}(t,x)$ and $e \in \pi_{\bG}(t,x)$,
    meaning $\pi_{\bG}(t,x) = \pi_{\bG} (t,v) \circ \pi_{\bG} (v,x)$. But neither $\pi_{\bG} (t,v)$ nor $\pi_{\bG} (v,x)$ uses $e$, a contradiction.
\end{proof}

Defining $E_{\bG}(A,B) \coloneqq E(\bG) \cap (A \times B)$ for $A,B \subseteq V(\bG)$,
the second observation is as follows:

\begin{proposition}\label{prp:path_decomp}
    Let $\bG$ be a graph, and consider two vertices $s,t \in \bG$ and a set of edges $F \subseteq E(T_{\bG}(s))$ such that all $T_{\bG}(s,e)$ for different $e \in F$ are pairwise disjoint. Let $S \coloneqq \bigcup_{e \in F} V(T_{\bG}(s,e))$ be the set of vertices below any edge of $F$.

    Then, provided that $t \in S$, there is $\{x,y\} \in E_{\bG}(V(\bG) \setminus S, S)$ with $\{x,y\} \notin F$ such that we can write
    \[
        \pi_{\bG \setminus F}(s,t) = \pi_{\bG}(s,x) \circ \{x,y\} \circ \pi_{\bG \setminus F}(y,t).
    \]
    Moreover, for such $x,y$ we have that $\pi_{\bG \setminus F}(y,t)$ only contains vertices in $S$.
\end{proposition}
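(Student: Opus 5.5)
Take $x$ to be the last vertex of $\pi_{\bG \setminus F}(s,t)$ lying outside $S$, and $y$ its successor on the path. Throughout, uniqueness of shortest paths is assumed, as in the overview.

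\emph{Existence of $x$ and $y$.} Since $s \notin S$ (every $T_{\bG}(s,e)$ lies strictly below an edge) and $t \in S$, the path $\pi_{\bG\setminus F}(s,t)$ leaves $V(\bG)\setminus S$ for the last time at some vertex $x \notin S$. Its successor $y$ lies in $S$. The edge $\{x,y\}$ belongs to $\bG \setminus F$, so $\{x,y\}\notin F$, and $\{x,y\} \in E_{\bG}(V(\bG)\setminus S, S)$. By the choice of $x$, every vertex after $x$ is in $S$. This gives the "moreover" part, and the suffix is a shortest $y$-$t$ path in $\bG\setminus F$, i.e.\ $\pi_{\bG\setminus F}(y,t)$.

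\emph{The prefix is $\pi_{\bG}(s,x)$.} The tree path from $s$ to $x$ in $T_{\bG}(s)$ is $\pi_{\bG}(s,x)$. Suppose it contained some edge $e\in F$. Then $x$ would lie below $e$, so $x \in V(T_{\bG}(s,e)) \subseteq S$, a contradiction. Hence $\pi_{\bG}(s,x)$ avoids $F$ and is a path in $\bG\setminus F$, giving $d_{\bG\setminus F}(s,x)=d_{\bG}(s,x)$. Since a subpath of a shortest path is shortest, the prefix of $\pi_{\bG\setminus F}(s,t)$ up to $x$ has length $d_{\bG}(s,x)$. By uniqueness it equals $\pi_{\bG}(s,x)$. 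Without uniqueness, we can replace the prefix by $\pi_{\bG}(s,x)$ and still get a shortest $s$-$t$ path in $\bG\setminus F$ of the stated form.

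The only delicate point is the choice of "last exit" rather than "first entry" into $S$. Choosing the last exit guarantees the suffix stays in $S$. The tree argument then handles the prefix regardless of whether earlier parts of the path visited $S$. The disjointness of the subtrees is not really needed beyond making $S$ well-described.
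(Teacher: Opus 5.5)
Your proof is correct and follows the paper's argument. You take $x$ as the last vertex of $\pi_{\bG\setminus F}(s,t)$ outside $S$ and $y$ as its successor, then conclude that the prefix is $\pi_{\bG}(s,x)$ because $x$ lies below no edge of $F$; you justify this step explicitly where the paper only asserts it, and you are right that disjointness of the subtrees is never used. Your ``without uniqueness'' fallback only yields some shortest path of the stated form rather than the canonical one, but it is unnecessary: under the paper's canonical-path conventions (closure under subpaths, and consistency with edge removals applied edge by edge to $F$), the canonical prefix is exactly $\pi_{\bG}(s,x)$.
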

\begin{proof}
    Let $x$ be the last node visited by $\pi_{\bG \setminus F}(s,t)$ that is contained in $V(\bG) \setminus S$, and let $y$ be the node immediately after.
    This means that $\pi_{\bG \setminus F}(s,t) = \pi_{\bG \setminus F}(s,x) \circ \{x,y\} \circ \pi_{\bG \setminus F}(y,t)$, where  $\pi_{\bG \setminus F}(y,t)$ only contains vertices in $S$.
    From $x \in V(\bG) \setminus S$ and from the assumption on $V(\bG) \setminus S$ we get that $\pi_{\bG \setminus F}(s,x) = \pi_{\bG}(s,x)$, which concludes the proof. 
\end{proof}

\subparagraph{A simple showcase.}
As a (not too involved) example of how these observations are applied, we showcase the computations performed to determine $d_{\bG \setminus e,e'}(s,t)$ for all $e,e'$ such that $e \in \pi_{\bG}(s,t)$, $e' \notin T_{\bG}(s)$, and $e' \notin E_{\bG}(T_{\bG}(s,e), V(\bG) \setminus T_{\bG}(s,e))$.

Consider such arbitrary $e,e'$.
Note that since $e' \notin T_{\bG}(s)$, we have $T_{\bG \setminus e'}(s) = T_{\bG}(s)$.
In particular, $T_{\bG \setminus e'}(s,e) = T_{\bG}(s,e)$.
This allows us to use \cref{prp:path_decomp} on $\bG \setminus e', s,t$ and $F = \{e\}$ to get that there is $\{x,y\} \in E(\bG) \setminus e'$ with $x \notin T_{\bG}(s,e)$ and $y \in T_{\bG}(s,e)$ such that:
\begin{equation}
    \pi_{\bG \setminus e,e'} (s,t)
    = \pi_{\bG \setminus e'}(s,x) \circ \{x,y\} \circ \pi_{\bG \setminus e,e'} (y,t). \label{eq:to:2}
\end{equation}
Since $e' \notin T_{\bG}(s)=T_{\bG \setminus e}(s)$, we can simplify $\pi_{\bG \setminus e'}(s,x) = \pi_{\bG}(s,x)$.
Moreover, from \cref{prp:two_paths} follows that $\pi_{\bG \setminus e,e'} (y,t) = \pi_{\bG \setminus e'} (y,t)$. 
Altogether, \cref{eq:to:2} simplifies to
\begin{equation}
    \pi_{\bG \setminus e,e'} (s,t)
    = \pi_{\bG}(s,x) \circ \{x,y\} \circ \pi_{\bG \setminus e'} (y,t). \label{eq:to:3}
\end{equation}
Now that the three components of \cref{eq:to:3} are simpler, we can focus on computing $d_{\bG \setminus e,e'}(s,t)$ as 
\begin{equation}
    d_{\bG \setminus e,e'} (s,t)
    = \min_{x \notin T_{\bG}(s,e), y \in T_{\bG}(s,e)} \Big \{ d_{\bG}(s,x) + w_{\bG}(x,y) + d_{\bG \setminus e'}(y,t) \Big\}. \label{eq:to:4}
\end{equation}

\subparagraph*{Some additional (algorithmic) tools.}
To this end, similarly to many works on replacement paths, we employ what is commonly known as \emph{centroid decomposition}.

\begin{lemma}[Folklore, Separator Lemma]\label{lem:separator}
    Given a tree $\bT$ with $n$ nodes rooted at node $s$, one can find in time a node $c$ (called \emph{centroid}) that separates $\bT$ into $2$ edge disjoint subtrees $\bT_1, \bT_2$ such that $E(\bT_1) \cup E(\bT_2) = E(\bT)$, $V(\bT_1) \cap V(\bT_2) = \{c\}$ and $n/3 \leq \abs{V(\bT_1)}, \abs{V(\bT_2)} \leq 2n/3$. (Refer to \cref{fig:centroid} for a visualization.)
    \lipicsEnd
\end{lemma}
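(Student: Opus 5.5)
The plan is the standard centroid argument, applied to a tree $\bT$ with $n\ge 2$ nodes (if $n\le 2$ the statement is trivial or vacuous). First, root $\bT$ at $s$ and compute all subtree sizes $\sigma(x)$ with a single DFS in linear time. Then walk down from the root, always moving to a child whose subtree has more than $n/2$ nodes, and stop at the first node $c$ with no such child. This is the classical centroid: every child subtree of $c$ has at most $n/2$ nodes, and the part of the tree above $c$ has $n-\sigma(c) < n/2$ nodes (if $c\neq s$). In other words, removing $c$ leaves components $C_1,\dots,C_k$, each of size at most $n/2$, with total size $n-1$.

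Next, I split these components into two groups. Sort the $C_i$ by size in decreasing order and add them one by one to group $1$ until group $1$ has at least $(n-1)/3$ nodes. Let $\bT_1$ be $c$ together with the components of group $1$ and their connecting edges to $c$. Let $\bT_2$ be $c$ together with the remaining components. Both are subtrees, their edge sets partition $E(\bT)$, and they meet exactly in $c$. The sizes I need to check are $|V(\bT_i)| = 1 + (\text{group size})$.

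The main step is the balance bound. Suppose the last component added has size $a$, and group $1$ had size $S<(n-1)/3$ just before it was added.
\begin{itemize}
  \item If $a\le (n-1)/3$, then the final group size satisfies $S+a < 2(n-1)/3$. Hence $|V(\bT_1)|\le 2n/3$, and $|V(\bT_2)| = n-|V(\bT_1)|+1 \ge n/3$.
  \item If $a > (n-1)/3$, then $a$ was a large component. Because the components are sorted, at most two components can exceed $(n-1)/3$. Since each component has size at most $n/2$, in this case $\bT_1$ has at most roughly $n/2 + (n-1)/3$ nodes. That is too many, so I handle this case separately: I put only a single large component into $\bT_1$, giving $|V(\bT_1)| \le 1+n/2 \le 2n/3$ for $n\ge 6$, and $|V(\bT_1)| \ge 1+(n-1)/3 \ge n/3$.
\end{itemize}
For the upper bound on $\bT_2$, I use the symmetric fact $|V(\bT_2)| = n+1-|V(\bT_1)|$ together with $|V(\bT_1)| \ge n/3+1$. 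For the lower bound on $\bT_1$, the stopping rule gives $|V(\bT_1)| \ge 1+(n-1)/3 \ge n/3$.

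I expect the main obstacle to be this arithmetic on the bounds, in particular the boundary cases. For the smallest $n$ (for example a star on $3$ or $4$ nodes), the $n/3$ versus $2n/3$ rounding has to be verified directly, or absorbed by reading the bounds as $\lfloor\cdot\rfloor$ and $\lceil\cdot\rceil$, as is customary for this folklore lemma. The whole procedure runs in $\Oh(n)$ time.
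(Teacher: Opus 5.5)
\textbf{Gap: the upper bound on $|V(\bT_2)|$ in your first case.} You assert $|V(\bT_1)| \ge n/3+1$, but your stopping rule only guarantees that group~1 has at least $(n-1)/3$ vertices. That gives $|V(\bT_1)| \ge (n+2)/3$, not $n/3+1$. In your first case the group can reach exactly $(n-1)/3$, which is possible whenever $n \equiv 1 \pmod 3$. Then
\[
|V(\bT_2)| = n+1-(n+2)/3 = (2n+1)/3 > 2n/3 .
\]
Your second case is fine: there $a > (n-1)/3$ is strict, and integrality gives $|V(\bT_1)| \ge n/3+1$.

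\textbf{The statement itself is false.} This is not a small-$n$ rounding issue, and no choice of threshold repairs it: the lemma with the exact bound $2n/3$ does not hold. Since $|V(\bT_1)|+|V(\bT_2)| = n+1$, for $n=4$ both sides would need at most $2$ vertices, which is impossible for every $4$-vertex tree. For an infinite family, take the spider with center $z$ and three legs of $L$ vertices each, so $n = 3L+1$.
\begin{itemize}
\item Splitting at $z$ gives sides of sizes $1+kL$ and $1+(3-k)L$, so one side has at least $2L+1 = (2n+1)/3$ vertices.
\item Splitting at a leg vertex at distance $d \ge 1$ from $z$ leaves a side with at least $2L+d+1 > 2n/3$ vertices.
\end{itemize}
This is exactly the configuration where your stopping rule lands on $(n-1)/3$.

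\textbf{What your argument does prove.} Replace the last line by $|V(\bT_2)| = n+1-|V(\bT_1)| \le (2n+1)/3$, and check case~2 against $1+\lfloor n/2\rfloor \le (2n+1)/3$, which holds for $n \ge 3$. You then get the correct folklore form
\[
(n+2)/3 \le |V(\bT_i)| \le (2n+1)/3, \quad\text{equivalently}\quad (n-1)/3 \le |E(\bT_i)| \le 2(n-1)/3 .
\]
The spider shows this is tight.

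The paper states the lemma as folklore without proof. This weaker bound is all its recursions need, since they only use constant-factor shrinkage and hence logarithmic depth. However, the hypothesis $n_1,n_2 \le 2n/3$ in \cref{thm:both}, and the corresponding claim in the runtime analysis of \cref{thm:wssrp}, must be relaxed by the same additive $1/3$.
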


\begin{figure*}[t]
   \centering
   \scalebox{0.8}{\begin{tikzpicture}[scale=1, dot/.style={
        draw, 
        circle, 
        fill, 
        minimum size=1mm, 
        inner sep=0pt
    }]
    \begin{scope}[scale=0.9]
        \fill[red, opacity=0.4] (0,0) -- (2,1.5) -- (4,0) -- (2,3) -- (0,0);
        \draw[thick] (0,0) -- (4,0) -- (2,3) node[dot, label={above:$s$}] {} -- (0,0);
        \fill[teal, opacity=0.4] (0,0) -- (4,0) -- (2,1.5) -- (0,0);
        \draw[thick] (0,0) -- (4,0) -- (2,1.5)  node[dot, label={above left:$c$}] {} -- (0,0);
        \draw[thick, dotted] (2,3) -- (2,1.5);
        \node at (0, 3) {\large $\bT$};
    \end{scope}

    \begin{scope}[scale=0.9, shift={(6,0)}]
        \fill[red, opacity=0.4] (0,0) -- (2,1.5) -- (4,0) -- (2,3) -- (0,0);
        \draw[thick] (0,0) -- (2,1.5) node[dot] {} -- (4,0) -- (2,3) node[dot, label={above:$s$}] {} -- (0,0);
        \node at (0, 3) {\large $\bT_1$};
    \end{scope}

    \begin{scope}[scale=0.9, shift={(12,0)}]
        \fill[teal, opacity=0.4] (0,0) -- (4,0) -- (2,1.5) -- (0,0);
        \draw[thick] (0,0) -- (4,0) -- (2,1.5)  node[dot, label={above left:$c$}] {} -- (0,0);
        \node at (0, 3) {\large $\bT_2$};
    \end{scope}
\end{tikzpicture}}
   \caption{
        This figure shows how $\bT$ is split into $\bT_1$ and $\bT_2$.
    }
   \label{fig:centroid}
\end{figure*}

Another tool we need are common range query data structures.

\begin{lemma}[See \cite{D91,BV94,BFC00,BFC04}, Range Minimum Query]\label{prp:range_query_or}
    Let $q_0, \ldots, q_{n-1}$ be $n$ values.
    Then, we can construct in $\Oh(n)$ time a data structure that answers queries of the type "return $\min_{i \in \fragmentco{a}{b}} q_{x_i}$" in $\Oh(1)$ time, for any $a,b \in \fragment{0}{n}$ with $a < b$. \lipicsEnd
\end{lemma}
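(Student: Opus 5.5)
The plan is to use the classical two-level (block) decomposition, combining a sparse table on block minima with an in-block structure that relies on word-RAM bit operations. I read the query as returning $\min_{i\in\fragmentco{a}{b}} q_i$ over a contiguous index range; if the $x_i$ denote a fixed permutation of indices, we first materialize the array $q'_i \coloneqq q_{x_i}$ in $\Oh(n)$ time and work with it instead. Fix a block size $\beta \coloneqq \max\{1,\lfloor \log n\rfloor\}$ and partition $\fragmentco{0}{n}$ into $\lceil n/\beta\rceil$ consecutive blocks.

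\emph{Step 1: inter-block level.} Compute the minimum $m_j$ of each block $j$ in $\Oh(n)$ total time. Over the array $(m_j)_j$ of length $n/\beta$, build a sparse table storing $\min$ over every range $\fragmentco{j}{j+2^k}$. This costs $\Oh((n/\beta)\log n)=\Oh(n)$ time, using the recurrence $M[k+1][j]=\min\{M[k][j],M[k][j+2^k]\}$. A query over a range of whole blocks $\fragmentco{j_1}{j_2}$ is then answered as the minimum of two overlapping power-of-two windows. The index $k=\lfloor\log(j_2-j_1)\rfloor$ comes from a precomputed table of floor-logarithms of size $n$.

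\emph{Step 2: intra-block level.} Scan each block left to right while maintaining the usual monotone stack of prefix-minimum candidates. For every position $p$, record a $\beta$-bit word $\mathrm{mask}_p$ whose set bits are exactly the in-block offsets currently on the stack. The stack consists of the positions $i\le p$ with $q_i<q_{i'}$ for all $i<i'\le p$, with ties broken toward the larger index. Since pushes and pops are amortized $\Oh(1)$ and each mask is obtained from the previous one by clearing the popped bits and setting one bit, this takes $\Oh(n)$ total time.

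The key claim is that for $a\le b$ in the same block, $\min_{i\in\fragment{a}{b}} q_i$ is attained at the lowest set bit of $\mathrm{mask}_b$ with offset $\ge a$. To see this, let $i^\star$ be the rightmost minimizer in $\fragment{a}{b}$. Then $i^\star$ is on the stack at time $b$. Any stack element $i$ with $a\le i<i^\star$ would need $q_i<q_{i^\star}$, which is impossible by minimality of $q_{i^\star}$. So $i^\star$ is the lowest stack element at or after $a$. Locating it requires one shift/mask and one lowest-set-bit operation, which is $\Oh(1)$ on a word-RAM with $\Theta(\log n)$-bit words; if that instruction is not available, a precomputed table on half-words of size $\Oh(\sqrt n)$ suffices.

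\emph{Step 3: answering a query.} A query $\fragmentco{a}{b}$ whose endpoints lie in different blocks splits into a suffix of $a$'s block, a range of full blocks, and a prefix of $(b-1)$'s block. The two partial parts are handled by Step 2 and the middle part by Step 1, and the answer is the minimum of at most three values. A query within a single block is handled directly by Step 2. The main obstacle is Step 2, specifically justifying the stack-mask characterization and the constant-time bit operation. Everything else is the standard sparse-table argument with the block size chosen so that the $\log$ factor cancels.
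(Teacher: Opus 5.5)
Your proof is correct, but it takes a different route from the one the paper points to. The paper gives no proof of this lemma and simply imports it from \cite{D91,BV94,BFC00,BFC04}. There, the standard argument (Bender and Farach-Colton \cite{BFC00}) proceeds in three steps: it reduces general RMQ to LCA on the Cartesian tree, then reduces LCA to $\pm1$-RMQ on the Euler-tour depth array, and finally handles $\pm1$-RMQ with blocks of size $\frac12\log n$, a sparse table over block minima, and in-block answers read from precomputed tables. The tables are small because a $\pm1$ block of that length has only $\Oh(\sqrt n)$ normalized shapes. You skip both reductions and work on the array directly, replacing the ``few block shapes'' counting by word-level parallelism. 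Your $\mathrm{mask}_p$ stores the strict suffix minima of the block prefix ending at $p$. Your rightmost-minimizer argument, which relies on exactly the tie-breaking toward larger indices that you specify, correctly shows that the minimum of $\fragment{a}{p}$ sits at the lowest set bit of $\mathrm{mask}_p$ whose offset is at least that of $a$. Your route is shorter and self-contained. Its only extra ingredient is a constant-time lowest-set-bit operation on a $\beta$-bit word, which is admissible in the word-RAM with $\Theta(\log n)$-bit words that the paper works in, and your $\Oh(\sqrt n)$ half-word table removes even that dependence. The Cartesian-tree route, in exchange, yields $\Oh(1)$ LCA queries as a by-product and uses only table lookups on block signatures. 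In a write-up, note that the range of full blocks in Step~3 may be empty (when $a$ and $b-1$ lie in adjacent blocks), in which case the sparse-table query is skipped. Your reading of $q_{x_i}$ as a pre-materialized array $q_i$ is the intended one: the $x_i$ are the pre-order sequence from \cref{prp:range_query}.
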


\begin{corollary}[Range Minimum Query on Trees]\label{prp:range_query}
    Suppose for each vertex $x$ of a $n$-node tree we are given a value $q_x$. Then, there is a data structure that can be constructed in time $\Oh(n)$ and answers in time $\Oh(1)$ the following queries: return $\min_{x \in X} q_x$, where $X$ is the union and/or intersection of constantly many subtrees of $\bT$.
\end{corollary}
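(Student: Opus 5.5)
The plan is to reduce the corollary to \cref{prp:range_query_or} through an Euler tour (DFS preorder) of $\bT$. Run a DFS from the root in $\Oh(n)$ time and assign each vertex $x$ its entry index $\mathrm{in}(x)\in\fragmentco{0}{n}$. Also record $\mathrm{out}(x)$, defined as $\mathrm{in}(x)$ plus the size of the subtree rooted at $x$. Then the vertex set of the subtree rooted at $x$ is exactly the set of vertices whose preorder index lies in the contiguous interval $\fragmentco{\mathrm{in}(x)}{\mathrm{out}(x)}$. Place the values in the array $q_0',\dots,q_{n-1}'$ with $q'_{\mathrm{in}(x)}=q_x$, and build the range minimum structure of \cref{prp:range_query_or} on it in $\Oh(n)$ time. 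The same interval description applies to a subtree below an edge $\{u,v\}$ with $v$ the child: it is just the subtree rooted at $v$.

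Next, reduce an arbitrary query set $X$ to a disjoint union of $\Oh(1)$ intervals. Each of the constantly many subtrees in the query corresponds to an interval of preorder indices, with endpoints computable in $\Oh(1)$. The set $X$ is built from these by unions and intersections, so its index set is obtained by applying the same Boolean operations to $k=\Oh(1)$ intervals. Such a set is a union of at most $\Oh(k)$ intervals, because its membership can only change at the $2k$ endpoints. We compute it in $\Oh(1)$ time as follows. Sort the at most $2k$ endpoints, which takes constant time since $k$ is constant. For each elementary segment between consecutive endpoints, evaluate the Boolean formula on a representative point, and keep the segments where it evaluates to true. Finally, answer the query as the minimum of one range minimum query per kept segment. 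This costs $\Oh(k)=\Oh(1)$ queries of $\Oh(1)$ time each, and we return $+\infty$ if $X$ is empty.

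Most of this is routine. The main point to verify carefully is the laminarity of subtree intervals: any two subtree intervals are either nested or disjoint. This fact guarantees the contiguity of each subtree's index set, but the segment argument above does not actually need it. The only care required is with half-open endpoint conventions, so that the elementary segments exactly partition $\fragmentco{0}{n}$.

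If the statement is intended to allow complements as well (for example $V(\bT)\setminus$ a subtree, as used in \cref{eq:to:4}), the same segment argument handles them unchanged. The complement of an interval is a union of at most two intervals, so the total count remains constant.
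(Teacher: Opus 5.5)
Your proposal is correct and takes the same approach as the paper: preorder numbering makes every subtree a contiguous index interval, a union/intersection of constantly many such intervals is a union of constantly many intervals, and each of these is answered by one query to the array range-minimum structure. Your endpoint-sorting procedure for finding those intervals in constant time, and your remark on complements, spell out details the paper leaves implicit. One small wording point: contiguity of a subtree's interval comes directly from the preorder traversal visiting all descendants of a node consecutively; laminarity is a consequence of that, not its cause, and you do not need it anyway.
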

\begin{proof}
    We order the nodes of $\bT$ according to a pre-order traversal, obtaining the sequence $x_0, \ldots, x_{n-1}$. The nodes of any subtree in $\bT$ correspond precisely to a set $\{x_i \mid i \in \fragmentco{a}{b}\}$ for some interval $\fragmentco{a}{b}$. This implies that any set formed by the union and intersection of a constant number of subtrees in $\bT$ also corresponds to the union of a constant number of such intervals. Thus, any query for the data structure of \cref{prp:range_query} can be converted into a constant number of queries to the aforementioned data structure. The final result is then the minimum of the values returned by these queries.
\end{proof}

\subparagraph*{Getting back to \cref{eq:to:4}.}

We want to compute $d_{\bG \setminus e,e'}(s,t)$ for all $e,e'$ as described above via \cref{eq:to:4}. In this presentation, we will use one fact that we prove later in the paper: there exists an (efficiently computable) set $F \subseteq E(\bG)$ of size $|F| = \Oh(n)$ that collects all $e'$ (across also different $e$) falling into this case.

We solve the problem recursively. In the beginning, $\bT \coloneqq T_{\bG}(s)$.
Our goal is to compute the following expression for every $e \in E(\bT) \cap \pi_{\bG}(s,t)$ and $e' \in F$:
\begin{align}
    \min_{x \in \bT \setminus T_{\bG}(s,e), y \in T_{\bG}(s,e)} \Big \{ d_{\bG}(s,x) + w_{\bG}(x,y) + d_{\bG \setminus e'}(y,t) \Big\}
    \label{eq:to:5}
\end{align}
In each level of the recursion, we split $\bT$ into $\bT_1$ and $\bT_2$ using \cref{lem:separator}, and find the corresponding centroid $c$. 
Let $n_{\bT},n_{\bT_1},n_{\bT_2}$ be number of nodes in the corresponding tree.

For $e \in E(\bT_1) \cap \pi_{\bG}(s,t)$ and $e' \in F$, we rewrite \cref{eq:to:5} as
\begin{align*}
    \min \Big\{ \min_{\substack{x \in \bT_1 \setminus T_{\bG}(s,e)\\ y \in T_{\bG}(s,e)}} \big\{ \ d_{\bG}(s,x) + w_{\bG}(x,y) + d_{\bG \setminus e'} (y,t) \ \big\},
    \min_{\substack{x \in \bT_2 \\y \in T_{\bG}(s,e)}} \big\{ \ d_{\bG}(s,x) + w_{\bG}(x,y) + d_{\bG \setminus e'} (y,t) \ \big\} \Big\},
\end{align*}
where in the second term we use that if $x \in \bT_2$ and $e \in \bT_1$, then $x \notin T_{\bG}(s,e)$.
(notice that this is only true when $e$ is not on the path from the root of $\bT$ to $c$, however if $e$ is on such path then the second term is non-existent, and we can ignore it.)

For the first term, we can recurse on $\bT_1$. To compute the second term, we do the following:
\begin{itemize}
    \item for each node $y \in \bG$ we compute $h_{y} \coloneqq \min_{x \in \bT_2} \{d_{\bG}(s,x) + w_{\bG}(x,y) \}$ in time $\Oh(n_\bT \cdot n)$;
    \item we setup for each $e' \in F$ a data structure from \cref{prp:range_query} where $y$ has the associated value $h_{y} + d_{\bG \setminus e'} (y,t)$ in time $\Oh(|F| \cdot n) = \Oh(n^2)$.
    \item for each $e \in \bT_1$ and $e' \in F$ we query $\max_{y \in T_{\bG}(s,e)} \{ h_{y} + d_{\bG \setminus e'} (y,t) \}$ in time $\Oh(|F| \cdot n_{\bT}) = \Oh(n n_{\bT})$.
\end{itemize} 

Similarly, when $e \in E(\bT_2) \cap \pi_{\bG}(s,t)$ we rewrite \cref{eq:to:4} as
\begin{align*}
    \min \Big\{ \min_{\substack{x \in \bT_2 \setminus T_{\bG}(s,e)\\ y \in T_{\bG}(s,e)}} \big\{ \ d_{\bG}(s,x) + w_{\bG}(x,y) + d_{\bG \setminus e'} (y,t) \ \big\},
    \min_{\substack{x \in \bT_1 \\y \in T_{\bG}(s,e)}} \big\{ \ d_{\bG}(s,x) + w_{\bG}(x,y) + d_{\bG \setminus e'} (y,t) \ \big\} \Big\},
\end{align*}
where we use again $x \in \bT_1$ and $e \in \bT_2$ implies $x \notin T_{\bG}(s,e)$. As before, we can compute the first term by recursing in $\bT_2$ and the first by performing similar computations to above.

For the running time notice that first we need to compute \SSRP from $t$ in $\bG$. Then, when the recursion arrives to a certain tree $\bT$, we use time $\Oh(n_{\bT}\cdot n)$ and we recurse on $\bT_1$ and $\bT_2$ that satisfy $(n_{\bT_1}-1)+(n_{\bT_2}-1)=(n_{\bT}-1)$. Since with every level of recursion $n_\bT$ shrinks by a constant fraction, we get that the overall running time is $\Oh(n^2 \log n)$.

\subparagraph*{Organization of the paper.} After introducing our notation more properly in \cref{sec:prelims}, we prove \cref{thm:single} in \cref{sec:single} (the case where $e,e' \in \pi_{\bG}(s,t)$) and \cref{thm:both} in \cref{sec:both} (the case where $e' \notin \pi_{\bG}(s,t)$). Finally, in \cref{sec:lb}, we prove the lower bounds of \cref{thm:lb}.

\section{Preliminaries}
\label{sec:prelims}

\subparagraph*{Set notation.}
For integers \(i, j \in \mathbb{Z}\), we write \(\fragment{i}{j}\) to represent the set \(\{i, \dots, j\}\), and \(\fragmentco{i}{j}\) to denote the set \(\{i, \dots, j - 1\}\).
We define \(\fragmentoc{i}{j}\) and \(\fragmentoo{i}{j}\) similarly.

\subparagraph*{Graph notation.}

Throughout this paper, unless explicitly stated otherwise, we consider only undirected graphs (specifically, the only section where we consider directed graphs is \cref{subsec:both:1}). For a graph $\bG$, we denote its vertex set by $V(\bG)$ and its edge set by $E(\bG)$. Let $w_{\bG}(\cdot, \cdot)$ be the weight function of $\bG$, which we assume to be positive. For convenience, we define $w_{\bG}(x,y) = +\infty$ whenever $\{x,y\} \notin E(\bG)$. To maintain brevity, we may use $\bG$ to refer to its vertex or edge sets when the meaning is clear from the context (we generally use the variables $e, f$ to denote edges and $x, y, z, u, v, a, b$ to denote nodes).
We also define $E_{\bG}(A,B) = E(\bG) \cap (A \times B)$ for $A,B \subseteq V(\bG)$.

Given a graph $\bG$ and a set $F \subseteq E(\bG)$, we denote with $\bG \setminus F$, the graph obtained by removing all edges in $F$ from $\bG$. Whenever $F = \{e\}$ and $F = \{e,e'\}$, we write $\bG \setminus e$ and $\bG \setminus e,e'$ instead of $\bG \setminus \{e\}$ and $\bG \setminus \{e,e'\}$.

\subparagraph*{On canonical shortest paths.}

Given a graph $\bG$ and nodes $x, y \in V(\bG)$, let $d_{\bG}(x, y)$ denote the distance between $x$ and $y$ in $\bG$. For every pair $x, y \in V(\bG)$, we choose and \textbf{fix a canonical shortest path $\pi_{\bG}(x, y)$}. We denote the set of vertices and edges belonging to this path by $V(\pi_{\bG}(x, y))$ and $E(\pi_{\bG}(x, y))$, respectively; as before, we may omit the $V(\cdot)$ and $E(\cdot)$ notation when the context is clear. Finally, we use $\circ$ to denote the concatenation of paths and edges.

When fixing shortest paths, we require that $\pi_{\bG}(\cdot,\cdot)$ is closed under subpaths; specifically, for any $x, y, z \in V(\bG)$ such that $z \in \pi_{\bG}(x, y)$, we have $\pi_{\bG}(x, y) = \pi_{\bG}(x, z) \circ \pi_{\bG}(z, y)$. Furthermore, we require that canonical shortest paths are consistent with edge removals: for any $e \in E(\bG)$ and $x, y \in V(\bG)$, if $e \notin \pi_{\bG}(x, y)$, then $\pi_{\bG \setminus e}(x, y) = \pi_{\bG}(x, y)$. While these canonical paths are primarily fixed for theoretical purposes, we must ensure they can be computed explicitly for at least two specific vertices, typically denoted $s$ and $t$. (This, combined with \cref{prp:two_paths} will provide runtime guarantees based on the splitting of trees.) 
We can achieve this easily in time $\Oh(m + n \log n)$ by adding a small perturbation to all edges and then using Dijkstra's algorithm from $s$ and $t$ (after this computation perturbations are dropped).

\subparagraph*{Shortest path trees.} Let $T_{\bG}(s)$ denote the shortest path tree of $s$ in $\bG$, constructed to be consistent with the previously fixed canonical paths. For any node $x \in V(\bG)$, we let $T_{\bG}(s, x)$ denote the subtree of $T_{\bG}(s)$ rooted at $x$.
Lastly, given an edge $e \in T_{\bG}(s)$, we let $T_{\bG}(s,e)$ be the subtree of $T_{\bG}(s)$ below $e$.
\section{Proof for $e' \in \pi_{\bG}(s,t)$}
\label{sec:both}

In this section, we prove \cref{thm:both}.

\both*

The proof is split up into \cref{subsec:both:1} and \cref{subsec:both:2}.
First, in \cref{subsec:both:1}, we introduce the notion of \emph{short-cut} weights (\cref{def:shortcut}), and we prove that \SSRP with short-cut weights takes the same time as without (\cref{thm:wssrp} and \cref{lem:wssrp2}).
The results in \cref{subsec:both:1} are general enough to hold also for directed graphs, so we phrase them in a way to reflect this.
Then, in \cref{subsec:both:1}, we show how \SSRP with short-cut weights can help us to compute $d_{\bG \setminus e,e'}(s,t)$ for all $e \in \pi_{\bG}(s,c)$ and $e' \in \pi_{\bG}(c,t)$ for an arbitrary node $c \in \pi_{\bG}(s,t)$. This allows us to finally prove \cref{thm:both} via a divide-et-impera approach.

\subsection{\SSRP with Short-cut Weights}
\label{subsec:both:1}

We start by stating the definition of \emph{short-cut weights}.

\begin{definition}[Short-cut Weights]\label{def:shortcut}
    Let $\bG$ be a directed or undirected and positively weighted graph, and let $s \in \bG$ be a node.
    We say that a function $w_s$ \emph{represents weights w.r.t. $s$} if for every node $x \in \bG$ we have $w_s(x) \geq 0$. Furthermore, we say that it \emph{represents short-cut weights w.r.t. $s$} if for every node $x \in \bG$ we have $w_s(x) \geq d_{\bG}(s,x)$.
\end{definition}

\begin{definition}[Augmented Graph]\label{def:augmented}
    Let $\bG$ be a directed or undirected graph with positive weights, and let $s \in \bG$ be a node.  Given a function $w_s$ representing weights w.r.t. $s$, the graph $\tilde{\bG}$ (referred to as $\bG$ \emph{augmented by} $w_s$) is defined as the graph $\bG$ extended by the set of edges $(s, x)$ with weights $w_s(x)$ for all $x \in \bG$ (these edges in $\tilde{\bG}$ are referred to as \emph{short-cut edges}).
\end{definition}

We note that whenever $w_s$ represents short-cut weights, then the distances from $s$ in $\bG$ and $\tilde{\bG}$ are preserved. Thus, we may assume that $\pi_{\bG}(s,x) = \pi_{\tilde{\bG}}(s,x)$ for all $x \in \bG$ and that $T_{\bG}(s) = \bT_{\tilde{\bG}}(s)$. Note, although the short-cut edges in $\tilde{\bG}$ are not used is shortest paths from $s$ to a node $x$, they might be still used in the shortest paths from $s$ in $\tilde{\bG} \setminus e$ for any edge $e$. We start by proving that \SSRP can be solved in the same time complexity even with such weights.

\wssrp

\begin{proof}
    We use a recursive approach.
    Whenever $\bG$ is small enough, i.e., $n,m=\Oh(1)$, we can use any trivial algorithm in time $\Oh(1)$.
    Otherwise, when we are not in the base case, we proceed as follows:
    \begin{enumerate}[(i)]
        \item We begin by using the Separator Lemma (\cref{lem:separator}) on $\bT \coloneqq T_{\bG}(s)$ and $s$ to find the centroid $c$ that divides $\bT$ into $\bT_1$ and $\bT_2$. Let $\bG_1$ and $\bG_2$ be the subgraphs of $\bG$ induced by the nodes in $\bT_1$ and $\bT_2$. 
        \item We proceed to construct the short-cut weights $w_1$ w.r.t. $s$ in $\bG_1$ and $w_2$ w.r.t. $c$ in $\bG_2$, respectively. 
        To this end, for every $x \in \bG_1$ and $z \in \bG_2$, we set
        \[
            w_1(x) \coloneqq \min \Big\{ w_s(x), d_{\tilde{\bG} \setminus \pi_{\bG}(s,c)}(s,x) \Big\}
            \quad \text{and} \quad
            w_2(z) \coloneqq w_s(z) - d_{\bG}(s,c).
        \]
        \item This allows us to  recurse on $\bG_1$ and $\bG_2$ with sources $s,c$ and short-cut weights $w_1$ and $w_2$, respectively,
        to compute \SSRP from $s$ and $c$ in $\tilde{\bG}_1$ and $\tilde{\bG}_2$.
        \item We also compute \SSRP from $s$ and $c$ in $\bG$ using the algorithm for \SSRP without short-cut weights that is guaranteed to exist from the problem statement.
        \item Next, define the relations $\prec$ and $\preceq$ for vertices on the path $\pi_{\bG}(s,c)$ to denote strict preceding and preceding, respectively.
        This allows us to compute for every $e = \{u,v\}\in \pi_{\bG}(s,c)$ (such that $u$ comes before $v$ in $\pi_{\bG}(s,c)$) and $x \in \bG$, the value
        \[
            h_{e,x} \coloneqq \min_{v \preceq y \preceq c} \{d_{\tilde{\bG} \setminus \pi_{\bG}(s,c)}(s,y) - d_{\bG}(s,y)\}\ + d_{\bG}(s,c) + d_{\bG \setminus e} (c,x).
        \]
        That is, $h_{e,x}$ is the length of the shortest path of the form $\pi_{\tilde{\bG} \setminus \pi_{\bG}(s,c)}(s,y) \circ \pi_{\bG}(y,c) \circ \pi_{\bG \setminus e}(c,x)$ for some $v \preceq y \preceq c$ of weight $d_{\tilde{\bG} \setminus \pi_{\bG}(s,c)}(s,y) + (d_{\bG}(s,c) - d_{\bG}(s,y)) + d_{\bG \setminus e}(c,x) = (d_{\tilde{\bG} \setminus \pi_{\bG}(s,c)}(s,y) - d_{\bG}(s,y)) + d_{\bG}(s,c) + d_{\bG \setminus e}(c,x)$.
        \item Finally, 
        we compute $d_{\tilde{\bG} \setminus e}(s,x)$ for every $e$ and $x$ using a case distinction (depicted in \cref{fig:ssrp_case}):
        \begin{subequations}
        \label{eq:wssrp}
        \begin{numcases}{d_{\tilde{\bG} \setminus e}(s,x) =}
                d_{\bG}(s,x) & \text{$x \in \bG_2 \setminus \{c\}$ and $e \notin \bG_2 \cup \pi_{\bG}(s,c)$} \label{eq:wssrp:a} \\
                \min \Big\{ \ d_{\bG \setminus e}(s,x) \ , \ d_{\bG}(s,c) + d_{\tilde{\bG}_2 \setminus e}(c,x) \Big\}& \text{$x \in \bG_2 \setminus \{c\}$ and $e \in \bG_2$}, \label{eq:wssrp:b} \\
                \min \Big\{ \ d_{\bG \setminus e}(s,x) \ , \ d_{\tilde{\bG} \setminus \pi_{\bG}(s,c)} (s,x) \ , \ h_{e,x} \ \Big\} & \text{$x \in \bG_2 \setminus \{c\}$ and $e \in \pi_{\bG}(s,c)$}, \label{eq:wssrp:c} \\
                \min \Big\{ \ d_{\bG \setminus e}(s,x) \ , d_{\tilde{\bG}_1 \setminus e}(s,x) \ \Big\} & \text{$x \in \bG_1$ and $e \notin \bG_2 \cup \pi_{\bG}(s,c)$}, \label{eq:wssrp:d} \\
                d_{\bG}(s,x) & \text{$x \in \bG_1$ and $e \in \bG_2$}, \label{eq:wssrp:e} \\
                \min \Big\{ \ d_{\bG \setminus e}(s,x) \ , d_{\tilde{\bG}_1}(s,x) \ , \ h_{e,x} \ \Big\} & \text{$x \in \bG_1$ and $e \in \pi_{\bG}(s,c)$}. \label{eq:wssrp:f}
        \end{numcases}
        \end{subequations}
    \end{enumerate}
    
    \begin{figure*}[!t]
         \centering
         \begin{subfigure}[b]{0.30\textwidth}
             \centering
             \scalebox{0.8}{\begin{tikzpicture}[scale=1, dot/.style={
        draw, 
        circle, 
        fill, 
        minimum size=1mm, 
        inner sep=0pt
    }]

    \draw[opacity=0.7, 
    pattern={Lines[angle=-45, distance={5pt/sqrt(2)}]}, 
    pattern color=blue] (0,0) -- (2,1.5) -- (4,0) -- (2,3) -- (0,0);
    \draw[line width=7pt, white] (2,3.0) -- (2,1.5);
    \draw (2,3.0) -- (2,1.5);
     
    \draw[thick] (0,0) -- (4,0) -- (2,3) node[dot, label={above:$\bf s$}] {} -- (0,0);
   
    \draw[opacity=0.7, 
    preaction={%
        pattern={Lines[angle=-45, distance={5pt/sqrt(2)}, xshift=2.5pt]}, 
        pattern color=red}
    ] (0,0) -- (4,0) -- (2,1.5) -- (0,0);
    \draw[thick] (0,0) -- (4,0) -- (2,1.5)  node[dot, label={above left:$\bf c$}] {} -- (0,0);

    \node at (0, 3) {\large $\bG$};
    \node at (0.5, 1.5) {\large $\bG_1$};
    \node at (2, -0.5) {\large $\bG_2$};
\end{tikzpicture}}
             \caption{}
              \label{fig:both:a}
         \end{subfigure}
         \hfill
         \begin{subfigure}[b]{0.30\textwidth}
             \centering
             \scalebox{0.8}{\begin{tikzpicture}[scale=1, dot/.style={
        draw, 
        circle, 
        fill, 
        minimum size=1mm, 
        inner sep=0pt
    }]

    \draw[line width=7pt, white] (2,3.0) -- (2,1.5);
    \draw (2,3.0) -- (2,1.5);
     
    \draw[thick] (0,0) -- (4,0) -- (2,3) node[dot, label={above:$\bf s$}] {} -- (0,0);
   
    \draw[opacity=0.7, 
    pattern={Lines[angle=-45, distance={5pt/sqrt(2)}]}, 
    pattern color=blue,
    preaction={%
        pattern={Lines[angle=-45, distance={5pt/sqrt(2)}, xshift=2.5pt]}, 
        pattern color=red}
    ] (0,0) -- (4,0) -- (2,1.5) -- (0,0);
    \draw[thick] (0,0) -- (4,0) -- (2,1.5)  node[dot, label={above left:$\bf c$}] {} -- (0,0);

    \node at (0, 3) {\large $\bG$};
    \node at (0.5, 1.5) {\large $\bG_1$};
    \node at (2, -0.5) {\large $\bG_2$};
\end{tikzpicture}}
             \caption{}
             \label{fig:both:b}
         \end{subfigure}
         \hfill
         \begin{subfigure}[b]{0.30\textwidth}
             \centering
             \scalebox{0.8}{\begin{tikzpicture}[scale=1, dot/.style={
        draw, 
        circle, 
        fill, 
        minimum size=1mm, 
        inner sep=0pt
    }]

    \draw[line width=7pt, white] (2,3.0) -- (2,1.5);
    \draw (2,3.0) -- (2,1.5);
    \draw[line width=3pt, blue, opacity=0.5, dash pattern= on 3pt off 5pt] (2,3.0) -- (2,1.5);
     
    \draw[thick] (0,0) -- (4,0) -- (2,3) node[dot, label={above:$\bf s$}] {} -- (0,0);
   
    \draw[opacity=0.7, 
    preaction={%
        pattern={Lines[angle=-45, distance={5pt/sqrt(2)}, xshift=2.5pt]}, 
        pattern color=red}
    ] (0,0) -- (4,0) -- (2,1.5) -- (0,0);
    \draw[thick] (0,0) -- (4,0) -- (2,1.5)  node[dot, label={above left:$\bf c$}] {} -- (0,0);

    \node at (0, 3) {\large $\bG$};
    \node at (0.5, 1.5) {\large $\bG_1$};
    \node at (2, -0.5) {\large $\bG_2$};
\end{tikzpicture}}
             \caption{}
             \label{fig:both:c}
         \end{subfigure}
         \\[15pt]
         \begin{subfigure}[b]{0.30\textwidth}
             \centering
             \scalebox{0.8}{\begin{tikzpicture}[scale=1, dot/.style={
        draw, 
        circle, 
        fill, 
        minimum size=1mm, 
        inner sep=0pt
    }]

    \draw[opacity=0.7, 
    pattern={Lines[angle=-45, distance={5pt/sqrt(2)},]}, 
    pattern color=blue,
    preaction={%
        pattern={Lines[angle=-45, distance={5pt/sqrt(2)}, xshift=2.5pt]}, 
        pattern color=red}
    ] (0,0) -- (2,1.5) -- (4,0) -- (2,3) -- (0,0);
    \draw[line width=7pt, white] (2,3.0) -- (2,1.5);
    \draw (2,3.0) -- (2,1.5);
    \draw[line width=3pt, red, opacity=0.5, dash pattern= on 3pt off 5pt, dash phase=4pt] (2,3.0) -- (2,1.5);
     
    \draw[thick] (0,0) -- (4,0) -- (2,3) node[dot, label={above:$\bf s$}] {} -- (0,0);
   
    \draw[thick] (0,0) -- (4,0) -- (2,1.5)  node[dot, label={above left:$\bf c$}] {} -- (0,0);

    \node at (0, 3) {\large $\bG$};
    \node at (0.5, 1.5) {\large $\bG_1$};
    \node at (2, -0.5) {\large $\bG_2$};
\end{tikzpicture}}
             \caption{}
             \label{fig:both:d}
         \end{subfigure}
         \hfill
         \begin{subfigure}[b]{0.30\textwidth}
             \centering
             \scalebox{0.8}{\begin{tikzpicture}[scale=1, dot/.style={
        draw, 
        circle, 
        fill, 
        minimum size=1mm, 
        inner sep=0pt
    }]

    \draw[opacity=0.7, 
    preaction={%
        pattern={Lines[angle=-45, distance={5pt/sqrt(2)}, xshift=2.5pt]}, 
        pattern color=red}
    ] (0,0) -- (2,1.5) -- (4,0) -- (2,3) -- (0,0);
    \draw[line width=7pt, white] (2,3.0) -- (2,1.5);
    \draw (2,3.0) -- (2,1.5);
    \draw[line width=3pt, red, opacity=0.5, dash pattern= on 3pt off 5pt, dash phase=4pt] (2,3.0) -- (2,1.5);
     
    \draw[thick] (0,0) -- (4,0) -- (2,3) node[dot, label={above:$\bf s$}] {} -- (0,0);
   
    \draw[opacity=0.7, 
    pattern={Lines[angle=-45, distance={5pt/sqrt(2)}]}, 
    pattern color=blue,
    ] (0,0) -- (4,0) -- (2,1.5) -- (0,0);
    \draw[thick] (0,0) -- (4,0) -- (2,1.5)  node[dot, label={above left:$\bf c$}] {} -- (0,0);

    \node at (0, 3) {\large $\bG$};
    \node at (0.5, 1.5) {\large $\bG_1$};
    \node at (2, -0.5) {\large $\bG_2$};
\end{tikzpicture}}
             \caption{}
             \label{fig:both:e}
         \end{subfigure}
         \hfill
         \begin{subfigure}[b]{0.30\textwidth}
             \centering
             \scalebox{0.8}{\begin{tikzpicture}[scale=1, dot/.style={
        draw, 
        circle, 
        fill, 
        minimum size=1mm, 
        inner sep=0pt
    }]

    \draw[opacity=0.7, 
    preaction={%
        pattern={Lines[angle=-45, distance={5pt/sqrt(2)}, xshift=2.5pt]}, 
        pattern color=red}
    ] (0,0) -- (2,1.5) -- (4,0) -- (2,3) -- (0,0);
    \draw[line width=7pt, white] (2,3.0) -- (2,1.5);
    \draw (2,3.0) -- (2,1.5);
    \draw[line width=3pt, blue, opacity=0.5, dash pattern= on 3pt off 5pt] (2,3.0) -- (2,1.5);
    \draw[line width=3pt, red, opacity=0.5, dash pattern= on 3pt off 5pt, dash phase=4pt] (2,3.0) -- (2,1.5);
     
    \draw[thick] (0,0) -- (4,0) -- (2,3) node[dot, label={above:$\bf s$}] {} -- (0,0);
   
    \draw[thick] (0,0) -- (4,0) -- (2,1.5)  node[dot, label={above left:$\bf c$}] {} -- (0,0);

    \node at (0, 3) {\large $\bG$};
    \node at (0.5, 1.5) {\large $\bG_1$};
    \node at (2, -0.5) {\large $\bG_2$};
\end{tikzpicture}}
             \caption{}
             \label{fig:both:f}
         \end{subfigure}

        \caption{
           Visual representation of the cases in \cref{eq:wssrp}. The red dashed line corresponds to the domain of $x$, and the blue dashed line corresponds to the domain of $e$.
         }
         \label{fig:ssrp_case}
    \end{figure*}
    
    \subparagraph*{Correctness:} 
    We begin by proving that $w_1$ and $w_2$ are indeed valid short-cut weights. To this end, note that for every $x \in \bG_1$ we have $d_{\tilde{\bG} \setminus \pi_{\bG}(s,c)} (s,x) \geq d_{\tilde{\bG} }(s,x) \geq d_{\bG}(s,x) = d_{\bG_1}(s,x)$. This, combined with $w_s(x) \geq d_{\bG}(s,x) = d_{\bG_1}(s,x)$ yields the validity of $w_1$. On the other hand, for all $y \in \bG_2$ we have $w_2(y) \geq d_{\bG}(s,x) = d_{\bG}(s,c) + d_{\bG_2}(c,x)$. Rearranging, we see that $w_2$ is valid as well.

    We proceed to prove that the values $d_{\tilde{\bG} \setminus e}(s,x)$ we compute are indeed correct. Clearly, \cref{eq:wssrp:a} and \cref{eq:wssrp:e} are correct, as for them $e \notin \pi_{\bG}(s,x)$ holds. Henceforth, we focus on the other cases.

    Observe that for the other cases in \cref{eq:wssrp}, we never underestimate the value of $d_{\tilde{\bG} \setminus e}(s,x)$. This is because, in each case, the terms used correspond to the weights of valid paths from $s$ to $x$ in $\tilde{\bG} \setminus e$.
    So, it remains to prove that we never overestimate. We observe that we can ignore the case where $\pi_{\tilde{\bG} \setminus e}(s,x)$ does not use any short-cut edge; in this case, $\pi_{\tilde{\bG} \setminus e}(s,x) = \pi_{\bG \setminus e}(s,x)$, and the distance $d_{\bG \setminus e}(s,x)$ is already included as a term in every case of \cref{eq:wssrp}.
    Henceforth, we assume that $\pi_{\tilde{\bG} \setminus e}(s,x)$ uses a short-cut edge.
    Note that such edge must be unique and it must be the first edge on $\pi_{\tilde{\bG} \setminus e}(s,x)$.
    
    \begin{claim}\label{claim:wssrp:1}
        \cref{eq:wssrp:c} and \cref{eq:wssrp:f} are correct.
    \end{claim}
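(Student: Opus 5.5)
We must show that in cases \cref{eq:wssrp:c} and \cref{eq:wssrp:f}, i.e.\ $e=\{u,v\}\in\pi_{\bG}(s,c)$, the computed minimum does not overestimate $d_{\tilde{\bG}\setminus e}(s,x)$. By the reductions already made, it suffices to treat the case where $P\coloneqq\pi_{\tilde{\bG}\setminus e}(s,x)$ begins with a short-cut edge $(s,z)$ and uses no other short-cut edge. My plan is a case distinction on how $P$ meets the path $\pi_{\bG}(s,c)$, and in particular on whether $P$ visits any vertex $y$ with $v\preceq y\preceq c$.

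\textbf{Case 1: $P$ avoids every edge of $\pi_{\bG}(s,c)$.} Then $P$ lies in $\tilde{\bG}\setminus\pi_{\bG}(s,c)$, so its length is at least $d_{\tilde{\bG}\setminus\pi_{\bG}(s,c)}(s,x)$. For $x\in\bG_2\setminus\{c\}$ this is exactly a term of \cref{eq:wssrp:c}. For $x\in\bG_1$, note that $w_1(x)\le d_{\tilde{\bG}\setminus\pi_{\bG}(s,c)}(s,x)$, so the short-cut edge $(s,x)$ of $\tilde{\bG}_1$ already gives $d_{\tilde{\bG}_1}(s,x)\le |P|$, which is a term of \cref{eq:wssrp:f}.

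\textbf{Case 2: $P$ uses some edge of $\pi_{\bG}(s,c)$.} Let $y$ be the \emph{last} vertex of $P$ on $\pi_{\bG}(s,c)$ that is incident to a used path edge. Alternatively, and more simply, take the first vertex $y$ of $P$ lying on $\pi_{\bG}(s,c)$ after the short-cut edge. Split further on the position of $y$.

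\emph{Case 2a: $y\preceq u$.} Replacing the prefix $s\to y$ of $P$ by $\pi_{\bG}(s,y)$ does not increase the length, since $d_{\bG}(s,y)\le$ anything. The resulting walk avoids short-cut edges and $e$ (as $e\notin\pi_{\bG}(s,y)$), so $|P|\ge d_{\bG\setminus e}(s,x)$. This is the step where care is needed: the suffix after $y$ must avoid $e$, which holds because it is a subpath of $P$.

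\emph{Case 2b: $v\preceq y\preceq c$.} Take $y$ to be the first path vertex in this range hit by $P$, so that the prefix $P[s,y]$ avoids $\pi_{\bG}(s,c)$'s edges. Hence $|P[s,y]|\ge d_{\tilde{\bG}\setminus\pi_{\bG}(s,c)}(s,y)$. If the prefix touches the path only at vertices $\preceq u$, Case 2a applies instead. For the suffix $P[y,x]$ in $\bG\setminus e$, I would argue that $|P[y,x]|\ge d_{\bG}(y,c)+d_{\bG\setminus e}(c,x)$ up to replacement. Indeed, $\pi_{\bG}(y,c)\circ\pi_{\bG\setminus e}(c,x)$ is a valid candidate, and I need the inequality $d_{\bG\setminus e}(y,x)\ge d_{\bG}(y,c)+d_{\bG\setminus e}(c,x)$ or else a reduction to another term. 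This is the main obstacle. If the shortest $y$–$x$ path in $\bG\setminus e$ passes through $u$'s side, then $P$ enters $T_{\bG}(s,u)$-side vertices, and a prefix replacement by $\pi_{\bG}(s,\cdot)$ gives $d_{\bG\setminus e}(s,x)$. Otherwise, it stays in $T_{\bG}(s,v)$. For $y$ in the subtree below $e$, distances to $c$ are unaffected by $e$ (via \cref{prp:two_paths}-type reasoning), so I would route through $c$ using $d_{\bG\setminus e}(y,x)$ versus $d_{\bG}(y,c)$, exploiting that $x\in\bG_2$ or $x\in\bG_1$. For $x\in\bG_1$ outside the subtree below $e$, $e\notin\pi_{\bG}(s,x)$ and the bound follows trivially from $d_{\bG\setminus e}(s,x)=d_{\bG}(s,x)$. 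Granting this, $|P|\ge(d_{\tilde{\bG}\setminus\pi_{\bG}(s,c)}(s,y)-d_{\bG}(s,y))+d_{\bG}(s,c)+d_{\bG\setminus e}(c,x)\ge h_{e,x}$, completing both cases.
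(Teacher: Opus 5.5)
Your Cases 1 and 2a are fine, but Case 2b has a genuine gap at the step you flagged. The inequality $d_{\bG\setminus e}(y,x)\ge d_{\bG}(y,c)+d_{\bG\setminus e}(c,x)$ is false for $x\in\bG_1$, and so is the conclusion $|P|\ge h_{e,x}$.

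Here is a counterexample. Take the tree consisting of the path $s,u,v,y,c$ with unit weights, a leaf $x$ attached to $v$, and five leaves attached to $c$. Let $e=\{u,v\}$ and split with $\bT_1=\{s,u,v,y,c,x\}$ and $\bT_2$ equal to $c$ plus its leaves (a legal $6/6$ split). Set $w_s(y)=3$ and $w_s\equiv 100$ elsewhere.
\begin{itemize}
\item $P=(s,y)\circ\{y,v\}\circ\{v,x\}$ has length $5$, uses an edge of $\pi_{\bG}(s,c)$, and its first path vertex is $y$ with $v\preceq y\preceq c$.
\item Yet $d_{\bG\setminus e}(s,x)=\infty$ and $h_{e,x}=0+4+3=7$.
\item The only term that captures $P$ is the recursive one, $d_{\tilde{\bG}_1\setminus e}(s,x)=w_1(y)+2=5$.
\end{itemize}
The second term of \cref{eq:wssrp:f} must be read with $e$ removed, as the paper's proof does; taken literally, $d_{\tilde{\bG}_1}(s,x)=d_{\bG}(s,x)$ would underestimate. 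Your argument never uses this term beyond Case 1. Your remark about $x$ outside the subtree below $e$ does not cover $x\in\bG_1\cap T_{\bG}(s,v)$ hanging off the $v$--$c$ segment, which is exactly this situation.

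The missing idea is to split Case 2b on whether $P$ visits $\bG_2$ at or after $y$.
\begin{itemize}
\item \emph{$P$ visits $\bG_2$ at or after $y$.} Let $z$ be the last such vertex. Since $z$ is a descendant of $c$ and $y$ lies on $\pi_{\bG}(v,c)$, the path $\pi_{\bG}(y,z)=\pi_{\bG}(y,c)\circ\pi_{\bG}(c,z)$ avoids $e$. Replacing $P[y,z]$ by it does not increase the length, so $|P|\ge d_{\tilde{\bG}\setminus\pi_{\bG}(s,c)}(s,y)+d_{\bG}(y,c)+d_{\bG\setminus e}(c,x)\ge h_{e,x}$. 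For \cref{eq:wssrp:c} this sub-case always applies with $z=x$, which is why your inequality does hold there.
\item \emph{$P$ meets $\bG_2$ only before $y$.} Let $z'$ be the vertex right after the last $\bG_2$-vertex. Then $P[s,z']$ uses no edge of $\pi_{\bG}(s,c)$, so $|P[s,z']|\ge d_{\tilde{\bG}\setminus\pi_{\bG}(s,c)}(s,z')\ge w_1(z')$. Since $P[z',x]$ lies in $\bG_1\setminus e$, this gives $|P|\ge d_{\tilde{\bG}_1\setminus e}(s,x)$.
\item \emph{$P$ never meets $\bG_2$.} Its short-cut edge $(s,z_0)$ has $w_s(z_0)\ge w_1(z_0)$, and the rest lies in $\bG_1\setminus e$, again giving $d_{\tilde{\bG}_1\setminus e}(s,x)$.
\end{itemize}
This is the paper's three-way case distinction. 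It is also where the choice of $w_1$ as a minimum with $d_{\tilde{\bG}\setminus\pi_{\bG}(s,c)}$ is actually used.
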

    \begin{claimproof}
        Let $e \in \pi_{\bG}(s,c)$.
        First, we sort out the case when $\pi_{\tilde{\bG} \setminus e}(s,x)$ uses no edge on $\pi_{\bG}(s,c)$. If this is the case, then we have $\pi_{\tilde{\bG} \setminus e}(s,x) = \pi_{\tilde{\bG} \setminus \pi_{\bG}(s,c)}(s,x)$. In \cref{eq:wssrp:c} such terms appears directly in the minimum, and in \cref{eq:wssrp:f} it appears in $w_1$.

        Henceforth, assume $\pi_{\tilde{\bG} \setminus e}(s,x)$ uses at least one edge on $\pi_{\bG}(s,c)$ and let $y$ be the first visited node on $\pi_{\bG}(s,c)$ other than $s$. Note that we necessarily have that $v \preceq y \preceq c$,
        as $\pi_{\bG}(s,c)$ in-between $s$ and $y$ would coincide with $\pi_{\bG}(s,y)$ and not use a short-cut edge.
        Moreover, we observe that $\pi_{\tilde{\bG} \setminus e}(s,x)$ in-between $s$ and $y$ equals to $\pi_{\tilde{\bG} \setminus \pi_{\bG}(s,c)}(s,y)$.
        We further perform a case distinction. 
        \begin{itemize}
            \item The path $\pi_{\tilde{\bG} \setminus e}(s,x)$ contains $z \in \bG_2$ that comes after $y$ on $\pi_{\tilde{\bG} \setminus e}(s,x)$. Let $z$ be the last such node on $\pi_{\tilde{\bG} \setminus e}(s,x)$. Then, we can assume that $\pi_{\tilde{\bG} \setminus e}(s,x)$ in-between $y$ and $z$ coincides with $\pi_{\bG}(y,z) = \pi_{\bG}(y,c) \circ \pi_{\bG}(c, z)$. In particular, $\pi_{\bG}(y,z)$ must pass through $c$.
            All together
            \[
                \pi_{\bG \setminus e}(s,z) = \pi_{\tilde{\bG} \setminus \pi_{\bG}(s,c)}(s,y) \circ \pi_{\bG}(y,c) \circ \pi_{\bG \setminus e}(c, z),
            \]
            which implies 
            \[
                \pi_{\bG \setminus e}(s,x) = \pi_{\bG \setminus e}(s, c) \circ \pi_{\bG \setminus e}(z, x) = \pi_{\tilde{\bG} \setminus \pi_{\bG}(s,c)}(s,y) \circ \pi_{\bG}(y,c) \circ \pi_{\bG \setminus e}(c, x), 
            \]
            because $\pi_{\bG \setminus e}(c, z) \circ \pi_{\bG \setminus e}(z, x) = \pi_{\bG \setminus e}(c, x)$.
            So, this case is captured by the term $h_{e,x}$. 
            Note that for \cref{eq:wssrp:c} this already finishes the proof of \cref{claim:wssrp:1}, as we always have $z = x$.
            Thus, in the remaining part of the proof we may only consider \cref{eq:wssrp:f} and assume $x \in \bG_1$.
            
            \item The path $\pi_{\tilde{\bG} \setminus e}(s,x)$ contains nodes $z \in \bG_2$ only before $y$. Let $z \in \bG_2$ be the last such node in $\pi_{\tilde{\bG} \setminus e}(s,x)$ and $\bG_1 \in z'$ the node immediately after which need to exist because $x \in \bG_1$.
            We derive that $\pi_{\tilde{\bG} \setminus e}(s,x)$ between $s$ and $z'$ contains no edge from $\pi_{\bG}(s,c)$.
            Thus, we have $\pi_{\tilde{\bG} \setminus e}(s,x) = \pi_{\tilde{\bG} \setminus \pi_{\bG}(s,c)}(s,z') \circ \pi_{\bG_1 \setminus e}(s,z')$. Thus, we get $d_{\tilde{\bG} \setminus e}(s,x) = d_{\tilde{\bG}_1 \setminus e}(s,x)$ because of how we defined $w_1$.
        
            \item The path $\pi_{\tilde{\bG} \setminus e}(s,x)$ contains no $z \in \bG_2$. In this case, we have $d_{\tilde{\bG} \setminus e}(s,x) = d_{\tilde{\bG}_1 \setminus e}(s,x)$. \claimqedhere
        \end{itemize}
    \end{claimproof}

    \begin{claim}\label{claim:wssrp:1}
        \cref{eq:wssrp:b} and \cref{eq:wssrp:d} are correct.
    \end{claim}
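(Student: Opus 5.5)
The plan is to use the same reduction as in the previous claim. The values on the right-hand side of \cref{eq:wssrp:b} and \cref{eq:wssrp:d} never underestimate, since each is the length of a path in $\tilde{\bG}\setminus e$. So it suffices to show they never overestimate. As already observed, we may assume that $P \coloneqq \pi_{\tilde{\bG}\setminus e}(s,x)$ uses exactly one short-cut edge $(s,y)$, and that this is its first edge. Both cases are handled by cutting $P$ at the last vertex that lies in the ``other'' part of the tree, where the removal of $e$ is harmless.

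\textbf{Case \cref{eq:wssrp:b}} ($x\in\bG_2\setminus\{c\}$, $e\in\bG_2$). Since $e\in\bT_2$, for every $z\in\bG_1$ the tree path $\pi_{\bG}(s,z)$ avoids $e$. Because short-cut weights are at least the true distances, this gives $d_{\tilde{\bG}\setminus e}(s,z)=d_{\bG}(s,z)$. Let $z$ be the last vertex of $P$ lying in $V(\bG_1)$, where $c$ counts as a vertex of $\bG_1$. After $z$, the path $P$ stays inside $\bG_2\setminus\{c\}$. I split into three subcases.
\begin{enumerate}[(i)]
\item $z=c$. Then the suffix of $P$ is a path in $\bG_2\setminus e$ from $c$ to $x$. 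Replacing the prefix by $\pi_{\bG}(s,c)$ yields length at most $d_{\bG}(s,c)+d_{\tilde{\bG}_2\setminus e}(c,x)$.
\item $z=s$ and the edge leaving $z$ is the short-cut $(s,y)$ with $y\in\bG_2\setminus\{c\}$. Then $w_s(y)=d_{\bG}(s,c)+w_2(y)$. Hence $P$ has the same length as $d_{\bG}(s,c)$ plus the path in $\tilde{\bG}_2\setminus e$ that uses the short-cut $(c,y)$ followed by the rest of $P$, and this is again bounded by the second term.
\item Otherwise the edge $\{z,z'\}$ leaving $z$ is an ordinary edge. Replacing the prefix of $P$ up to $z$ by $\pi_{\bG}(s,z)$ does not increase the length. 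The remaining suffix contains no short-cut edge, since the unique short-cut edge was first. The result is therefore a path in $\bG\setminus e$, so it is bounded by $d_{\bG\setminus e}(s,x)$.
\end{enumerate}

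\textbf{Case \cref{eq:wssrp:d}} ($x\in\bG_1$, $e\notin\bG_2\cup\pi_{\bG}(s,c)$). This case is symmetric. Now every $z\in\bG_2$ has its tree path $\pi_{\bG}(s,c)\circ\pi_{\bG}(c,z)$ avoiding $e$, so $d_{\tilde{\bG}\setminus e}(s,z)=d_{\bG}(s,z)$.
\begin{enumerate}[(i)]
\item $P$ visits no vertex of $\bG_2\setminus\{c\}$. Then $P$ lies in $\bG_1\setminus e$, except possibly for its first short-cut edge $(s,y)$ with $y\in\bG_1$. Since $w_1(y)\le w_s(y)$, this edge exists in $\tilde{\bG}_1$ with no larger weight, so $|P|\ge d_{\tilde{\bG}_1\setminus e}(s,x)$.
\item $P$ visits some vertex of $\bG_2\setminus\{c\}$. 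Let $z$ be the last such vertex. Replace the prefix of $P$ up to $z$ by $\pi_{\bG}(s,z)$. The edge leaving $z$ cannot be a short-cut edge, because $z\neq s$. The suffix after $z$ is likewise free of short-cut edges. The resulting walk therefore lies in $\bG\setminus e$, giving $|P|\ge d_{\bG\setminus e}(s,x)$.
\end{enumerate}

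The only delicate point is subcase (ii) of case \cref{eq:wssrp:b}: a short-cut edge from $s$ may jump directly into $\bG_2$. This is exactly why $w_2$ is defined as the shift $w_s-d_{\bG}(s,c)$, so that this jump is reproduced by the short-cut edge $(c,y)$ in $\tilde{\bG}_2$. Everything else follows from the uniqueness of the short-cut edge and from the fact that the removed edge $e$ lies on the other side of the separator.
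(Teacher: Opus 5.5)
Your no-overestimation argument is sound and follows the paper's route. For \cref{eq:wssrp:b}, a path that touches $V(\bG_1)\setminus\{s\}$ is rerouted through $\pi_{\bG}(s,z)$ and charged to $d_{\bG\setminus e}(s,x)$. What remains is a short-cut jump $(s,y)$ into $\bG_2\setminus\{c\}$, which the edge $(c,y)$ of weight $w_2(y)=w_s(y)-d_{\bG}(s,c)$ reproduces. For \cref{eq:wssrp:d}, the path either stays in $V(\bG_1)$, where $w_1\le w_s$ suffices, or is rerouted at its last vertex in $V(\bG_2)\setminus\{c\}$. The paper phrases this rerouting as a contradiction with the choice of canonical paths. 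Your explicit length comparison is cleaner and does not rely on tie-breaking. (In the three subcases of \cref{eq:wssrp:b}, ``at most'' and ``bounded by'' should read ``at least'' and ``bounded below by''.)

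The gap is your opening sentence, applied to \cref{eq:wssrp:d}. The short-cut edges of $\tilde{\bG}_1$ have weight $w_1(y)=\min\{w_s(y),d_{\tilde{\bG}\setminus\pi_{\bG}(s,c)}(s,y)\}$. The second term is the length of a path that avoids $\pi_{\bG}(s,c)$ but may well use $e$, and case \cref{eq:wssrp:d} allows this since there $e\notin\pi_{\bG}(s,c)$. So $d_{\tilde{\bG}_1\setminus e}(s,x)$ need not be the length of any path in $\tilde{\bG}\setminus e$.

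Concretely, take the unit-weight tree with edges $\{s,a\},\{a,y\},\{s,c\},\{c,p\},\{p,q\},\{q,r\}$ and the valid centroid split $V(\bT_1)=\{s,a,y,c\}$, $V(\bT_2)=\{c,p,q,r\}$. Set $w_s\equiv 100$, and let $e=\{a,y\}$ and $x=y$. Then $w_1(y)=2$, so the right-hand side of \cref{eq:wssrp:d} equals $2$, whereas $d_{\tilde{\bG}\setminus e}(s,y)=100$.

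The paper asserts the same non-underestimation just before this claim; its own proof of the claim only treats overestimation. So no better proof can close this gap: with $w_1$ as defined, \cref{eq:wssrp:d} is false, and the construction must change. Note that your case~(i) shows \cref{eq:wssrp:d} only ever needs $w_1\le w_s$; the extra term exists for \cref{eq:wssrp:f}.

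One possible repair, which I checked only in outline, is to replace that extra term by $\min_{z\in V(\bG_2)\setminus\{c\}}\{d_{\tilde{\bG}\setminus\pi_{\bG}(s,c)}(s,z)+w_{\bG}(z,x)\}$. This records only excursions that re-enter $\bG_1$ from $\bG_2\setminus\{c\}$, which is what \cref{eq:wssrp:f} needs.
\begin{itemize}
\item For a tree edge $e\in\bT_1\setminus\pi_{\bG}(s,c)$, your identity $d_{\tilde{\bG}\setminus e}(s,z)=d_{\bG}(s,z)$ together with $\{z,x\}\neq e$ makes such a short-cut cost at least $d_{\tilde{\bG}\setminus e}(s,x)$.
\item For $e\notin T_{\bG}(s)$, the true value is $d_{\bG}(s,x)$, which no term undercuts.
\end{itemize}
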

    \begin{claimproof}
         First, we prove \cref{eq:wssrp:b}.
         Let $e \in \bG_2$ and $x \in \bG_2 \setminus \{c\}$.
         We note that $\pi_{\tilde{\bG} \setminus e}(s,x)$ uses no node in $\bG_1$ other than $s$.
         Otherwise, if $\pi_{\tilde{\bG} \setminus e}(s,x)$ contains $z \in \bG_1$ such that $z \neq s$, then $\pi_{\tilde{\bG} \setminus e}(s,x)$ in-between $s$ and $z$ coincides with $\pi_{\bG}(s,z)$ and no short-cut edge is used, a contradiction.
         Consequently, $\pi_{\tilde{\bG} \setminus e}(s,y) = \{s, y\} \circ \pi_{\bG_2 \setminus e}(y,x)$,
         for some $y \in \bG_2$ such that $\{s,y\}$ is a short-cut edge.
         Since $\tilde{\bG_2}$ contains a short-cut edge $\{s,y\}$ of weight $w_s(y) - d_{\bG \setminus e}(s,c)$,
         we conclude that $d_{\bG}(s,c) + d_{\tilde{\bG}_2 \setminus e}(c,x)$ gives $d_{\tilde{\bG} \setminus e}(s,x)$.

        The proof for \cref{eq:wssrp:d} is almost identical.  Let $x \in \bG_1$ and $e \notin \bG_2 \cup \pi_{\bG}(s,x)$.
        Using a similar argument, we note that $\pi_{\tilde{\bG} \setminus e}(s,x)$ uses no node in $\bG_2$.
        Consequently, $\pi_{\tilde{\bG} \setminus e}(s,y) = \{s, y\} \circ \pi_{\bG_1 \setminus e}(y,x)$,
        for some $y \in \bG_1$ such that $\{s,y\}$ is a short-cut edge.
        Since $\tilde{\bG}_1$ contains a short-cut edge $\{s,y\}$ of weight $w_s(y)$,
        we conclude that $d_{\tilde{\bG}_1 \setminus e}(s,x)$ gives $d_{\tilde{\bG} \setminus e}(s,x)$.
    \end{claimproof}
    
    \subparagraph*{Runtime:} 
    Apart from the recursive calls and the calls to \SSRP without short-cut weights, we execute the following subroutines. 
    In time $\Oh(n^2)$, we use Dijkstra's algorithm in $\bG$ and $\tilde{\bG} \setminus \pi_{\bG}(s,c)$ from $s$ to obtain access to distances of the form $d_{\bG}(s, \cdot)$ and $d_{\tilde{\bG} \setminus \pi_{\bG}(s,c)}(s, \cdot)$.
    Moreover, to compute $h_{e,x}$, we first 
    compute $\min_{v \preceq y \preceq c} \{d_{\bG \setminus \pi_{\bG}(s,c)}(s,y) - d_{\bG}(s,y)\}$ for each $v \in \pi_{\bG}(s,c)$ naively in time $\Oh(n)$. This allows us to compute each of the $\Oh(n^2)$ terms in $\Oh(1)$ time.

    Let $n_1,m_1$ and $n_2,m_2$ be the number of nodes and edges in $\bG_1$ and $\bG_2$, respectively. Note that $m_1 + m_2 \leq m$ and $n_1, n_2 \leq n + 1$ and $n_1,n_2 \leq 2n/3$.
    Together, this yields a recursion of the type:
    \begin{align*}
        \sT(n,m,W) \leq \sT(n_1,m_1,W) + \sT(n_2,m_1,W) + \Oh(n^2) + 2\sT_{\SSRP}(n,m,W).
    \end{align*}
    Using a standard induction proof and the properties of $\sT_{\SSRP}$, the runtime follows.
\end{proof}

In \cref{subsec:both:2}, when applying \cref{thm:wssrp}, we do not use it directly on a graph $\bG$ with valid shortcut weights $w_s$ with respect to $s$. Instead, we apply it to arbitrary weights $s$, but the instance is specific enough that it can be reduced to the former case. This is captured in the following proposition.

\begin{proposition} \label{lem:wssrp2}
    Let $\bG$ be an undirected (resp. directed) $n$-node graph with a set of positive weights $W$. Let $s \in \bG$ be a node, and let $w_s$ denote weights w.r.t. $s$ (where $w_s(x)$ is not necessarily in $W$ and $w_s$ are not necessarily short-cut weights). Furthermore, let $\tilde{\bG}$ be the graph $\bG$ augmented by $w_s$.

    Suppose we are given $X \subseteq V(\bG)$ and $F \subseteq E(\bG)$ such that for all $(x,e) \in X \times F$ such that 
    $\pi_{\tilde{\bG}}(s,x)$ uses a short-cut edge we have $e \notin \pi_{\tilde{\bG}}(s,x)$.
    Then, letting $g(n,m,W),\rho$ be as in \cref{thm:both}, we can compute $d_{\tilde{\bG} \setminus e}(s,x)$ for all $(x,e) \in X \times F$ in time $\Oh(g(n,m, W) + n^2)$ if $\rho < 1$, and $\Oh(g(n,m, W) \log n + n^2)$ otherwise.
\end{proposition}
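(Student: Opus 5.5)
The plan is to reduce Proposition~\ref{lem:wssrp2} to Theorem~\ref{thm:wssrp} by replacing $w_s$ with a genuine short-cut weight function that yields the same answers on $X\times F$. The main tool is the graph $\tilde{\bG}$ itself. First run Dijkstra from $s$ in $\tilde{\bG}$, in $\Oh(n^2)$ time, to obtain $d_{\tilde{\bG}}(s,\cdot)$ and the canonical paths $\pi_{\tilde{\bG}}(s,\cdot)$. Then define $\bG'$ as $\bG$ with every vertex $x$ joined to $s$ by an edge of weight $w'(x) \coloneqq d_{\tilde{\bG}}(s,x)$, and solve \SSRP from $s$ in the graph $\bG'$ augmented by $w'$.

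The augmented graph is not quite a valid instance, because $w'$ need not satisfy $w'(x)\ge d_{\bG}(s,x)$. So we instead view $\tilde{\bG}$ as the base graph: its "original" part is $\bG$ and the short-cut edges are exactly the problematic ones. We separate them as follows. Let $P\subseteq V$ be the set of vertices $x$ for which $\pi_{\tilde{\bG}}(s,x)$ uses a short-cut edge; for $x\notin P$ we have $d_{\tilde{\bG}}(s,x)=d_{\bG}(s,x)$. Define
\[
w'(x)\coloneqq
\begin{cases}
d_{\tilde{\bG}}(s,x) & x\in P,\\
\max\{w_s(x), d_{\bG}(s,x)\} & x\notin P.
\end{cases}
\]
It remains to check two things.

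\textbf{Validity.} Every $x\in P$ satisfies $w'(x)=d_{\tilde{\bG}}(s,x)$, and $d_{\tilde{\bG}}(s,x)$ may be smaller than $d_{\bG}(s,x)$. This is the main obstacle. To handle it, consider the graph $\bG^\ast$ obtained from $\bG$ by raising each short-cut weight to the true distance, and work with respect to $\tilde{\bG}$ restricted to the tree $T_{\tilde{\bG}}(s)$. In that tree, the short-cut edges that are actually used are tree edges from $s$. We then apply Theorem~\ref{thm:wssrp} to the graph $\bG''$ consisting of $\bG$ together with these used short-cut edges, viewed as ordinary edges. Unfortunately this breaks the weight set $W$. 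So the alternative I would try first is to rerun the recursive proof of Theorem~\ref{thm:wssrp} verbatim. Its only use of the short-cut property was to guarantee $T_{\bG}(s)=T_{\tilde{\bG}}(s)$, and everything else works with $T_{\tilde{\bG}}(s)$, with short-cut edges used in the tree treated as in case~\eqref{eq:wssrp:a}.

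\textbf{Correctness of answers.} For $(x,e)\in X\times F$ with $x\in P$, the hypothesis gives $e\notin\pi_{\tilde{\bG}}(s,x)$, so $d_{\tilde{\bG}\setminus e}(s,x)=d_{\tilde{\bG}}(s,x)$ and this value is available directly from Dijkstra. For $x\notin P$, we must argue that replacement paths are unchanged when every $w_s(y)$ is replaced by $w'(y)$. For $y\in P$, lowering the weight only shortcuts to $d_{\tilde{\bG}}(s,y)$ along a tree path that may contain $e$; this is exactly the delicate point. The fallback is to decompose each path at its last short-cut-reachable vertex and use the recursion with centroid splits on $T_{\tilde{\bG}}(s)$. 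The runtime analysis is identical to that of Theorem~\ref{thm:wssrp}, plus $\Oh(n^2)$ for the Dijkstra runs and the table lookups.
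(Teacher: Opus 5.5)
Your case $x\in P$ is right; it is the only place the hypothesis on $X\times F$ is needed, and your $P$ is exactly the complement of the set $H$ used in the paper. But there is a genuine gap at the point you flag yourself. For $y\in P$ one can have $d_{\tilde\bG}(s,y)<d_{\bG}(s,y)$, so your $w'$ is not a short-cut weight function for $\bG$, Theorem~\ref{thm:wssrp} does not apply, and none of your workarounds fixes this. The graph $\bG''$ leaves the weight set $W$, as you note. The claim that the proof of Theorem~\ref{thm:wssrp} uses the short-cut property only through $T_{\bG}(s)=T_{\tilde\bG}(s)$ is false:
\begin{itemize}
\item case~\eqref{eq:wssrp:a} outputs $d_{\bG}(s,x)$;
\item the argument for \eqref{eq:wssrp:c} and \eqref{eq:wssrp:f} shows that the first vertex $y$ of $\pi_{\bG}(s,c)$ on the replacement path satisfies $v\preceq y$, precisely because a short-cut edge can never beat a tree prefix;
\item with invalid weights, the sub-instance on $\bG_1$ again carries non-short-cut weights.
\end{itemize}
So a ``verbatim'' rerun amounts to proving a stronger theorem by induction without checking any of its cases. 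It also never uses the hypothesis on $X\times F$, which is what has to be exploited.

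The missing idea is to \emph{delete} $P$ instead of adding short-cuts into it. For $y\notin P$, every vertex of the tree path $\pi_{\tilde\bG}(s,y)$ lies outside $P$ and no short-cut edge is used. So for the subgraph $\bH$ of $\bG$ induced by $V\setminus P$ you get $d_{\bH}(s,y)=d_{\bG}(s,y)=d_{\tilde\bG}(s,y)$. By the triangle inequality in $\tilde\bG$, $w''(y)\coloneqq\min\{w_s(y),\min_{z\in P}(d_{\tilde\bG}(s,z)+w_{\bG}(z,y))\}$ is then a genuine short-cut weight function on $\bH$. Now take $x\notin P$:
\begin{itemize}
\item If $e\notin E(\bH)$, then $e$ is not on $\pi_{\tilde\bG}(s,x)$, so the answer is $d_{\tilde\bG}(s,x)$.
\item If $e\in E(\bH)$, your delicate point disappears. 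Every $\pi_{\tilde\bG}(s,z)$ with $z\in P$ is a short-cut edge followed by $\bG$-edges with both endpoints in $P$, so it avoids $e$. Splitting $\pi_{\tilde\bG\setminus e}(s,x)$ at its last vertex $z\in P$ (your fallback) gives $d_{\tilde\bG\setminus e}(s,x)=d_{\tilde\bH\setminus e}(s,x)$, where $\tilde\bH$ is $\bH$ augmented by $w''$.
\end{itemize}
So one Dijkstra run in $\tilde\bG$ plus a single call to Theorem~\ref{thm:wssrp} on $\tilde\bH$ gives the bound, since $\tilde\bH$'s base graph is a subgraph of $\bG$ with weights in $W$. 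No centroid recursion on $T_{\tilde\bG}(s)$ is needed. This is the paper's proof.
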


\begin{proof} 
    Let $\tilde{\bT} \coloneqq \bT_{\tilde{\bG}}(s)$
    and let $H \subseteq V(\tilde{\bT})=V(\bG)$ be the set of nodes reachable in $\tilde{\bT}$ from $s$ using no short-cut edges (in particular, $s \in H$).
    We define $\bH$ as the subgraph of $\bG$ induced by the node set $H$.
    Moreover, we define the function $w_s'$ as $w_s'(y) \coloneqq \min \{w_s(y), \min_{z \notin H} d_{\tilde{\bG}}(s,z) + w_{\bG}(z,y)\}$ for $y \in H$.
    Let $\tilde{\bH}$ be $\bH$ augmented by $w_s'$.
    Via \cref{thm:wssrp}, we get the \SSRP values in $\tilde{\bH}$ from $s$.
    
    Finally, for every $x \in X$ and $e \in F$ we compute the desired values as:
    \begin{subnumcases}{d_{\tilde{\bG} \setminus e}(s,x) =}
        d_{\tilde{\bG}}(s,x) & \text{$x \notin H$} \label{eq:wssrp2:a} \\
         d_{\tilde{\bG}}(s,x) & \text{$x \in H$ and $e \notin \bH$}, \label{eq:wssrp2:b} \\
         d_{\tilde{\bH} \setminus e}(s,x) & \text{$x \in H$ and $e \in \bH$}.
         \label{eq:wssrp2:c}
    \end{subnumcases}
    \subparagraph*{Correctness:}
    First, we need to prove that we can indeed apply \cref{thm:wssrp}.
    To this end, we need to show that $w_s'$ are valid short-cut weights in $\bH$ w.r.t. $s$. To this end, notice that for every $y \in H$, we have $w_s(y) \geq d_{\bG}(s,y) = d_{\bH}(s,y)$ and $d_{\tilde{\bG}}(s,z) + w(z,y) \geq d_{\tilde{\bG}}(s,y) = d_{\bG}(s,y) = d_{\bH}(s,y)$ for every $z \notin H$. We conclude that $w_s'(y) \geq d_{\bH}(s,y)$ for every $y \in H$.
    
    We proceed to prove that $d_{\tilde{\bG} \setminus e}(s,x)$ is computed correctly.
    By our assumption, we get that \cref{eq:wssrp2:a} is correct.
    Moreover, notice that if $x \in H$ and $e \notin \bH$, then $e \notin \pi_{\tilde{\bG}}(s,x)$ so also \cref{eq:wssrp2:b} is correct.
    Lastly, for \cref{eq:wssrp2:c}, we perform a case distinction:
    \begin{itemize}
        \item If $\pi_{\tilde{\bG} \setminus e}(s,x)$ contains a node $z \notin H$, then let $z$ be the last such node and let $y$ be the node immediately after. From $e \in \bH$ follows that $\pi_{\tilde{\bG} \setminus e}(s,z) = \pi_{\tilde{\bG}}(s,z)$, meaning that $\pi_{\tilde{\bG} \setminus e}(s,x) = \pi_{\tilde{\bG}}(s,z) \circ (x,y) \circ \pi_{\bH \setminus e}(y,x)$. By our construction of $w'_s(x)$ follows that we compute \cref{eq:wssrp2:c} correctly.
        \item  If $\pi_{\tilde{\bG} \setminus e}(s,x)$ contains no node $z \notin H$, then we get the correctness of \cref{eq:wssrp2:c} directly.
    \end{itemize}    

    \subparagraph*{Runtime:} Apart from using \cref{thm:wssrp}, we need to use Dijkstra's algorithm once from $s$ in $\tilde{\bG}$.
\end{proof}

\subsection{Reducing the case $e \in \pi_{\bG}(s,t)$ to \SSRP with Short-cut Weights}
\label{subsec:both:2}

The proof of the divide-et-impera algorithm for the case $e \in \pi_{\bG}(s,t)$ is 
split up in \cref{lem:both_divide}, \cref{lem:both_conquer}, and \cref{thm:both}. 
For the divide step, we first select an edge $f = \{a,b\} \in \pi_{\bG}(s,t)$ such that $a$ is closer than $b$ to $s$ in $\pi_{\bG}(s,t)$. Then, we construct two subgraphs that preserve $\pi_{\bG \setminus e,e'}(s,t)$ for $e,e' \in \pi_{\bG}(s,a)$ and $e,e' \in \pi_{\bG}(b,t)$, respectively.

\begin{figure*}[!t]
         \centering
         \begin{subfigure}[b]{0.30\textwidth}
             \centering
             \scalebox{0.8}{\begin{tikzpicture}[scale=0.5, dot/.style={
        draw, 
        circle, 
        fill, 
        minimum size=1mm, 
        inner sep=0pt
    }]

    \node[dot, label={above:$s$}, red] (n1) at (1,0) {};
    \node[dot, label={above:$a$}, red] (n5) at (4.5,0) {};
    \node[dot, label={above:$b$}, red] (n6) at (5.5,0) {};
    \node[dot, red] (n11) at (7.5,0) {};
    \node[dot] (n12) at (8,-1) {};
    \node[dot] (n13) at (6,-1) {};
    \node[dot, label={above:$t$}, red] (n9) at (9,0) {};
    \node[dot, red] (n2) at (2.5,0) {};
    \node[dot] (n3) at (2.5,-1) {};
    \node[dot] (n4) at (2.5,-2) {};
    \node[dot] (n7) at (1,-1) {};
    \node[dot] (n7b) at (1.5,-1.5) {};
    \node[dot] (n8) at (3.5,-0.5) {};
    \node[dot] (n10) at (3.5,-1.5) {};
    
    \draw (n1) -- (n7) -- (n7b) -- (n4);
    \draw (n4) -- (n3);
    \draw (n3) -- (n8) -- (n5);
    \draw (n4) -- (n10) -- (n6);
    \draw (n9) -- (n12);
    \draw (n11) -- (n13) -- (n12);
    \draw (n13) -- (n10);
    
    \draw[red] (n1) -- (n2) -- (n5) -- (n6) -- (n11) -- (n9);
\end{tikzpicture}}
             \caption{The construction of $\bA$ depends on an edge $\{a,b\}$ on the shortest path $\pi_{\bG}(s,t)$ (in red).}
              \label{fig:aexample:a}
         \end{subfigure}
         \hfill
         \begin{subfigure}[b]{0.30\textwidth}
             \centering
             \scalebox{0.8}{\begin{tikzpicture}[scale=0.5, dot/.style={
        draw, 
        circle, 
        fill, 
        minimum size=1mm, 
        inner sep=0pt
    }]

    \node[dot, label={above:$s$}, red] (n1) at (1,0) {};
    \node[dot, label={above:$a$}, red] (n5) at (4.5,0) {};
    \node[dot, label={above:$b$}] (n6) at (5.5,0) {};
    \node[dot] (n11) at (7.5,0) {};
    \node[dot] (n12) at (8,-1) {};
    \node[dot] (n13) at (6,-1) {};
    \node[dot, label={above:$t$}] (n9) at (9,0) {};
    \node[dot, red] (n2) at (2.5,0) {};
    \node[dot, red] (n3) at (2.5,-1) {};
    \node[dot] (n4) at (2.5,-2) {};
    \node[dot, red] (n7) at (1,-1) {};
    \node[dot] (n7b) at (1.5,-1.5) {};
    \node[dot, red] (n8) at (3.5,-0.5) {};
    \node[dot] (n10) at (3.5,-1.5) {};
    
    \draw[teal] (n1) -- (n7);
    \draw (n7) -- (n7b);
    \draw[teal] (n7b) -- (n4);
    \draw (n4) -- (n3);
    \draw[teal] (n3) -- (n8) -- (n5);
    \draw[teal] (n4) -- (n10) -- (n6);
    \draw[teal] (n9) -- (n12);
    \draw[teal] (n11) -- (n13);
    \draw (n13) -- (n12);
    \draw (n13) -- (n10);
    
    \draw[teal] (n1) -- (n2) -- (n5) -- (n6) -- (n11) -- (n9);
\end{tikzpicture}}
             \caption{We consider the shortest path tree $T_{\bG}(t)$ (in teal) and the nodes on such tree below $a$, i.e., the nodes in $T_{\bG}(t,a)$ (in red).}
             \label{fig:aexample:b}
         \end{subfigure}
         \hfill
         \begin{subfigure}[b]{0.30\textwidth}
             \centering
             \scalebox{0.8}{\begin{tikzpicture}[scale=0.5, dot/.style={
        draw, 
        circle, 
        fill, 
        minimum size=1mm, 
        inner sep=0pt
    }]

    \node[dot, label={above:$s$}, red] (n1) at (1,0) {};
    \node[dot, label={above:$a$}, red] (n5) at (4.5,0) {};
    \node[dot, label={above:$b$}] (n6) at (5.5,0) {};
    \node[dot] (n11) at (7.5,0) {};
    \node[dot] (n12) at (8,-1) {};
    \node[dot] (n13) at (6,-1) {};
    \node[dot, label={above:$t$}] (n9) at (9,0) {};
    \node[dot, red] (n2) at (2.5,0) {};
    \node[dot, red] (n3) at (2.5,-1) {};
    \node[dot, red] (n4) at (2.5,-2) {};
    \node[dot, red] (n7) at (1,-1) {};
    \node[dot, red] (n7b) at (1.5,-1.5) {};
    \node[dot, red] (n8) at (3.5,-0.5) {};
    \node[dot] (n10) at (3.5,-1.5) {};
    
    \draw[red] (n1) -- (n7);
    \draw[red] (n7) -- (n7b);
    \draw[red] (n7b) -- (n4);
    \draw[red] (n4) -- (n3);
    \draw[red] (n3) -- (n8) -- (n5);
    \draw (n4) -- (n10) -- (n6);
    \draw (n9) -- (n12);
    \draw (n11) -- (n13);
    \draw (n13) -- (n12);
    \draw (n13) -- (n10);
    
    \draw[red] (n1) -- (n2) -- (n5);
    \draw (n5) -- (n6) -- (n11) -- (n9);
\end{tikzpicture}}
             \caption{Finally we consider all nodes on a shortest path from $s$ to any node in $T_{\bG}(t,a)$. The induced subgraph by such nodes is $\bA$ (in red).}
             \label{fig:aexample:c}
         \end{subfigure}
        
        \caption{
           The figure illustrates how the graph $\bA$ is constructed. 
           The construction of $\bB$ is symmetric w.r.t. $\{a,b\}$.
         }
         \label{fig:aexample}
\end{figure*}

\begin{lemma}\label{lem:both_divide}
    Let $\bG$ be an undirected $n$-node graph with a set of positive weights $W$. Let $s,t \in \bG$ be nodes, and let $w_s, w_t$ denote weights with respect to $s$ and $t$, respectively (where $w_s(x)$ and $w_t(x)$ are not necessarily in $W$). Furthermore, let $\tilde{\bG}$ be the graph $\bG$ augmented by $w_s$ and $w_t$.

     For a given edge $f = \{a,b\} \in \pi_{\bG}(s,t)$ such that $a$ is closer than $b$ to $s$ in $\pi_{\bG}(s,t)$, let $\bA$ and $\bB$ be the subgraphs of $\bG$ induced by the set of vertices
    \begin{align*}
         \{x \mid x \in \pi_{\bG}(s,y), y \in T_{\bG}(t,a)\} \text{ and } \{x \mid x \in \pi_{\bG}(t,y), y \in T_{\bG}(s,b)\}, \text{ respectively.}
    \end{align*}
    (\Cref{fig:aexample} illustrates an example of the construction of $\bA$.) 
    
    Then, all of the following hold:
    \begin{enumerate}[(a)]
        \item $V(\bA) \cap V(\bB) = \emptyset$; \label{lem:both_divide:a}
        \item In time $\Oh(n^2)$ we can find short-cut weights $w_{\bA,s}, w_{\bA,a}$ with respect to $s$ and $a$ in $\bA$, such that $d_{\tilde{\bG} \setminus e,e'}(s,t) = d_{\tilde{\bA} \setminus e,e'}(s,a) + d_{\bG}(a,t)$ for all $e,e' \in \pi_{\bG}(s,a)$, where $\tilde{\bA}$ is $\bA$ augmented by $w_{\bA,s}, w_{\bA,a}$; \label{lem:both_divide:b}
        \item In time $\Oh(n^2)$ we can find short-cut weights $w_{\bB,b}, w_{\bB,t}$ with respect to $b$ and $t$ in $\bB$, such that $d_{\tilde{\bG} \setminus e,e'}(s,t) = d_{\bG}(s,b) + d_{\tilde{\bB} \setminus e,e'}(b,t)$ for all $e,e' \in \pi_{\bG}(b,t)$, where $\tilde{\bB}$ is $\bB$ augmented by $w_{\bB,b}, w_{\bB,t}$. \label{lem:both_divide:c}
    \end{enumerate}
\end{lemma}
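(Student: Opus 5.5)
The plan is to prove (a) with \cref{prp:two_paths}, and then to prove (b) by an explicit construction of the two short-cut weight functions; (c) follows from (b) by the symmetric construction with the roles of $s,t$ and $a,b$ exchanged.

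\textbf{Part (a).} By \cref{prp:two_paths} applied to $f=\{a,b\}$, we have $T_{\bG}(s,b)\cap T_{\bG}(t,a)=\emptyset$. Suppose some $x$ lies in both sets. Then $x\in\pi_{\bG}(s,y)$ for some $y\in T_{\bG}(t,a)$, and $x\in\pi_{\bG}(t,y')$ for some $y'\in T_{\bG}(s,b)$. Subpath closure and consistency of the canonical paths with the trees give two facts. First, if $x\in T_{\bG}(s,b)$, then $y\in T_{\bG}(s,b)$ as well, since $y$ lies below $x$ in $T_{\bG}(s)$. Second, if $x\in T_{\bG}(t,a)$, then $y'\in T_{\bG}(t,a)$ for the symmetric reason. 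So I will distinguish cases on whether $x\in T_{\bG}(s,b)$ and whether $x\in T_{\bG}(t,a)$, i.e.\ whether $f$ lies on $\pi_{\bG}(s,x)$ or on $\pi_{\bG}(t,x)$. Since $f$ cannot lie on both (that is exactly \cref{prp:two_paths}), suppose $f\notin\pi_{\bG}(t,x)$. Then $\pi_{\bG}(t,y)$ and $\pi_{\bG}(t,x)\circ\pi_{\bG}(x,y)$ are two shortest $t$-$y$ walks, one using $f$ and one not. I will turn this into a contradiction with the tree structure; the same argument handles the other case symmetrically.

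\textbf{Part (b), the weights.} I will define
\[
w_{\bA,a}(x)\coloneqq\min_{z\notin V(\bA)}\{w_{\bG}(x,z)+d_{\tilde{\bG}\setminus \pi_{\bG}(s,a)}(z,t)\}-d_{\bG}(a,t),
\]
together with the incoming $t$-shortcut $w_t(x)-d_{\bG}(a,t)$, and
\[
w_{\bA,s}(x)\coloneqq\min\{w_s(x),\ \min_{z\notin \bA}\, d_{\tilde{\bG}}(s,z)+w_{\bG}(z,x)\}.
\]
Computing these takes $\Oh(n^2)$ time using Dijkstra in the relevant graphs. Three facts are needed.
\begin{enumerate}
\item $w_{\bA,a}$ is a short-cut weight. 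For $z\notin T_{\bG}(t,a)$ we have $d_{\bG}(z,t)+w(x,z)\ge d_{\bG}(x,t)$. The key structural point is that every $x\in\bA$ routes to $t$ through $a$ whenever its $t$-path does not leave $\bA$. I expect to need $T_{\bG}(t,a)\subseteq \bA$ and the closure of $\bA$ under $s$-paths here.
\item $w_{\bA,s}$ is a short-cut weight. This holds because $\bA$ is closed under $\pi_{\bG}(s,\cdot)$, so $d_{\bA}(s,x)=d_{\bG}(s,x)$.
\item The key identity. Take $e,e'\in\pi_{\bG}(s,a)$ and decompose $\pi_{\tilde{\bG}\setminus e,e'}(s,t)$. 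Let $z$ be the first vertex outside $\bA$ after the last segment inside $\bA$. From $z$ onward the path avoids $\pi_{\bG}(s,a)$, or else it can be replaced. This prefix/suffix decomposition is captured by the shortcut edge at $a$. Any excursion out of $\bA$ that returns is handled by the edges into $\bA$ (for an excursion from $s$) or by noting that it is not needed.
\end{enumerate}

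\textbf{Main obstacle.} The hardest step will be the key identity in (b). The difficulty is justifying that a path leaving $\bA$ and later returning can be charged to a single shortcut edge. In particular, vertices outside $\bA$ may still have $\pi_{\bG}(z,t)$ passing through $e$ or $e'$. I will first try to show that for $z\notin\bA$ the path $\pi_{\bG}(z,t)$ avoids $\pi_{\bG}(s,a)$. This should follow because $z\notin T_{\bG}(t,a)$ and the $t$-tree below $a$ contains all of $\pi_{\bG}(s,a)$. If that holds, the suffix after $z$ is simply $\pi_{\bG}(z,t)$, and $w_{\bA,a}$ can use $d_{\bG}(z,t)$ directly.
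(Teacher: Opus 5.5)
\textbf{Part (a): the main case is unsupported.} Your two tree observations correctly dispose of $x\in T_{\bG}(s,b)$ and $x\in T_{\bG}(t,a)$. The remaining case $x\notin T_{\bG}(s,b)\cup T_{\bG}(t,a)$ is where the content lies, and there you assert that $\pi_{\bG}(t,x)\circ\pi_{\bG}(x,y)$ is a shortest $t$--$y$ walk. Nothing you have supports this. You only know $x\in\pi_{\bG}(s,y)$ and $x\in\pi_{\bG}(t,y')$, and these say nothing about $d_{\bG}(t,x)+d_{\bG}(x,y)$ versus $d_{\bG}(t,y)$. Even granted, ``one shortest walk uses $f$, another does not'' is not a contradiction without uniqueness.

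Tree structure does not close this case; you need a metric argument that uses four facts at once:
\begin{itemize}
\item $d(s,x)+d(x,y)\le d(s,a)+d(a,y)$, since $x$ lies on $\pi_{\bG}(s,y)$;
\item $d(t,x)+d(x,y')\le d(t,b)+d(b,y')$, since $x$ lies on $\pi_{\bG}(t,y')$;
\item $d(s,a)+w(a,b)+d(b,y')\le d(s,x)+d(x,y')$, since $y'\in T_{\bG}(s,b)$;
\item $d(t,b)+w(a,b)+d(a,y)\le d(t,x)+d(x,y)$, since $y\in T_{\bG}(t,a)$.
\end{itemize}
Adding the first two bounds a sum above by some $X$; adding the last two bounds the same sum below by $X+2w_{\bG}(a,b)$. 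This forces $w_{\bG}(a,b)\le 0$. This is the paper's argument for that case, and it actually needs no case split.

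\textbf{Part (b): both weight functions are wrong.}

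Your $w_{\bA,a}$ is defined on all of $V(\bA)$. For $x\in V(\bA)\setminus T_{\bG}(t,a)$ one typically has $d_{\bG}(x,t)<d_{\bG}(x,a)+d_{\bG}(a,t)$, so the short-cut condition fails. Concretely, take the path $s,a,b,t$ with unit weights plus edges $\{a,q\}$ of weight $2$, $\{s,p\}$ and $\{p,r\}$ of weight $1$, $\{p,q\}$ of weight $1.5$, and $\{r,t\}$ of weight $2.9$. Then $q\in T_{\bG}(t,a)$ and $\pi_{\bG}(s,q)=s,p,q$, so $p\in V(\bA)\setminus T_{\bG}(t,a)$. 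Also $r\notin V(\bA)$, and $w_{\bA,a}(p)\le 1.9<2=d_{\bA}(a,p)$.

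Your extra term $\min_{z\notin\bA}d_{\tilde{\bG}}(s,z)+w_{\bG}(z,x)$ in $w_{\bA,s}$ breaks the identity itself. It compresses $\pi_{\tilde{\bG}}(s,z)$, which can run along $\pi_{\bG}(s,a)$ through $e$ or $e'$, yet the resulting short-cut edge survives in $\tilde{\bA}\setminus e,e'$. Take the path $s,u,v,a,b,t$ with unit weights plus $\{v,r\}$ of weight $1$ and $\{r,t\}$ of weight $2.5$, with $w_s\equiv w_t\equiv+\infty$. Then $V(\bA)=\{s,u,v,a\}$, $r\notin V(\bA)$, and you create a short-cut $\{s,v\}$ of weight $4$. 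For $e=\{s,u\}$, $e'=\{u,v\}$ your right-hand side equals $7$, although $s$ is isolated in $\tilde{\bG}\setminus e,e'$.

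\textbf{The fix.} The missing idea is where to cut: take the \emph{first} vertex $y$ of $\pi_{\tilde{\bG}\setminus e,e'}(s,t)$ outside $T_{\bG}(t,a)$, not the last exit from $\bA$.
\begin{itemize}
\item The prefix then lies in $T_{\bG}(t,a)\subseteq V(\bA)$, so there are no excursions to charge and $w_{\bA,s}\coloneqq w_s$ suffices.
\item The suffix can be replaced by $\pi_{\bG}(y,t)$, which avoids $\pi_{\bG}(s,a)$ (the fact you correctly identified).
\item $w_{\bA,a}$ should be supported only on $T_{\bG}(t,a)$, as $\min\{w_t(x),\min_{y\notin T_{\bG}(t,a)}d_{\bG}(t,y)+w_{\bG}(y,x)\}-d_{\bG}(a,t)$, and $+\infty$ elsewhere. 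This is a valid short-cut weight precisely because $d_{\bG}(t,x)=d_{\bG}(t,a)+d_{\bG}(a,x)$ for $x\in T_{\bG}(t,a)$.
\end{itemize}
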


\begin{proof}
    We first show \ref{lem:both_divide:a}.
    
    \begin{claim}
        There is no $z$ such that $z \in \pi_{\bG}(s,y_A)$ and $z \in \pi_{\bG}(t,y_B)$ for some $y_A \in T_{\bG}(t,a)$ and $y_B \in T_{\bG}(s,b)$.
    \end{claim}
    \begin{claimproof}
        Assume for sake of contradiction that there is such $z$.
        First, we note that, by \cref{prp:same_subtree}, we have $T_{\bG}(t,a) \cap T_{\bG}(s,b) = \emptyset$.

        Next, we rule out the case $z \in T_{\bG}(t,a)$.
        Indeed, if $z \in T_{\bG}(t,a)$, then $b \in \pi_{\bG}(t,z)$
        and so $\pi_{\bG}(t,y_B) = \pi_{\bG}(t,b) \circ \pi_{\bG}(b,y_{B})$.
        However, all nodes of $\pi_{\bG}(t,b)$ and $\pi_{\bG}(b,y_{B})$ are fully contained in $T_{\bG}(s,b)$ because $t,y_{B} \in T_{\bG}(s,b)$. Since $T_{\bG}(t,a) \cap T_{\bG}(s,b) = \emptyset$, we have a contradiction with $z \in T_{\bG}(t,a)$.
        By symmetry, $z \notin T_{\bG}(s,b)$.
    
        The case that remains to handle is when $z \notin T_{\bG}(t,a)$ and $z \notin T_{\bG}(s,b)$. 
        Using $z \in \pi_{\bG}(s,y_A)$ and $z \in \pi_{\bG}(t,y_B)$ and the triangle inequality, we get:
        \begin{align}
            \label{eq:both_divide:1}
            &d_{\bG}(s,z) + d_{\bG}(z,y_A) = d_{\bG}(s,y_A) \leq d_{\bG}(s,a) + d_{\bG}(a,y_A) \quad \text{ and}\\
            &d_{\bG}(t,z) + d_{\bG}(z,y_B) = d_{\bG}(s,y_B) \leq d_{\bG}(t,b) + d_{\bG}(b,y_B).
            \label{eq:both_divide:2}
        \end{align}
        But because $y_A \in T_{\bG}(t,a)$ and $y_B \in T_{\bG}(s,b)$ we can also use the triangle inequality to derive
        \begin{align}
            \label{eq:both_divide:3}
            &d_{\bG}(s,a) + w_{\bG}(a,b) + d_{\bG}(b,y_B) = d_{\bG}(s,y_B) \leq d_{\bG}(s,z) + d_{\bG}(z,y_B) \quad \text{ and}\\
            &d_{\bG}(t,b) + w_{\bG}(a,b) + d_{\bG}(a,y_A) = d_{\bG}(t,y_A) \leq d_{\bG}(t,z) + d_{\bG}(z,y_A).
            \label{eq:both_divide:4}
        \end{align}
        Summing up \cref{eq:both_divide:1}, \cref{eq:both_divide:2} we get:
        \begin{align}
            &d_{\bG}(s,z) + d_{\bG}(z,y_A) + d_{\bG}(t,z) + d_{\bG}(z,y_B) \leq d_{\bG}(s,a) + d_{\bG}(a,y_A) + d_{\bG}(t,b) + d_{\bG}(b,y_B). \label{eq:both_divide:5}
        \end{align}
        On the other hand summing up \cref{eq:both_divide:3}, \cref{eq:both_divide:4}, we get:
        \begin{align}
            \MoveEqLeft d_{\bG}(s,a) + d_{\bG}(b,y_B) + d_{\bG}(t,b) + d_{\bG}(a,y_A) + 2w_{\bG}(a,b) \nonumber \\
            &\leq d_{\bG}(s,z) + d_{\bG}(z,y_B) + d_{\bG}(t,z) + d_{\bG}(z,y_A). \label{eq:both_divide:6}
        \end{align}
        We note that \cref{eq:both_divide:5} and \cref{eq:both_divide:6} are almost identical up to a factor $2w_{\bG}(a,b)$ and up to the direction of the inequality.
        This means that $w_{\bG}(a,b) = 0$,
        a contradiction with our assumption that weights are positive.
    \end{claimproof}
    
    We proceed to show \ref{lem:both_divide:b}. (By symmetry, \ref{lem:both_divide:c} follows as well.)
    For $x \in \bA$, we set $w_{\bA,s}(x)\coloneqq  w_s(x)$ and
    \[
        w_{\bA,a}(x) \coloneqq
        \begin{cases}
             \min \{ w_t(x), \min_{y \in \bG \setminus  T_{\bG}(t,a)} d_{\bG}(t,y) + w_{\bG}(y,x) \} - d_{\bG}(a,t) & \text{if $x \in T_{\bG}(t,a)$,}\\
             +\infty & \text{otherwise.}
        \end{cases}
    \] 
    We prove that $w_{\bA,s}, w_{\bA,a}$ are valid short-cut weights.
    For $w_{\bA,s}$, notice that $\bA$ contains all nodes of a connected subgraph of $T_{\bG}(s)$ that includes the root $s$ as well. This means that $d_{\bA}(s,x) = d_{\bG}(s,x) \leq w_s(x) = w_{\bA,s}(x)$ for $x \in \bA$. On the other hand, for $w_{\bA,a}(x)$ we only need to consider the case $x \in T_{\bG}(t,a)$. 
    Here, we use that $d_{\bG}(t,y) + w_{\bG}(y,x) \geq d_{\bG}(t,x)$ for any $x \in T_{\bG}(t,a)$ and $y$. Combining this with $w_t(x) \geq d_{\bG}(t,x)$, yields $\min \{ w_t(x), \min_{y \in \bG \setminus  T_{\bG}(t,a)} d_{\bG}(t,y) + w_{\bG}(y,x) \} \geq d_{\bG}(t,x) = d_{\bG}(t,a) + d_{\bG}(a,x)$. By rearranging this last inequality and noticing that $d_{\bG}(a,x) = d_{\bA}(a,x)$ we get the validity of $w_{\bA,a}(x)$.

    Next, we prove that $d_{\tilde{\bG} \setminus e,e'}(s,t) = d_{\tilde{\bA} \setminus e,e'}(s,a) + d_{\bG}(a,t)$ 
    for all $e,e' \in \pi_{\bG}(s,a)$. To this end, fix arbitrary $e,e' \in \pi_{\bG}(s,a)$. 
    It is evident that $d_{\tilde{\bG} \setminus e,e'}(s,t) \leq d_{\tilde{\bA} \setminus e,e'}(s,a) + d_{\bG}(a,t)$ 
    because any path $\pi_{\tilde{\bA} \setminus e,e'}(s,a)$ can always be extended to a path 
    in $\tilde{\bG}$ that avoids $e,e'$. This is achieved either by possibly expanding the 
    shortcut edge from $w_{\bA,a}(x)$ into a full path (if $\pi_{\tilde{\bA} \setminus e,e'}(s,a)$ 
    ends in a short-cut edge with weight that comes from the minimization over $y$) or by appending $\pi_{\bG}(a,t)$ otherwise.
  
    For the inequality in the opposite direction, we consider the first node $y$ visited by $\pi_{\tilde{\bG} \setminus e,e'}(s,t)$ in $V(\bG) \setminus V(T_{\bG}(t,a))$ and let $x \in T_{\bG}(t,a)$ be the node immediately before. Since $e,e' \notin \pi_{\bG}(y,t) = \pi_{\tilde{\bG}}(y,t)$, we have $\pi_{\tilde{\bG} \setminus e,e'}(s,t) = \pi_{\bA \setminus e,e'}(s,x) \circ \{x,y\} \circ \pi_{\tilde{\bG}}(y,t)$, where we use $V(T_{\bG}(t,a)) \subseteq V(\bA)$.
    Now, note that both cases when $\pi_{\tilde{\bG}}(s,t)$ ends with a short-cut weight (so $t=y$) and when it does not (so $\pi_{\tilde{\bG}}(y,t) = \pi_{\bG}(y,t)$) are captured by $w_{\bA,a}$.
\end{proof}

In \cref{lem:both_conquer} we give the proof for the impera step.
Here, we consider a node $c \in \pi_{\bG}(s,t)$ and then we compute $d_{\tilde{\bG} \setminus e,e'}(s,t)$ for each $e \in \pi_{\bG}(s,c)$ and $e' \in \pi_{\bG}(c,t)$. For this, we need \cref{lem:wssrp2}.

\begin{lemma}\label{lem:both_conquer}
    Let $\bG$ be an undirected $n$-node graph with a set of positive weights $W$. Let $s,t \in \bG$ be nodes, and let $w_s, w_t$ denote weights with respect to $s$ and $t$, respectively (where $w_s(x)$ and $w_t(x)$ are not necessarily in $W$). Furthermore, let $\tilde{\bG}$ be the graph $\bG$ augmented by $w_s$ and $w_t$.

    Let $c \in \pi_{\bG}(s,t)$ be an arbitrary node. Then, for $g(n,m,W)$ as defined in \cref{thm:both}, in time $\Oh(\sT_{\SSRP}(n,m,W) + n^2)$
     we can compute all values $d_{\tilde{\bG} \setminus e,e'}(s,t)$ such that $e \in \pi_{\bG}(s,c)$ and $e' \in \pi_{\bG}(c,t)$.
\end{lemma}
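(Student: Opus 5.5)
We will follow the formula \cref{eq:to:1} from the overview, adapted to the augmented graph $\tilde{\bG}$. Let $\bL \coloneqq \bG \setminus \pi_{\bG}(c,t)$ and $\bR \coloneqq \bG \setminus \pi_{\bG}(s,c)$, and let $\tilde{\bL}$, $\tilde{\bR}$ be these graphs augmented by the same $w_s,w_t$. For $e=\{u,v\}\in\pi_{\bG}(s,c)$ and $e'=\{u',v'\}\in\pi_{\bG}(c,t)$ (oriented from $s$ to $t$) we claim
\[
d_{\tilde{\bG}\setminus e,e'}(s,t)=\min\Big\{\min_{v'\preceq z\preceq t}\big\{d_{\tilde{\bL}\setminus e}(s,z)+d_{\tilde{\bG}}(z,t)\big\},\ \min_{s\preceq z\preceq u}\big\{d_{\tilde{\bG}}(s,z)+d_{\tilde{\bR}\setminus e'}(z,t)\big\},\ d_{\tilde{\bG}\setminus e}(s,c)+d_{\tilde{\bG}\setminus e'}(c,t)\Big\},
\]
where short-cut edges are handled as ordinary edges, i.e.\ a direct $s$--$t$ short-cut edge is also included in the minimum.

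\textbf{Correctness, upper bound on the minimum.} Take the optimal path $P=\pi_{\tilde{\bG}\setminus e,e'}(s,t)$ and distinguish three cases.
\begin{itemize}
\item If $P$ avoids every vertex of $\pi_{\bG}(s,t)$ strictly between $v$ and $u'$ on one side, we obtain the first or second term. For instance, if $P$ visits no $x$ with $v\preceq x\preceq c$, let $z$ be the last vertex of $P$ on $\pi_{\bG}(s,t)$; then $z\preceq u$, and the suffix of $P$ from $z$ is a path in $\tilde{\bR}\setminus e'$.
\item Otherwise $P$ visits both such an $x$ and such a $y$. The subpath between them can be replaced by $\pi_{\bG}(x,y)\ni c$, which avoids $e,e'$, so the third term is at most $|P|$.
\end{itemize}

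\textbf{Correctness, lower bound on the minimum.} The first two terms describe genuine walks in $\tilde{\bG}\setminus e,e'$. The third term is the delicate one: we must show that if $\pi_{\tilde{\bG}\setminus e}(s,c)$ uses $e'$, or $\pi_{\tilde{\bG}\setminus e'}(c,t)$ uses $e$, then one of the first two terms is already no larger. The idea is as follows. If $\pi_{\tilde{\bG}\setminus e}(s,c)$ crosses $e'$, it reaches $v'$ or beyond before returning to $c$. Cutting it at its first vertex $z\succeq v'$ on $\pi_{\bG}(s,t)$ gives a prefix lying in $\tilde{\bL}\setminus e$. The remaining detour from $z$ to $c$ and then to $t$ has length at least $d_{\tilde{\bG}}(z,t)$, so the first term dominates. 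The symmetric case is handled the same way. Short-cut edges need care: a $w_t$ short-cut edge used in the middle of a path from $s$ means the path ends at $t$. This case analysis is the main obstacle.

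\textbf{Computing the needed \SSRP values via \cref{lem:wssrp2}.} The required quantities are:
\begin{itemize}
\item $d_{\tilde{\bL}\setminus e}(s,z)$ for $e\in\pi_{\bG}(s,c)$;
\item $d_{\tilde{\bR}\setminus e'}(z,t)$ for $e'\in\pi_{\bG}(c,t)$;
\item $d_{\tilde{\bG}\setminus e}(s,c)$ and $d_{\tilde{\bG}\setminus e'}(c,t)$.
\end{itemize}
We obtain them by applying \cref{lem:wssrp2} from $s$ in $\bL$, and symmetrically from $t$ in $\bR$, with the respective short-cut weights. In $\bL$ we treat the $w_t$ edges as ordinary edges incident to $t$, and we set $F=\pi_{\bG}(s,c)$.

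The hypothesis of \cref{lem:wssrp2} must be checked. Whenever $\pi_{\tilde{\bL}}(s,x)$ uses a $w_s$ short-cut edge, it cannot contain an edge of $\pi_{\bG}(s,c)$: otherwise its prefix up to that edge could be replaced by the genuine shortest path. We also precompute $d_{\tilde{\bG}}(\cdot,t)$ and $d_{\tilde{\bG}}(s,\cdot)$ by Dijkstra in $\Oh(n^2)$.

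\textbf{Evaluating the minima in $\Oh(n^2)$ total.} Fix $e$ and scan $e'$ from $t$ towards $c$. The set $\{z : v'\preceq z\preceq t\}$ grows by one vertex per step, so a running minimum of $d_{\tilde{\bL}\setminus e}(s,z)+d_{\tilde{\bG}}(z,t)$ gives each first term in $\Oh(1)$ amortized time. The second term is handled symmetrically by fixing $e'$ and scanning $e$. The total cost is $\Oh(n^2)$ plus a constant number of calls to \cref{lem:wssrp2}, which gives the claimed bound.
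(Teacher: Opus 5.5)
Your proof follows the paper's. You use the same $\bL,\bR$, the same three-term formula, the same case analysis in both directions, and the same check of the hypothesis of \cref{lem:wssrp2}; running suffix minima replace range-minimum queries, which is fine.

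The step that fails as written is the running time. You give \cref{lem:wssrp2} the graph $\bL$ with the $w_t$-edges added as ordinary edges, and $\bR$ with the $w_s$-edges. Those weights are not in $W$, so the call costs $\sT_{\textsf{SSRP}}(n,m,W')$ for an enlarged weight set $W'$. Avoiding exactly this is why the short-cut machinery exists; it would break, e.g., the unweighted and $\fragment{1}{M}$ corollaries. The repair is cheap:
\begin{itemize}
\item Run \cref{lem:wssrp2} on $\bL$ augmented by $w_s$ only, with $F=\pi_{\bG}(s,c)$ and $X=V(\bG)$. Your hypothesis check never uses $x\in\pi_{\bG}(c,t)$, so it still applies.
\item Add the $w_t$-edges afterwards. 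They are incident to $t$, so it suffices to correct the value at $t$ to $\min\{d_{\tilde{\bL}\setminus e}(s,t),\min_y d_{\tilde{\bL}\setminus e}(s,y)+w_t(y)\}$, in $\Oh(n)$ per $e$. For $z\neq t$, a path through $t$ is dominated by the $z=t$ term of your first minimum.
\item Do the same on $\bR$.
\end{itemize}

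Do keep both short-cut families in $\tilde{\bL}$ and $\tilde{\bR}$, though, because your Cases A and B need them. The paper augments $\bL$ only by $w_s$ and $\bR$ only by $w_t$. A replacement path $\{s,y\}\circ\{y,t\}$ built from one $w_s$-edge and one $w_t$-edge is then seen by neither of the first two terms, and the third term can be strictly larger. For example, take:
\begin{itemize}
\item $\pi_{\bG}(s,t)=s,a,c,b,t$ with unit weights, plus a pendant vertex $y$ attached to $c$;
\item $w_s(y)=w_t(y)=3$ and all other short-cut weights $100$;
\item $e=\{s,a\}$ and $e'=\{b,t\}$.
\end{itemize}
The true value is $6$, but the formula with the paper's $\tilde{\bL},\tilde{\bR}$ gives $8$.

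Two smaller slips:
\begin{itemize}
\item In Case A, $z$ must be the last vertex of $P$ on $\pi_{\bG}(s,c)$; the last one on $\pi_{\bG}(s,t)$ is $t$.
\item You never say how $d_{\tilde{\bG}\setminus e}(s,c)$ and $d_{\tilde{\bG}\setminus e'}(c,t)$ come out of the $\bL,\bR$ computations. They do: cutting at the first vertex on $\pi_{\bG}(c,t)$ gives $d_{\tilde{\bG}\setminus e}(s,c)=\min_{z\in\pi_{\bG}(c,t)}\{d_{\tilde{\bL}\setminus e}(s,z)+d_{\bG}(z,c)\}$, and symmetrically for the other. This also spares you the SSRP call on the doubly augmented $\tilde{\bG}$ that the paper makes via \cref{thm:wssrp}.
\end{itemize}
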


\begin{proof}
    We define $\bL \coloneqq \bG \setminus E(\pi_{\bG}(c,t)), \bR \coloneqq \bG \setminus E(\pi_{\bG}(s,c))$ and we let $\tilde{\bL}$ and $\tilde{\bR}$ be $\bL$ augmented by $w_s$ and $\bR$ augmented by $w_t$, respectively.
    Next, via \cref{lem:wssrp2} on $\tilde{\bL}$ and $\tilde{\bR}$, we compute $d_{\tilde{\bL} \setminus e}(s,x)$ for all $e \in \pi_{\bG}(s,c)$ and $x \in \pi_{\bG}(c,t)$ and $d_{\tilde{\bR} \setminus e}(t,x)$ for all $e \in \pi_{\bG}(c,t)$ and $x \in \pi_{\bG}(s,c)$. (Here, note that $w_s$ and $w_t$ are not necessarily short-cut weights w.r.t. $s$ and $t$ in $\bL$ and $\bR$; this is why we need to apply \cref{lem:wssrp2} instead of \cref{thm:wssrp}.)
    Via \cref{thm:wssrp} we also compute \SSRP in $\tilde{\bG}$ from $s$ and $t$, respectively.

    We define the relations $\prec$ and $\preceq$ for vertices on the path $\pi_{\bG}(s,t)$ to denote strict preceding and preceding, respectively.
    Let $e =\{u,v\} \in \pi_{\bG}(s,c)$ and  $e' = \{u',v'\} \in \pi_{\bG}(c,t)$ be such that $u \prec v$ and $u' \prec v'$.
    To get the desired value $d_{\tilde{\bG} \setminus e,e'}(s,t)$, we compute
    \begin{align}
        \min \Big\{ 
            \min_{v' \preceq x \preceq t} \Big\{ d_{\tilde{\bL} \setminus e}(s,x) + d_{\bG}(x,t) \Big\} \ , \
            \min_{s \preceq x \preceq u} \Big\{ d_{\bG}(s,x) + d_{\tilde{\bR} \setminus e'}(x,t) \Big\} \ , \
            d_{\tilde{\bG} \setminus e}(s,c) + d_{\tilde{\bG} \setminus e'}(c,t)
        \Big\}. \label{eq:both_rec:1}
    \end{align}

    \subparagraph*{Correctness:}
    We start by showing that $\bL$ and $\bR$ satisfy the conditions for 
    \cref{lem:wssrp2}. We show this only for $\bL$. By symmetry, the same holds for $\bR$. 

    \begin{claim}
        Let $x \in \pi_{\bG}(c,t)$ and $e \in \pi_{\bG}(s,c)$. 
        If $\pi_{\tilde{\bL}}(s,x)$ uses a short-cut edge, then $e \notin \pi_{\tilde{\bL}}(s,x)$.
    \end{claim}
    \begin{claimproof}
        For sake of contradiction, assume $\pi_{\tilde{\bL}}(s,x)$ uses a short-cut edge $\{s,z\}$ and also contains a node $y \in \pi_{\bG}(s,c)$. (Here, we use the fact that if any short-cut edge is used, then it must be the first one).
        By our guarantee on $w_s$, we get: 
        \[
            d_{\bG}(s,y) 
            \leq d_{\bG}(s,z) + d_{\bG}(z,y)
            \leq w_{s}(z) + d_{\bG}(z,y) \leq w_{s}(z) +  d_{\tilde{\bL}}(z,y)
            \leq d_{\tilde{\bL}}(s,y),
        \]
        meaning that $\pi_{\tilde{\bL}}(s,y)$ in-between $s$ and $y$ coincides with $\pi_{\bG}(s,y)$ and that $\pi_{\tilde{\bL}}(s,y)$ does actually not use $\{s,z\}$, a contradiction. Since $\pi_{\tilde{\bL}}(s,x)$ does not contain any $y \in \pi_{\bG}(s,c)$, we get $e \notin \pi_{\tilde{\bL}}(s,x)$.
    \end{claimproof}

    We proceed to prove that \cref{eq:both_rec:1} is correct.

    \begin{claim}
        For every $e \in \pi_{\bG}(s,c)$ and $e' \in \pi_{\bG}(c,t)$, we never underestimate $d_{\tilde{\bG} \setminus e,e'}(s,t)$ in \cref{eq:both_rec:1}.
    \end{claim}
    \begin{claimproof}
        For the first term of \cref{eq:both_rec:1}, we observe that neither $\pi_{\tilde{\bL} \setminus e}($ nor $\pi_{\bG}(x,t)$ use $e,e'$. Thus, $d_{\tilde{\bL} \setminus e}(s,x) + d_{\bG}(x,t)$ describes a $s$-$t$ path in $\tilde{\bG}$ that avoids $e,e'$, so this term is never less than $d_{\tilde{\bG} \setminus e,e'}(s,t)$.
        Similar holds for the second term of \cref{eq:both_rec:1}.

        For the second third, we first consider the case where $\pi_{\tilde{\bG} \setminus e}(s,c)$ uses $e'$. 
        We observe that $\pi_{\tilde{\bG} \setminus e}(s,c)$ need to go first through $v'$ then $u'$,
        as otherwise from $u'$ to $c$ the path would coincide with $\pi_{\bG}(c,u')$ and $e'$ would not be used.
        Let $z$ be the first node visited by $\pi_{\tilde{\bG} \setminus e}(s,c)$ such that $v' \preceq z \preceq t$. 
        Clearly, $d_{\tilde{\bG} \setminus e}(s,z) = d_{\tilde{\bL} \setminus e}(s,z)$.
        Since the path $\pi_{\tilde{\bL} \setminus e}(s,z) \circ \pi_{\bG}(z,t)$ avoids both $e,e'$, we get the desired lower bound from
        \begin{align*}
              d_{\tilde{\bG} \setminus e,e'}(s,t) &\leq d_{\tilde{\bL} \setminus e}(s,z) + d_{\bG}(z,t)
              = d_{\tilde{\bG} \setminus e}(s,z) + d_{\bG}(z,t)\\
              &\leq d_{\tilde{\bG} \setminus e}(s,z) + d_{\bG}(z,c) + d_{\bG}(c,t)
              \leq d_{\tilde{\bG} \setminus e}(s,c) + d_{\tilde{\bG} \setminus e}(c,t).
        \end{align*}
        Similarly, we can sort out the case when $\pi_{\tilde{\bG} \setminus e}(t,c)$ uses $e$.
        In the remaining case $\pi_{\tilde{\bG} \setminus e}(s,c) \circ \pi_{\tilde{\bR} \setminus e'}(c,t)$ avoids both $e,e'$,
        so we do not underestimate $d_{\tilde{\bG} \setminus e,e'}(s,t)$ in this case either.
    \end{claimproof}

    \begin{claim}
        For every $e \in \pi_{\bG}(s,c)$ and $e' \in \pi_{\bG}(c,t)$, we never overestimate $d_{\tilde{\bG} \setminus e,e'}(s,t)$ in \cref{eq:both_rec:1}.
    \end{claim}
    \begin{claimproof}
        We perform a case distinction.
        The first case is when $\pi_{\tilde{\bG} \setminus e,e'}(s,t)$ does not use any vertex $z \in \pi_{\bG}(s,t)$ such that $c \preceq z \preceq v'$.
        In such case, we consider the first node $x \in \pi_{\bG}(s,t)$ visited by $\pi_{\tilde{\bG} \setminus e,e'}(s,t)$ such that $v' \preceq x \preceq t$.
        In-between $x$ and $t$ the path $\pi_{\tilde{\bG} \setminus e,e'}(s,t)$ must coincide with $\pi_{\bG}(x,t)$. Moreover, $d_{\tilde{\bG} \setminus e,e'}(s,x) = d_{\tilde{\bL} \setminus e}(s,x)$.
        Thus, $d_{\tilde{\bG} \setminus e,e'}(s,t) = d_{\tilde{\bL} \setminus e}(s,x) + d_{\bG}(x,t)$ and \cref{eq:both_rec:1} captures this case through the first term.
        Similarly, the second term of \cref{eq:both_rec:1} captures the case when $\pi_{\tilde{\bG} \setminus e,e'}(s,t)$ does not use any vertex $z \in \pi_{\bG}(s,t)$ such that $u \preceq z \preceq c$.

        The remaining case is when $\pi_{\tilde{\bG} \setminus e,e'}(s,t)$ visits two vertices $x,y \in \pi_{\bG}(s,t)$ that satisfy $u \preceq x \preceq c$ and $c \preceq y \preceq v'$.
        We may assume that $\pi_{\tilde{\bG} \setminus e,e'}(s,t)$ in-between $x$ and $y$ equals to $\pi_{\bG}(x,y)$ (or alternatively $\pi_{\bG}(y,x)$, depending on whether $x$ comes before $y$ on $\pi_{\bG}(s,t)$). In any case, $\pi_{\tilde{\bG} \setminus e,e'}(s,t)$ passes through $c$. We conclude $d_{\tilde{\bG} \setminus e,e'}(s,t) = d_{\tilde{\bG} \setminus e,e'}(s,c) + d_{\tilde{\bG} \setminus e,e'}(c,t) \geq d_{\tilde{\bG} \setminus e}(s,c) + d_{\tilde{\bG} \setminus e'}(c,t)$.
    \end{claimproof}

    \subparagraph*{Runtime:}
    We only give the details how to compute efficiently the first two terms of \cref{eq:both_rec:1}. 
    The complexity of all other steps should not need any proof.

    Let us focus on the first term (the second term can be evaluated similarly).
    For each $e \in \pi_{\bG}(s,c)$, instead of minimizing $\min_{v' \preceq x \preceq t} \{ d_{\tilde{\bL} \setminus e}(s,x) + d_{\bG}(x,t) \}$, we minimize $\min_{v' \preceq x \preceq t} \{ d_{\tilde{\bL} \setminus e}(s,x) - d_{\bG}(s,x) \}$, as the two terms only differ by a constant offset $d_{\bG}(s,t)$.
    This permits us, for each $e \in \pi_{\bG}(s,c)$, to use the data structure of \cref{prp:range_query_or} over $|V(\pi_{\bG}(c,t))|$ values, where the $i$-th value equals to $d_{\tilde{\bL} \setminus e}(s,x) - d_{\bG}(s,x)$ assuming $x$ is the $i$-th node on $\pi_{\bG}(c,t)$.
    By doing so, for each $e \in \pi_{\bG}(s,c)$, we can query $\min_{v' \preceq x \preceq t} \{ d_{\tilde{\bL} \setminus e}(s,x) - d_{\bG}(s,x) \}$ in time $\Oh(1)$ after $\Oh(n)$ pre-processing.
    Over all $e\in \pi_{\bG}(s,c)$ this sums up to $\Oh(n^2)$ pre-processing time.
\end{proof}

Finally, we give the proof of \cref{thm:both}, where we put everything together.

\both

\begin{proof}
    We devise a divide-et-impera scheme for a more general problem, where we are also given two short-cut weights $w_s$ w.r.t. $s$ and $w_t$ w.r.t. $t$ and we want to compute $d_{\tilde{\bG} \setminus e,e'}(s,t)$ for all $e,e' \in \pi_{\bG}(s,t)$, where $\tilde{\bG}$ is $\bG$ augmented by $w_s$ and $w_t$. If $\pi_{\bG}(s,t)$ only contains a single edge between $s$ and $t$, then there is nothing left to compute, and we can return directly.

    Otherwise, we order the nodes on $\pi_{\bG}(s,t)$ (assuming there are $\ell \geq 3$ of them) obtaining $v_0 = s, v_1, \ldots, v_{\ell-1}=t$.
    For $i \in \fragmentco{0}{\ell}$, we let $\bA_i$ and $\bB_i$ be the subgraphs of $\bG$ induced by the sets
    \begin{align*}
         \{x \mid x \in \pi_{\bG}(s,y), y \in T_{\bG}(t,v_i)\} \text{ and } \{x \mid x \in \pi_{\bG}(t,y), y \in T_{\bG}(s,v_i)\}, \text{ respectively.}
    \end{align*}
    We observe that we have $V(\bA_{i-1}) \subseteq V(\bA_{i})$ for $i \in \fragmentoo{0}{\ell}$.
    This is because $V(T_{\bG}(t,v_{i-1})) \subseteq V(T_{\bG}(t,v_{i}))$, from which also follows $\{\pi_{\bG}(s,y) \mid y \in T_{\bG}(t,v_{i-1})\} \subseteq \{\pi_{\bG}(s,y) \mid y \in T_{\bG}(t,v_{i})\}$. By symmetry, we have $V(\bB_{i-1}) \supseteq V(\bB_{i})$ for $i \in \fragmentoo{0}{\ell}$.

    This allows us to proceed as follows. We choose the largest $k \in \fragmentoo{0}{\ell-1}$ such that $|V(\bA_{k-1})| \leq n/2$. Next, we use \cref{lem:both_conquer} three times, setting $c$ to $v_{k-1},v_{k},v_{k+1}$, respectively. Lastly, we recurse on $\bA_{k-1}$ and $\bB_{k+1}$
    (with short-cut weights given by \cref{lem:both_divide}, setting $f = \{v_{k-1},v_{k}\}$ for $\bA_{k-1}$  and $f = \{v_{k},v_{k+1}\}$ for $\bB_{k+1}$).

    \subparagraph*{Correctness:}
    Letting $e_i = \{v_{i-1}, v_{i}\}$ for $i \in \fragmentoo{0}{\ell}$,
    we observe that \cref{lem:both_divide} gives us $d_{\tilde{\bG} \setminus e_i,e_j}(s,t)$ for all $(i,j) \in (\fragmentco{0}{k-1} \times \fragmentco{k-1}{\ell}) \cup (\fragmentco{0}{k} \times \fragmentco{k}{\ell}) \cup (\fragmentco{0}{k+1} \times \fragmentco{k+1}{\ell}) \subseteq \fragmentco{0}{k} \times \fragmentoo{k}{\ell}$. 
    Recursing on $\bA_{k-1}$ and $\bB_{k+1}$, we compute the missing values $d_{\tilde{\bG} \setminus e_i,e_j}(s,t)$ for all $(i,j) \in \fragmentco{0}{k-1} \times \fragmentco{0}{k-1}$ and $(i,j) \in \fragmentoo{k}{\ell} \times \fragmentoo{k}{\ell}$.

    \subparagraph*{Runtime:} 
    By \cref{lem:both_divide}\ref{lem:both_divide:a}, we have $V(\bA_{k}) \cap V(\bB_{k+1}) = \emptyset$.
    Since $|V(\bA_k)| > n/2$, this implies  $|V(\bB_{k+1})| \leq n/2$.
    Moreover, from $V(\bA_{k-1}) \subseteq V(\bA_{k})$ also follows $V(\bA_{k-1}) \cap V(\bB_{k+1}) = \emptyset$ and $E(\bA_{k-1}) \cap E(\bB_{k+1}) = \emptyset$.
    Altogether, this yields a recursion of the type
    \begin{align*}
        \sT(n,m,W) \leq \sT(n_1,m_1,W) + \sT(n_2,m_2,W) + \Oh(n^2 + \sT_{\SSRP}(n,m,W)),
    \end{align*}
    for $n_1 + n_2 \leq n$, $m_1 + m_2 \leq m$ and $n_1, n_2 \leq n/2$. (Whenever $\rho \geq 1$, we need to add a log factor.)
    Using a similar analysis as in \cref{thm:wssrp}, we get the desired running time.
\end{proof}

\section{Proof for $e' \notin \pi_{\bG}(s,t)$}
\label{sec:single}

As a first step, we give the case distinction that we use to solve \cref{thm:single}.

\single

\begin{proof}
Let $e = \{u,v\}$ be such that $u$ is closer than $v$ to $s$ in $\pi_{\bG}(s, t)$.
We let $A = V(T_{\bG}(s,v))$ and $A'= V(T_{\bG}(t,u))$. (Technically, $A$ and $A'$ depend on $e$, but we hide this dependence.) Note that $A \cap A' = \emptyset$ because of \cref{prp:two_paths}. 

\begin{figure*}[thbp]
   \centering
   \scalebox{0.8}{\newcommand{\drawGraph}{
    \node[dot, label={above:$s$}] (root) at (0,0) {};
    \node[dot] (n1) at (0,-0.5) {};
    \node[dot, label={above right:$u$}] (n2) at (0,-1) {};
    \node[dot, label={above right:$v$}] (n3) at (0,-1.5) {};
    \node[dot] (n4) at (0,-2) {};
    \node[dot, label={below:$t$}] (n5) at (0,-2.5) {};
    \draw[thick] (root) -- (n1) -- (n2);
    \draw[thick, red] (n2) -- (n3);
    \draw[thick] (n3) -- (n4) -- (n5);

    \node[dot] (n6) at (-0.5,-1.8) {};
    \node[dot] (n7) at (-1,-2.2) {};
    \node[dot] (n8) at (0.5,-1.8) {};
    \node[dot] (n9) at (1,-2.2) {};
    \draw[thick] (n3) -- (n6) -- (n7);
    \draw[thick] (n3) -- (n8) -- (n9);

    \node[dot] (n10) at (-0.5,-2.2) {};
    \node[dot] (n11) at (-0.9,-2.5) {};
    \node[dot] (n12) at (0.5,-2.2) {};
    \draw[thick] (n4) -- (n10) -- (n11);
    \draw[thick] (n4) -- (n12);

    \node[dot] (n13) at (-0.5,-1.3) {};
    \node[dot] (n14) at (-1,-1.7) {};
    \draw[thick] (n2) -- (n13) -- (n14); 

    \node[dot] (n15) at (0.5,-1.0) {};
    \node[dot] (n16) at (1,-1.5) {};
    \node[dot] (n17) at (1,-1.0) {};
    \draw[thick] (n2) -- (n15) -- (n16); 
    \draw[thick] (n15) -- (n17);

    \node[dot] (n18) at (0.5,-0.1) {};
    \node[dot] (n19) at (1,-0.4) {};
    \node[dot] (n20) at (1.5,-0.6) {};
    \draw[thick] (root) -- (n18) -- (n19) -- (n20);

    \node[dot] (n21) at (-0.5,-0.6) {};
    \node[dot] (n22) at (-1,-0.8) {};
    \node[dot] (n23) at (-1,-0.4) {};
    \draw[thick] (n1) -- (n21) -- (n22); 
    \draw[thick] (n21) -- (n23); 
    \draw[thick, opacity=0.3] (n22) -- (n23);
    \draw[thick, opacity=0.3] (n22) -- (n14);
    \draw[thick, opacity=0.3] (n22) -- (n13);
    \draw[thick, opacity=0.3] (n21) -- (n13);
    \draw[thick, opacity=0.3] (n13) -- (n6);
    \draw[thick, opacity=0.3] (n14) -- (n7);
    \draw[thick, opacity=0.3] (n14) -- (n6);

    \draw[thick, opacity=0.3] (n6) -- (n10);
    \draw[thick, opacity=0.3] (n7) -- (n10);
    \draw[thick, opacity=0.3] (n7) -- (n11);

    \draw[thick, opacity=0.3] (n20) -- (n17);
    \draw[thick, opacity=0.3] (n19) -- (n15);

    \draw[thick, opacity=0.3] (n16) -- (n8);
    \draw[thick, opacity=0.3] (n15) -- (n8);
    \draw[thick, opacity=0.3] (n12) -- (n8);
    
    \fill[red, opacity=0.2, rounded corners] (0,-1.2) -- (1.9,-2.7) -- (-1.9,-2.7) -- (0,-1.2);
}

\begin{tikzpicture}[scale=1, dot/.style={
        draw, 
        circle, 
        fill, 
        minimum size=1mm, 
        inner sep=0pt
    }]
    \begin{scope}[scale=0.9]
        \drawGraph
    \end{scope}
    \begin{scope}[scale=0.9, shift={(5,0)}]
        \drawGraph
        \draw[thick, green] (n6) -- (n13);
        \draw[thick, green] (n6) -- (n14);
        \draw[thick, green] (n7) -- (n14);
        \draw[thick, green] (n16) -- (n8);
        \draw[thick, green] (n15) -- (n8);
    \end{scope}
    \begin{scope}[scale=0.9, shift={(10,0)}]
        \drawGraph
        \draw[thick, green] (n22) -- (n23);
        \draw[thick, green] (n22) -- (n14);
        \draw[thick, green] (n22) -- (n13);
        \draw[thick, green] (n21) -- (n13);
  
        \draw[thick, green] (n6) -- (n10);
        \draw[thick, green] (n7) -- (n10);
        \draw[thick, green] (n7) -- (n11);
    
        \draw[thick, green] (n20) -- (n17);
        \draw[thick, green] (n19) -- (n15);
    
        \draw[thick, green] (n12) -- (n8);
        \draw[thick, green] (n20) -- (n17);
    \end{scope}
    \begin{scope}[scale=0.9, shift={(15,0)}]
        \drawGraph
        \draw[thick, green] (n3) -- (n6) -- (n7);
        \draw[thick, green] (n3) -- (n8) -- (n9);
        \draw[thick, green] (n4) -- (n10) -- (n11);
        \draw[thick, green] (n4) -- (n12);
        \draw[thick, green] (n2) -- (n13) -- (n14); 
        \draw[thick, green] (n2) -- (n15) -- (n16); 
        \draw[thick, green] (n15) -- (n17);
        \draw[thick, green] (root) -- (n18) -- (n19) -- (n20);
        \draw[thick, green] (n1) -- (n21) -- (n22); 
        \draw[thick, green] (n21) -- (n23); 
    \end{scope}
\end{tikzpicture}}
   \caption{
        The first figure illustrates an example of $\bG$. Edge $e$ is highlighted in red, while all other edges in $T_{\bG}(s)$ are shown in bold black. The set of nodes $A$ is indicated by the red shaded area. The subsequent three figures (from left to right) display in green all edges $e'$ corresponding to cases \ref{case:i}, \ref{case:ii}, and \ref{case:iii}.
    }
   \label{fig:cases}
\end{figure*}

For the edge $e' \notin \pi_{\bG}(s,t)$, we distinguish three possible placements w.r.t. $s$:
\begin{enumerate}[(i)]
    \item $e' \in E_{\bG}(A, V(\bG) \setminus A)$;
    \label{case:i}
    \item $e' \notin T_{\bG}(s)$ and $e' \notin E_{\bG}(A, V(\bG) \setminus A)$;
    \label{case:ii}
    \item $e' \in T_{\bG}(s)$.
    \label{case:iii}
\end{enumerate}
(Refer to \cref{fig:cases} for a visualization of the cases.)
We also consider the symmetric placements of $e'$ w.r.t. $t,$ instead of $s$.
(That is, $e' \in E_{\bG}(A', V(\bG) \setminus A')$; $e' \notin T_{\bG}(t)$ and $e' \notin E_{\bG}(A', V(\bG) \setminus A')$; and $e' \in T_{\bG}(t)$.)
In \cref{sec:placement:i}, \cref{sec:placement:ii} and \cref{sec:placement:iii}, we handle separately three cases, which cover all cases, and depend on where $e'$ is placed w.r.t. $s$ and $t$:
\begin{description}
    \item[\Cref{lem:find_path} in \cref{sec:placement:i}] Edge $e'$ is in placement \ref{case:i} w.r.t. $s$ or $t$;
    \item[\Cref{lem:second_case} in \cref{sec:placement:ii}] Edge $e'$ is in placement \ref{case:ii} w.r.t. $s$ or $t$; and
    \item[\Cref{lemma:tree_case} in \cref{sec:placement:iii}] Edge $e'$ is in placement \ref{case:iii} w.r.t. $s$ and $t$.
\end{description}
This case distinction also needs to be implemented algorithmically. While it is straightforward to find all edges $e$ on the path $\pi_{\bG}(s,t)$ in time $\Oh(m + n \log n)$, we must also, for each $e$, find the corresponding edges $e'$ on the path $\pi_{\bG \setminus e}(s,t)$. Once all $e'$ for a given $e$ are identified, they can be categorized into one of the three cases. The algorithm in \cref{sec:placement:i} will allow us to do this, while handling the first case at the same time. 
\end{proof}

\subsection{Edge $e'$ is in placement \ref{case:i} w.r.t. $s$ or $t$}
\label{sec:placement:i}

We only show how to compute $d_{\bG \setminus e,e'}(s,t)$ for $e'$ that is in placement \ref{case:i} w.r.t. $s$. (The computation when $e'$ is in placement \ref{case:i} w.r.t. $s$ is symmetric.)
To handle this case and also identify all edges on $\pi_{\bG \setminus e}(s,t)$, we can use an algorithm very similar to the classic method in \cite{MMG89} for single-fault replacement paths.

\begin{lemma}\label{lem:find_path}
    In time $\Oh(n^2 + m \log n)$ we can compute all of the following:
    \begin{enumerate}[(a)]
        \item For each $e \in \pi_{\bG}(s,t)$ the path $\pi_{\bG \setminus e}(s,t)$.
        These paths contain at most $|\bigcup_{e \in \pi_{\bG}(s,t)} E(\pi_{\bG \setminus e}(s,t))| = \Oh(n)$ distinct edges.
        \label{lem:find_path:a}
        \item For each $e \in \pi_{\bG}(s,t)$ the values $d_{\bG \setminus e,e'}(s,t)$ for all $e' \in \pi_{\bG \setminus e}(s,t)$ with $e' \notin \pi_{\bG}(s,t), e' \in E_{\bG}(A, V(\bG) \setminus A)$, where $A$ is defined as in \cref{thm:single} for $e$.
        \label{lem:find_path:b}
    \end{enumerate}
\end{lemma}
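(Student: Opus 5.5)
Part (a) will follow the Malik–Mittal–Gupta approach. First we compute $T_{\bG}(s)$ and $T_{\bG}(t)$ with Dijkstra, using the canonical tie-breaking, in time $\Oh(m+n\log n)$. Write $\pi_{\bG}(s,t)=v_0,\dots,v_{\ell}$ and $e_i=\{v_{i-1},v_i\}$. For each node $y$, let $\lambda(y)$ be the index $i$ such that $y\in T_{\bG}(s,v_i)\setminus T_{\bG}(s,v_{i+1})$, with the convention that $\lambda(y)=0$ if $y$ lies outside $T_{\bG}(s,v_1)$.

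By \cref{prp:path_decomp} applied with $F=\{e_i\}$, followed by \cref{prp:two_paths}, we get
\[
\pi_{\bG\setminus e_i}(s,t)=\pi_{\bG}(s,x)\circ\{x,y\}\circ\pi_{\bG}(y,t)
\]
for some edge with $\lambda(x)<i\le\lambda(y)$. Hence $d_{\bG\setminus e_i}(s,t)$ is the minimum of $d_{\bG}(s,x)+w(x,y)+d_{\bG}(y,t)$ over all edges crossing the cut at $i$. We process $i=\ell,\dots,1$ and keep a heap of candidate crossing edges, which gives time $\Oh(m\log n)$. Recording the minimizing edge $\{x_i,y_i\}$ determines the path implicitly.

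The bound of $\Oh(n)$ distinct edges holds because every such path consists of the single edge $\{x_i,y_i\}$ plus edges of $T_{\bG}(s)\cup T_{\bG}(t)$. So the union of all the paths has at most $\ell+2n$ edges. Writing the paths out explicitly costs $\Oh(n)$ each, $\Oh(n^2)$ in total.

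For part (b), fix $e=e_i=\{u,v\}$, so $A=T_{\bG}(s,v)$. Let $e'\in\pi_{\bG\setminus e}(s,t)\cap E_{\bG}(A,V\setminus A)$ with $e'\notin\pi_{\bG}(s,t)$. The only edge of the path above that crosses the cut $(A,V\setminus A)$ and lies outside $T_{\bG}(s)$ is the edge $\{x_i,y_i\}$. Any crossing edge of $T_{\bG}(s)$ would be $e$ itself. Edges of $\pi_{\bG}(s,x_i)$ lie outside $A$, and edges of $\pi_{\bG}(y_i,t)$ lie inside $A$ by \cref{prp:two_paths}. Therefore $e'=\{x_i,y_i\}$, so there is at most one such $e'$ per $e$.

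Removing $\{e,e'\}$ and applying \cref{prp:path_decomp} again shows that the answer is the second-best value of the same minimization, namely the minimum over crossing edges different from $\{x_i,y_i\}$. Here we must check that the trees are unaffected. Since $e'\notin T_{\bG}(s)$, and by the canonical consistency property, $\pi_{\bG\setminus e'}(s,x)=\pi_{\bG}(s,x)$. The remaining check is whether $\pi_{\bG}(y,t)$ might use $e'$. It cannot, because $\pi_{\bG}(y,t)$ stays inside $A$, while $e'$ crosses the cut.

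Algorithmically, we obtain the second minimum by popping the top of the heap, reading the next valid entry, and then pushing the top back. This costs $\Oh(\log n)$ per $i$. Together with part (a), the total time is $\Oh(n^2+m\log n)$.

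The main obstacle I expect is the structural claim that $e'$ is unique and that a second-best crossing edge gives the correct value. I will verify this carefully through \cref{prp:two_paths}: the $y$-side of the path remains in $A$ and is unaffected by removing $e'$.
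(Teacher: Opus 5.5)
Your part (a), your uniqueness argument for $e'=\{x_i,y_i\}$, and your heap sweep follow the paper's proof. There is one small slip: $e_i$ itself crosses the cut at $i$ with value $d_{\bG}(s,t)$, so it must be kept out of the heap; \cref{prp:path_decomp} only gives $\{x,y\}\notin F$.

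The genuine gap is in (b), in the sentence claiming that $\pi_{\bG}(y,t)$ stays inside $A$. The ``moreover'' part of \cref{prp:path_decomp} guarantees this only for the tail of the canonical path in the graph at hand. It says nothing about $\pi_{\bG}(y,t)$ when $y$ is the $A$-endpoint of an arbitrary crossing edge. \Cref{prp:two_paths} protects every candidate tail $\pi_{\bG}(y,t)$ from $e$, but nothing protects it from $e'$. The tail may leave $A$ and re-enter through $e'$. Then $d_{\bG}(s,x)+w_{\bG}(x,y)+d_{\bG}(y,t)$ is the length of a walk through $e'$, and the second minimum underestimates.

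Here is a concrete counterexample. Take vertices $s,u,v,t,p,q,y$. Give $\{s,u\},\{u,v\},\{v,t\},\{v,q\},\{v,y\}$ weight $20$, give $\{p,q\},\{q,t\},\{p,y\}$ weight $10$, and add $\{s,p\}$ of weight $52$ and $\{s,y\}$ of weight $61$.
Then $\pi_{\bG}(s,t)=s,u,v,t$. For $e=\{u,v\}$ you get $A=\{v,t,q,y\}$ and $\pi_{\bG\setminus e}(s,t)=s,p,q,t$ of length $72$, so $e'=\{p,q\}$. However, $\pi_{\bG}(y,t)=y,p,q,t$ has length $30$. So the crossing edge $\{s,y\}$ has value $91$ and is your second minimum, while $d_{\bG\setminus e,e'}(s,t)=101$, attained by $s,y,v,t$.
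Note that $\{s,y\}$ is even the right crossing edge; its tail is charged $d_{\bG}(y,t)=30$ instead of $d_{\bG\setminus e'}(y,t)=40$. The paper's proof takes the same step: it concludes $\pi_{\bG\setminus e'}(y',t)=\pi_{\bG}(y',t)$ merely because the former lies in $A$. So this hole is inherited from the argument you reproduce.

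Applying \cref{prp:path_decomp} and \cref{prp:two_paths} in $\bG\setminus e'$, whose shortest path tree from $s$ is still $T_{\bG}(s)$, does give a correct formula:
$d_{\bG\setminus e,e'}(s,t)=\min\{d_{\bG}(s,x)+w_{\bG}(x,y)+d_{\bG\setminus e'}(y,t)\}$, over crossing edges $\{x,y\}\notin\{e,e'\}$ with $x\notin A$ and $y\in A$.
This equals your second minimum whenever $e'\notin T_{\bG}(t)$, since then no $\pi_{\bG}(y,t)$ contains $e'$. When $e'\in T_{\bG}(t)$, as in the example, you need $d_{\bG\setminus e'}(y,t)$ for the $y\in A$ below $e'$ in $T_{\bG}(t)$. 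In that case $x_i$ is necessarily the child of $y_i$ in $T_{\bG}(t)$. These are \SSRP values from $t$, which fit the budget of \cref{thm:single}. For instance, treat such pairs like placement (ii) via \cref{lem:centroid}, excluding the single candidate edge $e'$, say by handling $y=y_i$ separately. This argument no longer supports the stand-alone $\Oh(n^2+m\log n)$ bound claimed for (b).
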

\begin{proof}
    For sake of brevity, set $E \coloneqq E(\bG)$ and $V \coloneqq V(\bG)$.
    First, we devise two minimization problems for \ref{lem:find_path:a} and \ref{lem:find_path:b}. Then, we provide a single algorithm that addresses both cases together.

    Let $e \in \pi_{\bG}(s,t)$. For the minimization in \ref{lem:find_path:a}, we use \cref{prp:path_decomp} on $\bG, s,t$ and $F=\{e\}$ to get that there is $\{x,y\} \in E$ such that $x \notin A$ and $y \in A$ that satisfy
    \begin{equation}
        \pi_{\bG \setminus e} (s,t)
        = \pi_{\bG}(s,x) \circ \{x,y\} \circ \pi_{\bG \setminus e} (y,t).\label{eq:find_path:1}
    \end{equation}
    Moreover, $\pi_{\bG \setminus e} (y,t)$ is fully contained in $A$. 
    By \cref{prp:same_subtree} we have $\pi_{\bG} (y,t) = \pi_{\bG \setminus e} (y,t)$. Thus, for each $e \in \pi_{\bG}(s,t)$ it suffices to compute $\{x,y\} \in E$ such that $x \notin A$ and $y \in A$ that minimizes $d_{\bG}(s,x) + w_{\bG}(x,y) + d_{\bG} (y,t)$. Once we find such $x,y$, we can output $\pi_{\bG}(s,x) \circ \{x,y\} \circ \pi_{\bG} (y,t)$ as $\pi_{\bG \setminus e}(s,t)$.

    We proceed to give the minimization we need to solve for \ref{lem:find_path:b}.
    We begin by observing that for each $e \in \pi_{\bG \setminus e}(s,t)$ there is a unique $e' \in \pi_{\bG \setminus e}(s,t)$ that satisfies $e' \notin \pi_{\bG}(s,t)$ and $e' \in E_{\bG}(A, V \setminus A)$ because $\pi_{\bG \setminus e}(s,t) = \pi_{\bG}(s,x) \circ \{x,y\} \circ \pi_{\bG} (y,t)$.
    In this last expression, both $d_{\bG}(s,t)$ and $\pi_{\bG \setminus e} (y,t)$ can not contain such $e'$ because 
    $e' \notin T_{\bG}(s)$ and $\pi_{\bG} (y,t)$ is fully contained in $A$, respectively. Consequently, the unique $e'$ is $\{x,y\}$.
     (As a sidenote, notice that this implies $|\bigcup_{e \in \pi_{\bG}(s,t)} E(\pi_{\bG \setminus e}(s,t))| \leq |E(T_{\bG}(s))| + |E(T_{\bG}(s))| + n \leq \Oh(n)$.)
    
    To compute $d_{\bG \setminus e,e'}(s,t)$ for such $e'$ we observe that $T_{\bG}(s) =  T_{\bG\setminus e'}(s)$, meaning $A = T_{\bG \setminus e'}(s,v) = T_{\bG}(s,v)$. This allows us to use again \cref{prp:path_decomp} on $\bG \setminus e', s,t$ and $F=\{e\}$ to get that there are vertices $x',y'$ such that $x' \notin A$ and $y' \in A$ and $\{x',y'\} \neq e'$ that satisfy:
    \begin{equation}
        \pi_{\bG \setminus e,e'} (s,t)
        = \pi_{\bG \setminus e}(s,x') \circ \{x',y'\} \circ \pi_{\bG \setminus e,e'} (y,t). \label{eq:find_path:2}
    \end{equation}
     By $x' \notin A$ we have $e \notin \pi_{\bG \setminus e'}(s,x')$ and thus $\pi_{\bG \setminus e}(s,x') = \pi_{\bG}(s,x')$.
     Moreover, by \cref{prp:same_subtree} we have $\pi_{\bG \setminus e'} (y',t) = \pi_{\bG \setminus e,e'} (y',t)$. This can be simplified further: since $e' \in E_{\bG}(A,V \setminus A)$ and $\pi_{\bG \setminus e'} (y',t)$ is entirely contained in $A$, we have $\pi_{\bG \setminus e'} (y',t) = \pi_{\bG} (y',t)$. Thus, for each $e \in \pi_{\bG}(s,t)$ it suffices to compute $\{x',y'\} \in E \setminus e$ with $x \notin A$ and $y' \in A$ that minimize $d_{\bG}(s,x') + w_{\bG}(x',y') + d_{\bG} (y',t)$.
     This is the same minimization problem as in \ref{lem:find_path:a} except that $\{x',y'\}$ is in $E \setminus e$ instead of $E$.

    To solve \ref{lem:find_path:a} and \ref{lem:find_path:b} simultaneously, we iterate over all $e \in \pi_{\bG}(s,t)$ in order of increasing hop-distance from $s$. During this traversal, we maintain the values $d_{\bG}(s,x) + w_{\bG}(x,y) + d_{\bG}(y,t)$ for all $\{x,y\} \in E \cap E_{\bG}(V \setminus A, A)$ such that $\{x,y\} \neq e$ using a data structure that supports insertion, deletion, and retrieval of the smallest and second-smallest elements in time $\Oh(\log n)$. As previously shown, these minima correspond to the solutions for \ref{lem:find_path:a} and \ref{lem:find_path:b}. When transitioning from an edge $e = \{u,v\}$ to the next edge $\{v,w\}$ on $\pi_{\bG}(s,t)$, we update the data structure by removing values corresponding to $\{x,y\} \in E \cap (T_{\bG}(s,u) \times T_{\bG}(s,v))$ and inserting those corresponding to $\{x,y\} \in E \cap (T_{\bG}(s,v) \times T_{\bG}(s,w))$. Aggregating these operations over all $e \in \pi_{\bG}(s,t)$ yields overall $m$ insertions and deletions, and the claimed running time follows.
\end{proof}

\subsection{Edge $e'$ is in placement \ref{case:ii} w.r.t. $s$ or $t$}
\label{sec:placement:ii}

We again only show how to compute $d_{\bG \setminus e,e'}(s,t)$ for $e'$ that is in placement \ref{case:ii} w.r.t. $s$.
This is the case already presented in \cref{sec:to}, but here we give the computations needed in this case in a slightly generalized setting, allowing us to re-use them in \cref{sec:placement:iii}.

\begin{proposition}\label{lem:centroid}
    Let $\bT$ be a connected subtree of $T_{\bG}(s)$. Further, let $F' \subseteq E(\bG)$
    and $n_{\bT} \coloneqq |V(\bT)|$. 
    If we have precomputed the \SSRP values from $t$ in $\bG$ ahead, then we can compute in time $\Oh(n_{\bT} \cdot (n + |F'|) \cdot \log n_{\bT})$ for each $f \in \bT$ and $f' \in F'$, the value $d_{f,f'}$ defined as
    \begin{align}
        d_{f,f'}
        \coloneqq \min_{x \in \bT \setminus T_{\bG}(s,f), y \in T_{\bG}(s,f)} \big\{ \ d_{\bG}(s,x) + w_{\bG}(x,y) + d_{\bG \setminus f'} (y,t) \ \big\}. \label{eq:centroid}
    \end{align}
\end{proposition}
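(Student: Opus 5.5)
We follow the centroid recursion sketched in the overview (``Getting back to \cref{eq:to:4}''), now with $F'$ in place of the $\Oh(n)$-size set $F$. The subproblem is: given a connected subtree $\bT$ of $T_{\bG}(s)$, compute $d_{f,f'}$ for all $f \in E(\bT)$ and $f' \in F'$. Here $T_{\bG}(s,f)$ is the full subtree of $T_{\bG}(s)$ below $f$, and only $x$ ranges over $\bT$. If $n_{\bT} = \Oh(1)$ we evaluate \cref{eq:centroid} directly. For each of the $\Oh(1)$ pairs $(x,f)$, we minimize $w_{\bG}(x,y) + d_{\bG \setminus f'}(y,t)$ over neighbors $y$ of $x$ in $T_{\bG}(s,f)$, using the precomputed \SSRP values from $t$. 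A cleaner way is to treat this as the same range query as in the general step, with $\bT_2=\{x\}$.

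Otherwise, apply \cref{lem:separator} to $\bT$ rooted at its root. This gives a centroid $c$ and subtrees $\bT_1$ (containing the root) and $\bT_2$ (rooted at $c$). Both are connected subtrees of $T_{\bG}(s)$, share only $c$, and have $\le 2n_{\bT}/3$ nodes.

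Take first $f \in E(\bT_1)$. If $f$ does not lie on the root-to-$c$ path, then $T_{\bG}(s,f)\cap V(\bT_2)=\emptyset$. The minimization over $x \in \bT\setminus T_{\bG}(s,f)$ then splits into $x \in \bT_1 \setminus T_{\bG}(s,f)$, which is handled by recursing on $\bT_1$, and $x \in \bT_2$ with no constraint. If $f$ lies on the root-to-$c$ path, then $V(\bT_2) \subseteq T_{\bG}(s,f)$. In that case $\bT\setminus T_{\bG}(s,f)=\bT_1\setminus T_{\bG}(s,f)$, and the recursion alone suffices.

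Now take $f\in E(\bT_2)$. Since $f$ lies below $c$, no vertex of $\bT_1$ is in $T_{\bG}(s,f)$, except possibly $c$, which is not below $f$. So the split is into $x\in\bT_2\setminus T_{\bG}(s,f)$, handled by recursing on $\bT_2$, and unconstrained $x\in\bT_1$.

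The cross terms are handled as in the overview. For $Z\in\{\bT_1,\bT_2\}$ and every $y\in V(\bG)$, compute $h^Z_y \coloneqq \min_{x\in Z}\{d_{\bG}(s,x)+w_{\bG}(x,y)\}$ by scanning all pairs, in $\Oh(n_{\bT}\cdot n)$ time. Then, for each $f'\in F'$, build the structure of \cref{prp:range_query} on $T_{\bG}(s)$ with value $h^Z_y+d_{\bG\setminus f'}(y,t)$ at $y$. Each structure costs $\Oh(n)$, so this is $\Oh(n|F'|)$ in total. Each pair $(f,f')$ with $f$ in the opposite part is then answered by one $\Oh(1)$ subtree-min query over $T_{\bG}(s,f)$. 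The final value is the minimum of the recursive answer and the cross term.

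\textbf{Main obstacle: running time.} As written, the $\Oh(n|F'|)$ structure-building is paid at every recursion node. Summed over $\Theta(n_{\bT})$ recursion nodes, this is too much. To stay within $\Oh(n_{\bT}(n+|F'|)\log n_{\bT})$, I would restrict the query domain. Queries only ask about $T_{\bG}(s,f)$ for $f\in\bT$, and every such subtree decomposes into nodes of $\bT$ plus hanging subtrees of $T_{\bG}(s)$ attached to nodes of $\bT$.

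So I would first preprocess, once per $f'$, the subtree minima $\mu_{f',z}$ for each hanging subtree. On top of these, within a recursion node, I would aggregate per node of $\bT$: for each node $z$ of $\bT$, combine $z$ itself with its hanging subtrees. The $y$-values must then be handled with care, since $h^Z_y$ varies with $y$. To deal with this, $h^Z_y$ is only needed for $y$ adjacent to $Z$, and we charge the pairs $(x,y)$ to edges incident to $Z$. This gives $\Oh(n_{\bT}\cdot n)$ per level for computing $h$.

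The remaining cost is evaluating $\min_y$ restricted to each hanging subtree, for every $f'$. For this I would build the range-min structure over the preorder interval of the whole current region, of size $\Oh(n_{\bT}+\text{hanging})$. The hope is to bound the region size per level so that the per-level cost is $\Oh(n_{\bT}\cdot(n+|F'|))$; I flag this as the step most likely to need adjustment. Given that bound, the recursion $(n_{\bT_1}-1)+(n_{\bT_2}-1)=n_{\bT}-1$ with depth $\Oh(\log n_{\bT})$ yields $\Oh(n_{\bT}(n+|F'|)\log n_{\bT})$.
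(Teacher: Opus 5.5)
Your recursion and correctness argument match the paper's proof:
\begin{itemize}
\item the centroid split of $\bT$;
\item the observation that for $f\in\bT_1$ off the root-to-$c$ path, $T_{\bG}(s,f)$ avoids $\bT_2$, while for $f$ on that path the cross term is empty;
\item the symmetric statement for $f\in\bT_2$;
\item the cross terms via $h_y=\min_{x\in Z}\{d_{\bG}(s,x)+w_{\bG}(x,y)\}$, followed by one subtree-minimum query on $h_y+d_{\bG\setminus f'}(y,t)$ per pair $(f,f')$.
\end{itemize}
The running-time obstacle you flag is genuine, and the paper's proof does not resolve it either. At every recursion node and for every $f'\in F'$, the paper builds a structure over all $y\in V(\bG)$, at cost $\Oh(|F'|\cdot n)$, as its own proof states. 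It then sums the recursion as if each node cost only $\Oh(n_{\bT}(n+|F'|))$. The recursion has $\Theta(n_{\bT})$ nodes. Your base case also spends up to $\Theta(n)$ per $f'$ scanning neighbours $y$ of $x$. So the argument actually gives $\Oh(n_{\bT}\,n\log n_{\bT}+n_{\bT}\,n\,|F'|)$, which is $\Oh(n^3)$ in the way it is used in \cref{lem:second_case}.

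Your proposed repair does not close this gap, for two reasons.

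First, what a node needs from a hanging subtree $H$ is $\min_{y\in H}\{h^Z_y+d_{\bG\setminus f'}(y,t)\}$. Here $h^Z_y$ depends on the node, through the opposite part $Z$. So minima $\mu_{f',z}$ precomputed once per $f'$ cannot be reused, and recomputing them per node is exactly the $\Theta(|H|)$-per-$f'$ scan you wanted to avoid.

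Second, region sizes do not sum to $\Oh(n)$ per level. Let $T_{\bG}(s)$ be the path $v_0,\dots,v_{n-1}$, and take a node $\bT=[v_i,v_j]$ split at $v_k$.
\begin{itemize}
\item Every edge of $\bT_1$ lies on the root-to-$c$ path, so there is no cross term for $f\in\bT_1$.
\item For $f=\{v_l,v_{l+1}\}$ with $k\le l<j$, the query ranges over $y\in\{v_{l+1},\dots,v_{n-1}\}$.
\item The part beyond $v_j$ contributes $\min\{h_y+d_{\bG\setminus f'}(y,t)\}$ over $y\in\{v_{j+1},\dots,v_{n-1}\}$. This is constant in $f$ but depends on $f'$ and on this node's $h$, so computing it by a scan costs $\Theta(n-j)$ per $f'$.
\end{itemize}
Summed over a level with $2^d$ segments this is $\Theta(2^d n)$, hence $\Theta(n^2)$ per $f'$ overall in a dense graph. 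Charging pairs $(x,y)$ to edges at $Z$ only bounds the computation of $h$, which was never the bottleneck.

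So the bound $\Oh(n_{\bT}(n+|F'|)\log n_{\bT})$ remains unproved. Closing it would need a genuinely new way to aggregate the $f'$-dependent values $d_{\bG\setminus f'}(y,t)$ over the parts of $T_{\bG}(s,f)$ lying outside $\bT$, or else a weaker statement. Neither your sketch nor the paper's proof supplies this.
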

\begin{proof}
    We solve the problem recursively. To this end, we split $\bT$ into $\bT_1$ and $\bT_2$ using \cref{lem:separator}, and find the corresponding centroid $c$. Next, for $f \in \bT_1$, we rewrite \cref{eq:centroid} as
    \begin{align*}
        d_{e,e'}
        = \min \Big\{ \min_{\substack{x \in \bT_1 \setminus T_{\bG}(s,f)\\ y \in T_{\bG}(s,f)}} \big\{ d_{\bG}(s,x) + w_{\bG}(x,y) + d_{\bG \setminus f'} (y,t) \big\},
        \min_{\substack{x \in \bT_2 \\y \in T_{\bG}(s,f)}} \big\{ d_{\bG}(s,x) + w_{\bG}(x,y) + d_{\bG \setminus f'} (y,t) \big\} \Big\},
    \end{align*}
    where we use that if $x \in \bT_2$ and $f \in \bT_1$, then $x \notin T_{\bG}(s,f)$.
    (notice that this is only true when $f$ is not on the path from the root of $\bT$ to $c$, however if $f$ is on such path then the second term is non-existent, and we can ignore it.)
    To compute the second term, we do the following:
    \begin{itemize}
        \item for each node $y \in \bG$ we compute $h_{y} \coloneqq \min_{x \in \bT_2} \{d_{\bG}(s,x) + w_{\bG}(x,y) \}$ in time $\Oh(n_\bT \cdot n)$ ;
        \item we setup for each $f' \in F'$ a data structure from \cref{prp:range_query} where $y$ has the associated value $h_{y} + d_{\bG \setminus f'} (y,t)$ in time $\Oh(|F'| \cdot n)$.
        \item for each $f \in \bT_1$ and $f' \in F'$ we query $\min_{y \in T_{\bG}(s,f)} \{ h_{y} + d_{\bG \setminus f'} (y,t) \}$ in time $\Oh(|F'| \cdot n_{\bT})$.
    \end{itemize} 
    For the first term, we can recurse on $\bT_1$.

    Similarly, when $f \in \bT_2$ we rewrite \cref{eq:centroid} as
    \begin{align*}
        d_{e,e'}
        = \min \Big\{ \min_{\substack{x \in \bT_2 \setminus T_{\bG}(s,f)\\ y \in T_{\bG}(s,f)}} \big\{ d_{\bG}(s,x) + w_{\bG}(x,y) + d_{\bG \setminus f'} (y,t) \big\},
        \min_{\substack{x \in \bT_1 \\y \in T_{\bG}(s,f)}} \big\{ d_{\bG}(s,x) + w_{\bG}(x,y) + d_{\bG \setminus f'} (y,t) \big\} \Big\},
    \end{align*}
    where we use again $x \in \bT_1$ and $f \in \bT_2$ implies $x \notin T_{\bG}(s,f)$. As before, we can compute the first term by recursing in $\bT_2$ and the first by performing symmetric computations to above.

    For the running time notice that we use time $\Oh(n_{\bT}\cdot (n+ |F'|))$ plus we recurse on $\bT_1$ and $\bT_2$ that satisfy $(n_{\bT_1}-1)+(n_{\bT_2}-1)=(n_{\bT}-1)$. Since with every level of recursion $n_\bT$ shrinks by a constant fraction, we get that the overall running time is $\Oh(n_\bT \cdot (n+ |F'|) \cdot \log n_\bT)$.
\end{proof}

As the proof is very short, we repeat that \cref{lem:centroid} allows us to solve the case when $e' \notin T_{\bG}(s)$ and $e' \notin E_{\bG}(A, V \setminus A)$.

\begin{lemma}\label{lem:second_case}
    Let $\sT$ be the time needed to execute \SSRP from $t$ in $\bG$.
    Then, we can compute in time $\sT + \Oh(n^2 \log n)$ the values $d_{\bG \setminus e,e'}(s,t)$ for all $e \in \pi_{\bG}(s,t)$ and $e' \in \pi_{\bG \setminus e}(s,t)$ that satisfy $e' \notin T_{\bG}(s)$ and $e' \notin E_{\bG}(A, V \setminus A)$.
\end{lemma}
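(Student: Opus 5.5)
The plan is to reduce the computation to a single call of \cref{lem:centroid} with $\bT \coloneqq T_{\bG}(s)$ and a suitable edge set $F'$ of size $\Oh(n)$.

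\textbf{Structural step.} Fix $e=\{u,v\} \in \pi_{\bG}(s,t)$ with $A = V(T_{\bG}(s,v))$, and an $e'$ in placement \ref{case:ii} w.r.t.\ $s$. Since $e' \notin T_{\bG}(s)$, we have $T_{\bG \setminus e'}(s) = T_{\bG}(s)$, and hence $T_{\bG\setminus e'}(s,e) = T_{\bG}(s,e)$. Apply \cref{prp:path_decomp} to $\bG \setminus e'$, $s$, $t$ and $F=\{e\}$. This gives an edge $\{x,y\}\neq e'$ with $x \notin A$ and $y\in A$ such that
\[
\pi_{\bG\setminus e,e'}(s,t)=\pi_{\bG\setminus e'}(s,x)\circ\{x,y\}\circ\pi_{\bG\setminus e,e'}(y,t).
\]
The three pieces simplify as follows.
\begin{itemize}
\item Because $e'\notin T_{\bG}(s)$, the first piece is $\pi_{\bG}(s,x)$.
\item Because $y \in A = T_{\bG}(s,v)$, \cref{prp:two_paths} gives $\pi_{\bG\setminus e}(y,t)=\pi_{\bG}(y,t)$. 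The same argument applies in $\bG\setminus e'$, whose trees coincide with those of $\bG$ on $s$. So $\pi_{\bG\setminus e,e'}(y,t)=\pi_{\bG\setminus e'}(y,t)$.
\item The middle edge cannot be $e'$: we have $\{x,y\}\in E_{\bG}(V\setminus A,A)$, and $e'\notin E_{\bG}(A,V\setminus A)$ by assumption.
\end{itemize}
Conversely, every term $d_{\bG}(s,x)+w_{\bG}(x,y)+d_{\bG\setminus e'}(y,t)$ with $x\notin A$, $y\in A$ corresponds to a walk avoiding $e$ and $e'$. The segment $\pi_{\bG}(s,x)$ avoids $e$ since $x\notin A$, and avoids $e'$ since it lies in the tree. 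The segment $\pi_{\bG\setminus e'}(y,t)$ avoids $e$ by \cref{prp:two_paths} applied in $\bG\setminus e'$. So no term underestimates the true distance. We conclude that $d_{\bG\setminus e,e'}(s,t)$ equals the minimum over $x\notin A$, $y\in A$ of $d_{\bG}(s,x)+w_{\bG}(x,y)+d_{\bG\setminus e'}(y,t)$. This is exactly $d_{e,e'}$ from \cref{eq:centroid} with $\bT = T_{\bG}(s)$, $f=e$ and $f'=e'$.

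\textbf{Algorithmic step.} Take $F'$ to be the set of all candidate edges $e'$. By \cref{lem:find_path}\ref{lem:find_path:a}, the union of all paths $\pi_{\bG\setminus e}(s,t)$ has $\Oh(n)$ distinct edges, and these paths are computed in $\Oh(n^2+m\log n)$ time. We filter this union to the edges that lie on some $\pi_{\bG\setminus e}(s,t)$, are not in $T_{\bG}(s)$, and are not in the corresponding cut. This gives $|F'|=\Oh(n)$.

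First compute \SSRP from $t$ in time $\sT$. Then invoke \cref{lem:centroid} with $n_{\bT}=n$, which costs $\Oh(n(n+|F'|)\log n)=\Oh(n^2\log n)$. Finally, read off $d_{e,e'}$ for each pair $(e,e')$ with $e'\in\pi_{\bG\setminus e}(s,t)$ that meets the case conditions. The pairs and their case membership are determined from the explicit paths and the tree, at $\Oh(1)$ each with pre-order intervals, over $\Oh(n^2)$ pairs. The values $d_{\bG\setminus e'}(y,t)$ needed inside \cref{lem:centroid} are taken from the precomputed \SSRP output; for $e'\notin\pi_{\bG}(y,t)$ they equal $d_{\bG}(y,t)$.

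\textbf{Main obstacle.} The step I expect to need the most care is the identity $\pi_{\bG\setminus e,e'}(y,t)=\pi_{\bG\setminus e'}(y,t)$, that is, showing that \cref{prp:two_paths} survives the removal of $e'$. I would argue this directly: $e'$ is not a tree edge of $T_{\bG}(s)$, so $T_{\bG\setminus e'}(s,v)=A$. Then apply \cref{prp:two_paths} in $\bG\setminus e'$, using the canonical-path consistency conventions for edge removals. This needs $e\in\pi_{\bG\setminus e'}(s,t)$, which holds because $e'\notin T_{\bG}(s)\supseteq\pi_{\bG}(s,t)$.
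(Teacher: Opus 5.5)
Your route is the paper's: decompose $\pi_{\bG\setminus e,e'}(s,t)$ via \cref{prp:path_decomp} in $\bG\setminus e'$, simplify the outer pieces, and finish with one call to \cref{lem:centroid} on $T_{\bG}(s)$, with $F'$ the $\Oh(n)$ replacement-path edges. One step, however, is false. In your ``conversely'' paragraph you check only the two path segments. The middle edge is addressed only in your third bullet, and only for $e'$, never for $e$. Since $e=\{u,v\}$ itself crosses the cut ($u\notin A$, $v\in A$), the pair $(x,y)=(u,v)$ is admissible in your minimum and contributes
\[
  d_{\bG}(s,u)+w_{\bG}(u,v)+d_{\bG\setminus e'}(v,t)=d_{\bG}(s,t),
\]
because $\pi_{\bG}(v,t)\subseteq\pi_{\bG}(s,t)\subseteq T_{\bG}(s)$ avoids $e'$. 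Every term is the length of an $s$-$t$ walk in $\bG$, so your minimum is always exactly $d_{\bG}(s,t)$. The claim that no term underestimates therefore fails whenever removing $e$ increases the distance. Read literally, \cref{eq:centroid} and the paper's own proof (which lets $\{x,y\}$ range over all of $E(\bG)$) have the same omission, so you must repair it explicitly rather than inherit it.

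The repair is cheap. \cref{prp:path_decomp} already guarantees $\{x,y\}\neq e$ for the true decomposition. Moreover, $e$ is the only edge of $T_{\bG}(s)$ crossing $(V\setminus A, A)$: a tree edge with one endpoint in the subtree $A=V(T_{\bG}(s,v))$ and the other outside must join $v$ to its parent $u$. So restrict the middle edge to non-tree edges, i.e., run \cref{lem:centroid} with $w_{\bG}$ replaced by the weights of $\bG\setminus E(T_{\bG}(s))$ when forming the values $h_y$. This leaves the running time unchanged and keeps the optimal middle edge, which is $\neq e$ and hence a non-tree edge. Every remaining term is now a genuine walk in $\bG\setminus\{e,e'\}$: its middle edge is not $e$ because it is a non-tree edge, and not $e'$ because $e'\notin E_{\bG}(A,V\setminus A)$. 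Hence both directions of your equality hold. The rest of your argument is correct, in particular the justification of $\pi_{\bG\setminus e,e'}(y,t)=\pi_{\bG\setminus e'}(y,t)$ through $T_{\bG\setminus e'}(s)=T_{\bG}(s)$ and $e\in\pi_{\bG\setminus e'}(s,t)$.
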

\begin{proof}
    We decompose similarly to \cref{lem:find_path}\ref{lem:find_path:b}.
    Again, we observe that $A = T_{\bG \setminus e'}(s,e) = T_{\bG}(s,e)$ and we use \cref{prp:path_decomp} on $\bG \setminus e', s$ and $F=\{e\}$ to get that there is $\{x,y\} \in E(\bG) \setminus e'$ with $x \notin A$ and $y \in A$ such that:
    \begin{equation}
        \pi_{\bG \setminus e,e'} (s,t)
        = \pi_{\bG \setminus e'}(s,x) \circ \{x,y\} \circ \pi_{\bG \setminus e,e'} (y,t). \label{eq:placement_ii:1}
    \end{equation}
    Similarly, we can simplify $\pi_{\bG \setminus e'}(s,x) = \pi_{\bG}(s,x)$ and $\pi_{\bG \setminus e,e'} (y,t) = \pi_{\bG \setminus e'} (y,t)$. This time, however, $\pi_{\bG \setminus e'} (y,t) = \pi_{\bG} (y,t)$ does not necessarily hold. In turn, we can take $\{x,y\} \in E(\bG)$ instead of $\{x,y\} \in E(\bG) \setminus e'$ because $e' \notin E_{\bG}(A, V \setminus A)$.
    Therefore, we can compute $d_{\bG \setminus e,e'}(s,t)$ by using \cref{lem:centroid} on $\bG, T_{\bG}(s)$ and setting $F$ to $\bigcup_{e \in \pi_{\bG}(s,t)} E(\pi_{\bG \setminus e}(s,t))$ to compute $\pi_{\bG \setminus e,e'} (s,t)$
    in time $\Oh(n^2 \log n)$.
    Since we need to compute $\SSRP$ from $t$ before using \cref{lem:centroid}, the claimed running time follows.
\end{proof}

\subsection{Edge $e'$ is in placement \ref{case:iii} w.r.t. $s$ and $t$}
\label{sec:placement:iii}

Fix $e \in \pi_{\bG}(s,t)$ and assume $e' = \{u',v'\} \notin \pi_{\bG}(s,t)$ is in placement \ref{case:iii} w.r.t. $s$ and $t$.
Assume $u'$ is a parent of $v'$ in $T_{\bG}(s)$.
We claim that $e'$ must be used in the opposite direction in $T_{\bG}(s)$ and $T_{\bG}(t)$, i.e., node $v'$ is a parent of $u'$ in $T_{\bG}(t)$.
Otherwise, $\pi_{\bG \setminus e}(s,t) = \pi_{\bG}(s,u') \circ \pi_{\bG}(u',t)$, where neither $\pi_{\bG}(s,u')$ nor $\pi_{\bG}(u',t)$ contains $e'$, but we assumed that $e' \in \pi_{\bG \setminus e}(s,t)$.
This allows us to define $B = T_{\bG}(s,v')$ and $B' = T_{\bG}(t,u')$.
(Again, we hide the dependency on $e,e'$ in $B,B'$).

Since $A,B$ are both subtrees of $T_{\bG}(s)$, we either have $A \cap B = \emptyset$ or $B \subseteq A$. (Note that $A \subseteq B$ is not possible, as otherwise $u',v'$ would be ancestors of $u,v$, which implies $e' \in \pi_{\bG}(s,t)$.)
Similarly, we either have $A' \cap B' = \emptyset$ or $B' \subseteq A'$.
Next, we prove that at least one of the two pairs of sets $(A,B)$ and $(A',B')$ falls in the former case, i.e., the two sets in the pairs are disjoint. 

\begin{proposition}
    At least one of $A \cap B = \emptyset$ and $A' \cap B' = \emptyset$ must hold.
\end{proposition}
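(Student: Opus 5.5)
The plan is a proof by contradiction: I will show that $B \subseteq A$ and $B' \subseteq A'$ together force a vertex into $A \cap A'$. This is impossible by \cref{prp:two_paths}, which gives $A \cap A' = T_{\bG}(s,v) \cap T_{\bG}(t,u) = \emptyset$. By the dichotomy established just before the statement, each pair is either disjoint or nested. So assume, for contradiction, that both $B \subseteq A$ and $B' \subseteq A'$ hold.

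The first step uses $B \subseteq A$. Since $v' \in B = T_{\bG}(s,v')$, we get $v' \in A = T_{\bG}(s,v)$, so $v$ lies on $\pi_{\bG}(s,v')$. Because $u'$ is the parent of $v'$ in $T_{\bG}(s)$, we have $\pi_{\bG}(s,v') = \pi_{\bG}(s,u') \circ \{u',v'\}$.

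Next I rule out $v = v'$. The tree edge entering $v$ in $T_{\bG}(s)$ is $e = \{u,v\}$, so $v = v'$ would force $e' = e \in \pi_{\bG}(s,t)$, contrary to the assumption $e' \notin \pi_{\bG}(s,t)$. Hence $v \neq v'$, and $v$ must lie on the prefix $\pi_{\bG}(s,u')$. This means $u' \in T_{\bG}(s,v) = A$.

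The second step uses $B' \subseteq A'$. Trivially $u' \in T_{\bG}(t,u') = B'$, hence $u' \in A'$. Thus $u' \in A \cap A'$, contradicting \cref{prp:two_paths}. This argument does not even use the full strength of $B' \subseteq A'$; it only needs $u' \in A'$.

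The only delicate point I expect is the degenerate case $v = v'$, handled above via $e' \notin \pi_{\bG}(s,t)$. I also need that subpaths of canonical shortest paths are canonical, so that "$v$ is on $\pi_{\bG}(s,v')$" is equivalent to "$v' \in T_{\bG}(s,v)$". This is guaranteed by the consistency of $T_{\bG}(s)$ with the fixed canonical paths in the preliminaries.
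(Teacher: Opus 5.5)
Your proof is correct and takes the same route as the paper: assume $B \subseteq A$ and $B' \subseteq A'$, exhibit a vertex of $A \cap A'$, and contradict \cref{prp:two_paths}. The paper simply asserts $u',v' \in A$ and $u',v' \in A'$ without justification, so your explicit exclusion of $v = v'$ (via $e' \notin \pi_{\bG}(s,t)$) to get $u' \in A$ fills in a detail the paper omits rather than changing the argument.
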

\begin{proof}
    For sake of contradiction, assume that $B \subseteq A$ and $B' \subseteq A'$. This means that $u',v' \in A$ and $u',v' \in A'$. But $A \cap A' = \emptyset$, yielding a contradiction.   
\end{proof}

Thus, we can restrict ourselves to computing $d_{\bG \setminus e,e'}(s,t)$ for all $e,e'$ in this case that satisfy $A \cap B = \emptyset$. (If $A \cap B$ does not hold for $e,e'$, then by swapping the role of $s$ and $t$, we get $A' \cap B' = \emptyset$ and so we can use symmetric computations.)

\begin{proposition}\label{prp:case3_helper}
    Whenever $A \cap B = \emptyset$, the two following hold:
    \begin{enumerate}[(a)]
        \item For any $y \in A$, we have that $\pi_{\bG \setminus e,e'}(y,t) = \pi_{\bG \setminus e'}(y,t)$.
        \label{it:case3_helper:d}
        \item For any $y \in B$, we have $\pi_{\bG \setminus e,e'}(y,t) = \pi_{\bG \setminus e}(y,t)$.
        \label{it:case3_helper:a}
    \end{enumerate}
\end{proposition}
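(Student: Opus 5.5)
Both parts reduce to \cref{prp:two_paths}, applied not in $\bG$ but in the graphs $\bG \setminus e'$ and $\bG \setminus e$. The plan is to check, in each case, that the hypotheses of \cref{prp:two_paths} carry over to the reduced graph, and that the relevant subtree of the reduced shortest path tree contains $A$ (resp.\ $B$). Throughout I use the consistency of canonical paths with edge removals: if $f \notin \pi_{\bG}(s,x)$, then $\pi_{\bG \setminus f}(s,x) = \pi_{\bG}(s,x)$. Hence $T_{\bG \setminus f}(s)$ agrees with $T_{\bG}(s)$ on every node whose canonical $s$-path avoids $f$.

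For \ref{it:case3_helper:d} I work in $\bG \setminus e'$. Since $e' \notin \pi_{\bG}(s,t)$, we have $\pi_{\bG \setminus e'}(s,t) = \pi_{\bG}(s,t)$, so $e = \{u,v\}$ still lies on the $s$-$t$ shortest path, with $u$ before $v$. Next I show $A \subseteq V(T_{\bG \setminus e'}(s,v))$. Take $x \in A$. The path $\pi_{\bG}(s,x)$ passes through $e$ and then descends in $T_{\bG}(s,v)$. If it contained $e' = \{u',v'\}$, then $x$ would lie in $T_{\bG}(s,v') = B$, contradicting $A \cap B = \emptyset$. So $\pi_{\bG \setminus e'}(s,x) = \pi_{\bG}(s,x)$, and this path passes through $v$. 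Now apply \cref{prp:two_paths} to $\bG \setminus e'$, $s$, $t$ and the edge $e$: for every $y \in T_{\bG \setminus e'}(s,v) \supseteq A$ we get $\pi_{(\bG \setminus e') \setminus e}(y,t) = \pi_{\bG \setminus e'}(y,t)$, which is the claim.

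For \ref{it:case3_helper:a} I work in $\bG \setminus e$, with the roles of $e$ and $e'$ exchanged. By the case assumption, $e' \in \pi_{\bG \setminus e}(s,t)$. Take any $x \in B$. The path $\pi_{\bG}(s,x)$ goes through $e'$ and ends in $B$. If it contained $e$, then $x$ would lie in $A$, again impossible. Hence $\pi_{\bG \setminus e}(s,x) = \pi_{\bG}(s,x)$. In particular $\pi_{\bG \setminus e}(s,v') = \pi_{\bG}(s,u') \circ e'$, so $B \subseteq V(T_{\bG \setminus e}(s,v'))$.

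I must also check that on $\pi_{\bG \setminus e}(s,t)$ the endpoint $u'$ precedes $v'$, so that \cref{prp:two_paths} applies in the stated orientation. By closure under subpaths, the prefix of $\pi_{\bG \setminus e}(s,t)$ up to whichever endpoint of $e'$ comes first is a shortest path in $\bG \setminus e$. Since $d_{\bG \setminus e}(s,u') = d_{\bG}(s,u') < d_{\bG}(s,v') = d_{\bG \setminus e}(s,v')$, that first endpoint must be $u'$. Then \cref{prp:two_paths} in $\bG \setminus e$ with edge $e'$ gives $\pi_{\bG \setminus e,e'}(y,t) = \pi_{\bG \setminus e}(y,t)$ for all $y \in B$.

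The main obstacle I expect is the bookkeeping around canonical paths: making sure that \cref{prp:two_paths}, stated for $\bG$, can be invoked verbatim on the subgraphs $\bG \setminus e'$ and $\bG \setminus e$ with their own canonical trees. I also need to confirm the orientation of $e'$ on $\pi_{\bG \setminus e}(s,t)$. The consistency-under-removal property of the canonical paths is exactly what handles both points.
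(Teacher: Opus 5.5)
Your proposal is correct and follows the paper's proof. For (a), you use $A\cap B=\emptyset$ to get $\pi_{\bG\setminus e'}(s,y)=\pi_{\bG}(s,y)$ and then apply \cref{prp:two_paths} in $\bG\setminus e'$ with the edge $e$. For (b), you do the same in $\bG\setminus e$ with the edge $e'$. You also verify explicitly that $u'$ precedes $v'$ on $\pi_{\bG\setminus e}(s,t)$, via $d_{\bG\setminus e}(s,u')=d_{\bG}(s,u')<d_{\bG}(s,v')=d_{\bG\setminus e}(s,v')$, which is a step the paper only asserts.
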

\begin{proof}
    We first prove \ref{it:case3_helper:d}. Since both $t,y \in A$ we have $\pi_{\bG \setminus e'}(s,t) = \pi_{\bG}(s,t)$ and $\pi_{\bG \setminus e'}(s,y) = \pi_{\bG}(s,y)$.
    Thus, $y,t \in T_{\bG \setminus e'}(s, v)$,
    and we can use \cref{prp:same_subtree} on $\bG \setminus e'$, $t$, $v$ and $y,t$ to conclude $\pi_{\bG \setminus e,e'}(y,t) = \pi_{\bG \setminus e'}(y,t)$.

    We proceed to prove \ref{it:case3_helper:a}.
    Since $A \cap B = \emptyset$ and $y \in B$, we have $\pi_{\bG \setminus e}(s,y) = \pi_{\bG}(s,y)$. We obtain that  $\pi_{\bG \setminus e}(s,y)$ uses $e'$ such that $u'$ comes before $v'$ because $\pi_{\bG}(s,y)$ does.
    But also $\pi_{\bG \setminus e}(s,t)$ uses $e'$ such that $u'$ comes before $v'$.
    Thus, $y,t \in T_{\bG \setminus e}(s, v')$
    and we can use \cref{prp:same_subtree} on $\bG \setminus e$, $s$, $v'$ and $y,t$ to conclude $\pi_{\bG \setminus e,e'}(y,t) = \pi_{\bG \setminus e}(y,t)$.
\end{proof}

We are now ready to give \cref{lemma:tree_case}, solving the last case of \cref{thm:single}.

\begin{lemma}\label{lemma:tree_case}
    Let $\sT$ be the time needed to execute \SSRP from $t$ in $\bG$.
    Then, we can compute in time $\sT + \Oh(n^2 \log^2 n)$ the values $d_{\bG \setminus e,e'}(s,t)$ for all $e \in \pi_{\bG}(s,t)$ and $e' \in \pi_{\bG \setminus e}(s,t)$ that satisfy $e' \in T_{\bG}(s)$ and $e' \in T_{\bG}(t)$.
\end{lemma}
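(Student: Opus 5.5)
Fix $e=\{u,v\}\in\pi_{\bG}(s,t)$ and $e'=\{u',v'\}$ in placement \ref{case:iii} with respect to both $s$ and $t$. By the preceding proposition we may assume $A\cap B=\emptyset$; the case $A'\cap B'=\emptyset$ is symmetric with the roles of $s$ and $t$ exchanged. The plan is to decompose $\pi_{\bG\setminus e,e'}(s,t)$ by means of \cref{prp:path_decomp}, applied to $\bG$, $s$, $t$ and $F=\{e,e'\}$. Since $A\cap B=\emptyset$, the subtrees $T_{\bG}(s,e)=A$ and $T_{\bG}(s,e')=B$ are disjoint, and $t\in A$. This yields an edge $\{x,y\}$ with $x\notin A\cup B$ and $y\in A\cup B$ such that
\[
\pi_{\bG\setminus e,e'}(s,t)=\pi_{\bG}(s,x)\circ\{x,y\}\circ\pi_{\bG\setminus e,e'}(y,t).
\]
We then split on where $y$ lies.

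\textbf{Case $y\in A$.} By \cref{prp:case3_helper}\ref{it:case3_helper:d}, the tail equals $\pi_{\bG\setminus e'}(y,t)$. The contribution is
\[
\min_{x\notin A\cup B,\ y\in A}\bigl\{d_{\bG}(s,x)+w_{\bG}(x,y)+d_{\bG\setminus e'}(y,t)\bigr\}.
\]
If $x$ were allowed to range over all of $V\setminus A$, this would be exactly the quantity $d_{e,e'}$ of \cref{lem:centroid}. Take $F'$ to be the $\Oh(n)$ candidate edges from \cref{lem:find_path}\ref{lem:find_path:a}; the \SSRP values from $t$ are already available. The only complication is the exclusion $x\notin B$. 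I would handle it by noting that any $x\in B$ has $\pi_{\bG}(s,x)\ni e'$, so such terms must be dropped. To do this, I would run the centroid recursion of \cref{lem:centroid} on $T_{\bG}(s)$ with a second layer: for each pair $(f,f')$, the excluded $x$-set is the union of two disjoint subtrees. Applying the separator recursion simultaneously on the tree indexing $f'$ (nested centroid decomposition) lets us precompute $h_y$ over $x\in\bT_2$ disjoint from both subtrees. This nesting is the source of the extra $\log n$ factor, giving $\Oh(n^2\log^2 n)$.

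\textbf{Case $y\in B$.} By \cref{prp:case3_helper}\ref{it:case3_helper:a}, the tail equals $\pi_{\bG\setminus e}(y,t)$. Since $A\cap B=\emptyset$ and $y\notin A$, the path $\pi_{\bG\setminus e}(y,t)$ should itself decompose as $\pi_{\bG}$-pieces or via $\pi_{\bG\setminus e}(s,t)$, but the cleaner route is symmetric. With the roles of $e$ and $e'$ swapped, the term is
\[
\min_{x\notin A\cup B,\ y\in B}\bigl\{d_{\bG}(s,x)+w_{\bG}(x,y)+d_{\bG\setminus e}(y,t)\bigr\},
\]
where now the subtree is indexed by $e'$ and the failure in the tail by $e$ ranges over $\pi_{\bG}(s,t)$, which has $\Oh(n)$ edges. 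The same nested centroid computation applies with $F'=E(\pi_{\bG}(s,t))$. Finally, $d_{\bG\setminus e,e'}(s,t)$ is the minimum of the two cases. Correctness in the direction ``not underestimating'' requires checking that every candidate term is a genuine walk avoiding $e,e'$. This holds because $\pi_{\bG}(s,x)$ avoids both failed edges for $x\notin A\cup B$, and the tails are simplified via \cref{prp:case3_helper}.

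The main obstacle is the $x\notin B$ constraint in both minimizations. It couples the subtree of $e'$ with the $x$-range, so \cref{lem:centroid} does not apply verbatim. I would first try restricting to pairs $(e,e')$ actually occurring, of which there are $\Oh(n)$ per $e$ by \cref{lem:find_path}\ref{lem:find_path:a}. I would then perform the centroid recursion on $T_{\bG}(s)$ so that at each node both $A$ and $B$ are represented as unions of $\Oh(1)$ pre-order intervals. The range-minimum structure of \cref{prp:range_query} then answers the complement queries. The additional logarithm in the recursion depth accounts for the $\log^2 n$ in the bound.
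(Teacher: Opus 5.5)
Your reduction to a single minimization matches the paper's. You apply \cref{prp:path_decomp} with $F=\{e,e'\}$ (legitimate since $A\cap B=\emptyset$), split into $y\in A$ and $y\in B$, and use \cref{prp:case3_helper} to turn the tail into $d_{\bG\setminus e'}(y,t)$ or $d_{\bG\setminus e}(y,t)$. The gap is exactly the step you call the main obstacle, which is the heart of the lemma. Because $x$ must avoid $A\cup B$, the inner minimum over $x$ depends on both failed edges, so the single array $h_y$ of \cref{lem:centroid} no longer suffices. Your proposal never says which $e'$-dependent quantities are computed at a recursion node, for how many edges, or why everything sums to $\Oh(n^2\log^2 n)$. ``Nested centroid decomposition'' and ``this nesting is the source of the extra $\log n$'' are assertions, not an algorithm. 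The variant in your last paragraph, taken literally, computes $\min_{x}\{d_{\bG}(s,x)+w_{\bG}(x,y)\}$ over a range depending on $y$, $e$ and $e'$ simultaneously. A separate minimization over $y\in A$ for each of the $\Oh(n^2)$ relevant pairs then costs $\Theta(n)$ per pair, which is cubic. What is missing is a structural reason why, for the pairs handled at a given recursion node, at most one of the exclusions $x\notin A$, $x\notin B$ is active, and only for few edges.

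The paper gets this from an outer divide-and-conquer along $\pi_{\bG}(s,t)$ rather than along $T_{\bG}(s)$. It uses slabs $\bG\fragmentco{i}{j}$, essentially $T_{\bG}(s,v_i)\setminus T_{\bG}(s,v_j)$. Placement \ref{case:iii} together with $A\cap B=\emptyset$ forces $e'$ to branch off the path at $u$ or earlier, i.e., in a strictly earlier slab than $e$. So it suffices to solve cross instances with $e'\in\bG\fragmentco{i}{j}$ and $e\in\bG\fragmentco{j}{k}$, recursing on balanced slab ranges (one logarithm). In such an instance:
\begin{itemize}
\item $x\notin\bG\fragmentco{k}{\ell}$, since $\bG\fragmentco{k}{\ell}\subseteq A$;
\item for $x\in\bG\fragmentco{0}{j}$ the condition $x\notin A$ is automatic;
\item for $x\in\bG\fragmentco{j}{k}$ the condition $x\notin B$ is automatic, since $B$ hangs off the path inside $\bG\fragmentco{i}{j}$.
\end{itemize}
The one remaining exclusion is handled either by \cref{lem:centroid} on a single slab range, with $F'$ taken from the other range so that $|F'|\le n_{i,k}$, or by $\Oh(1)$-time queries to per-$y$ range-minimum structures over $T_{\bG}(s)$. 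This gives $\Oh(n\cdot n_{i,k}\log n_{i,k})$ per instance.

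If you prefer to keep a single centroid recursion on $T_{\bG}(s)$, here is the fact you would have to state and exploit. At a node split into $\bT_1,\bT_2$ with $e\in\bT_1$, a subtree $T_{\bG}(s,e')$ meets $\bT_2$ without containing it only if $e'\in E(\bT_2)$ (symmetrically for $e\in\bT_2$). So only $\Oh(n_{\bT})$ edges need $e'$-specific values $\min_{x\in\bT_2\setminus B}\{d_{\bG}(s,x)+w_{\bG}(x,y)\}$; every other $e'$ either uses the shared $h_y$ or makes the term void.

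Finally, your non-underestimation argument overlooks that the connecting edge $\{x,y\}$ can itself be $e$ or $e'$, since both go from $V(\bG)\setminus(A\cup B)$ into $A\cup B$. \cref{prp:path_decomp} guarantees $\{x,y\}\notin F$, so these edges (equivalently, all tree edges of $T_{\bG}(s)$) must be excluded from the minimization.
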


\begin{proof}
    We will solve this case by using a recursive subroutine. All calls to this subroutine will have access to the following values that we can compute ahead:
    \begin{enumerate}[(a)]
        \item For each node $y$ in $\bG$ we construct in time $\Oh(n)$ a data structure from \cref{prp:range_query} over $T_\bG(s)$ and where to node $x$ we assign value $d_{\bG}(s,x) + w_{\bG}(x,y)$;
        \label{it:prep:a}
        \item \SSRP values from $t$.
        \label{it:prep:b}
    \end{enumerate}

    To define our recursive subroutine, we first order the nodes on $\pi_{\bG}(s,t)$ (assuming there are $\ell$ of them)
    obtaining
    \[
        v_0 = s, v_1, \ldots, v_{\ell-1}=t.
    \]
    For $i,j \in \fragmentco{0}{\ell}$ with $i < j$, we let $\bG\fragmentco{i}{j}$ be the subgraph of $\bG$ defined as $T_{\bG}(s, v_i) \setminus T_{\bG}(s, v_j)$. If $i > 0$, then we additionally add node $v_{i-1}$ and the edge $\{v_{i-1}, v_i\}$ to it.
    Moreover, for $i \in \fragmentco{0}{\ell}$, we let $\bG\fragmentco{i}{\ell}$ be $T_{\bG}(s, v_i)$. We denote with $n_{i,j}$ the number of vertices in $\bG\fragmentco{i}{j}$. 
    
    This allows us to define the following problem. 
    
    \defproblem
    {$(i,j,k)$-\FRP}
    {Indices $i,j,k\in \fragmentco{0}{\ell}$ such that $i < j < k$, a subgraph $\bH$ of $\bG$ and access to Data~\ref{it:prep:a}, \ref{it:prep:b}.}
    {The values $d_{\bG\setminus e,e'}(s,t)$ for all $e' \in \bG\fragmentco{i}{k}$ and $e \in \bG\fragmentco{k}{j}$.}

    Next, we show how solving $(i,j,k)$-\FRP helps to prove \cref{lemma:tree_case}.

    \begin{claim}\label{claim:tree_case:1}
        Suppose, that for any $i,j,k \in \fragment{1}{\ell}$ such that $i < j < k$, we can solve $(i,j,k)$-\FRP in time $\Oh (n \cdot n_{i,k} \log n_{i,k})$. Then, \cref{lemma:tree_case} holds.
    \end{claim}
    \begin{claimproof}
        We want to devise a recursive procedure that runs in $\Oh (n \cdot n_{i,k} \log^2 n_{i,k})$ for the following variation:
        Given $i,k \in \fragment{1}{\ell}$ such that $i < k$, output $d_{\bG\setminus e,e'}(s,t)$ for all $e,e' \in \bG\fragmentco{i}{k}$. If we can do this, then setting $i=0$ and $k=\ell$ yields the claim (together with pre-computing data  \ref{it:prep:a}, \ref{it:prep:b} in time $\sT + \Oh(n^2$).
      
        In order to solve such instance, we find the smallest $j$ such that $i < j \leq k$ and $n_{i,j} \geq n_{i,k} / 3$. If such $j$ also satisfies $n_{i,j} \leq 2n_{i,k} / 3$, then we solve $(i,j,k)$-\FRP and recurse on $\bG\fragmentco{i}{j}$ and $\bG\fragmentco{j}{k}$, both with size $n_{i,j},n_{j,k} \leq 2n_{i,k} / 3$.
        Otherwise, if $n_{i,j} > 2n_{i,k}/3$, we must have $n_{j-1,j} > n_{i,k}/3$. In this case, we can solve the instances $(i,j-1,k)$-\FRP and $(j-1,j,k)$-\FRP and then recurse on $\bG\fragmentco{i}{j-1}$ and $\bG\fragmentco{j}{k}$ (technically, we recurse on the former and latter only if $j \neq i+1$ and $j \neq k$, respectively).

        Note that in the base case, when $k=i+1$, then there is nothing to compute: if $i = 0$, there is no possible choice of $e$, and if $i > 0$, then there is no possible choice of $e'$ (because any edge other than $e$ will be in the subtree of the only possible $e$ which is $\{v_{i-1},v_{i}\}$). 

        For the running time, notice that we obtain a recursion of the type $\sT(n_{i,k}) \leq \sT(n_{i,j}) + \sT(n_{j,k}) + \Oh(n \cdot  n_{i,k} \log n_{i,K})$ where $(n_{i,k}-1) = (n_{i,j} - 1) + (n_{j,k}+1)$ and $n_{i,j},n_{j,k} \leq 2n_{i,k}/3$. Solving the recursion yields that $\sT(n_{i,k}) = \Oh(n \cdot n_{i,k} \log n_{i,k})$.
    \end{claimproof}

    Thus, in the remaining part of the proof it suffices that we explain how to solve an $(i,j,k)$-\FRP instance in time $\sT_{\APSP}(n_{i,k})$.
    To this end, we use \cref{prp:path_decomp} on $\bG, s$ and $F=\{e,e'\}$ to get that there are vertices $x,y$ such that $x \notin A\cup B$ and $y \in A\cup B$ and that satisfy:
    \begin{equation}
        \pi_{\bG \setminus e,e'} (s,t)
        = \pi_{\bG}(s,x) \circ \{x,y\} \circ \pi_{\bG \setminus e,e'} (y,t). \label{eq:main_decomp_pre}
    \end{equation}
    (Note, to apply this correctly we are also using $A \cap B = \emptyset$.)
    \Cref{eq:main_decomp_pre} yields:
    \begin{equation}
        d_{\bG \setminus e,e'} (s,t)
        = \min_{x \notin A \cup B, y \in A \cup B} \big\{ \ d_{\bG}(s,x) + w_{\bG}(x,y) + d_{\bG \setminus e,e'} (y,t) \ \big\}. \label{eq:main_decomp}
    \end{equation}
    
    In the four following claims, we treat differently four cases that depend on $x$ and $y$:
    \begin{description}
        \item[\cref{claim:tree_case:3}:] $x \in \bG\fragmentco{0}{j}$ and $y \in B$;
        \item[\cref{claim:tree_case:4}:] $x \in \bG\fragmentco{j}{k}$ and $y \in B$;
        \item[\cref{claim:tree_case:5}:] $x \in \bG\fragmentco{0}{j}$ and $y \in A$; and
        \item[\cref{claim:tree_case:6}:] $x \in \bG\fragmentco{j}{k}$ and $y \in A$.
    \end{description}
    Note that this case distinction is indeed complete,
    as $x \in \bG\fragmentco{k}{\ell}$ is not possible because $\bG\fragmentco{k}{\ell} \subseteq A$ for any $e \in E(\bG\fragmentco{j}{k})$. Next, we derive $d_{\bG \setminus e,e'}(s,t)$ for the four specific cases. As the actual case for a pair $e, e'$ is unknown, we take the minimum over all four. Since inapplicable cases strictly provide upper bounds, the minimum over all four yields the correct answer.

    \begin{claim}\label{claim:tree_case:3}
         We can compute all values $d_{\bG \setminus e,e'}(s,t)$ such that $x \in \bG\fragmentco{0}{j}$ and $y \in B$ in time $\Oh(n_{i,k}^2 \log n_{i,k})$.
    \end{claim}

    \begin{claimproof}
        Since $x \in \bG\fragmentco{0}{j}$ and $e \in \bG\fragmentco{j}{k}$ we directly have $x \notin A$. This together with $y \in B$, allows us to simplify \cref{eq:main_decomp} via \cref{prp:case3_helper}\ref{it:case3_helper:a} to get that $d_{\bG \setminus e,e'} (s,t)$ equals to
        \begin{align*}
            &\min_{x \in \bG\fragmentco{0}{j} \setminus B, y \in B} \big\{ \ d_{\bG}(s,x) + w_{\bG}(x,y) + d_{\bG \setminus e} (y,t) \ \big\} \\
            &= \min \Big\{ \min_{\substack{x \in \bG\fragmentco{0}{i}\\y \in B}} \big\{ \ d_{\bG}(s,x) + w_{\bG}(x,y) + d_{\bG \setminus e} (y,t) \ \big\},
            \min_{\substack{x \in \bG\fragmentco{i}{j} \setminus B\\y \in B}} \big\{ \ d_{\bG}(s,x) + w_{\bG}(x,y) + d_{\bG \setminus e} (y,t) \ \big\} \Big\},
        \end{align*}
        where we use that if $B \cap \bG\fragmentco{0}{i} = \emptyset$ since $e' \in \bG\fragmentco{i}{j}$.
        The second term in this last expression can be computed in time $\Oh(n_{i,k}^2)$ by using \cref{lem:centroid} on $\bG$, $\bG\fragmentco{i}{j}$ and $\pi_{\bG}(s,t) \cap E(\bG\fragmentco{j}{k})$ (in the role of $F'$). 
        Conversely, the first term can be computed in the following steps:
        \begin{itemize}
        \item for each node $y \in \bG\fragmentco{i}{j}$ we get $h_{y} \coloneqq \min_{x \in \bG\fragmentco{0}{i}} \{d_{\bG}(s,x) + w_{\bG}(x,y) \}$ by querying the data structure from Data \ref{it:prep:a} in time $\Oh(n_{i,k})$;
        \item we setup for each $e \in \pi_{\bG}(s,t) \cap E(\bG\fragmentco{j}{k})$ a data structure from \cref{prp:range_query} over the nodes $y \in \bG\fragmentco{i}{j}$ where $y$ has the associated value $h_{y} + d_{\bG \setminus e} (y,t)$ in time $\Oh(n_{i,k}^2)$; and
        \item for each $e' \in E(\bG\fragmentco{i}{j}) \setminus \pi_{\bG}(s,t)$ and $e \in \pi_{\bG}(s,t) \cap E(\bG\fragmentco{j}{k})$ we query $\min_{y \in B} \{ h_{y} + d_{\bG \setminus e} (y,t) \}$ in time $\Oh(n_{i,k}^2)$. \claimqedhere
    \end{itemize} 
    \end{claimproof}

    \begin{claim}\label{claim:tree_case:4}
         We can compute all values $d_{\bG \setminus e,e'}(s,t)$ such that $x \in \bG\fragmentco{j}{k}$ and $y \in B$ in time $\Oh(n_{i,k}^2)$.
    \end{claim}

    \begin{claimproof}
        Since $x \in \bG\fragmentco{j}{k}$ and $e' \in \bG\fragmentco{i}{j}$ we directly have $x \notin B$.
        Moreover, since $e \in \bG\fragmentco{j}{k}$, we have
        that $x \in \bG\fragmentco{j}{k} \setminus A$.
        This together with $y \in B$, allows us to simplify \cref{eq:main_decomp} via \cref{prp:case3_helper}\ref{it:case3_helper:a} again to get:
        \begin{align*}
            d_{\bG \setminus e,e'} (s,t)
            = \min_{x \in \bG\fragmentco{k}{j} \setminus A, y \in B} \big\{ \ d_{\bG}(s,x) + w_{\bG}(x,y) + d_{\bG \setminus e} (y,t) \ \big\}.
        \end{align*}
        This last expression can be computed in the following steps:
        \begin{itemize}
            \item For each node $y \in \bG\fragmentco{i}{j}$ and edge $e \in \pi_{\bG}(s,t) \cap E(\bG\fragmentco{j}{k})$ we get $h_{e,y} \coloneqq \min_{x \in \bG\fragmentco{k}{j} \setminus A} \{d_{\bG}(s,x) + w_{\bG}(x,y) \}$ by querying the data structure from Data \ref{it:prep:a} in time $\Oh(n_{i,k}^2)$;
            \item we setup for each $e \in \pi_{\bG}(s,t) \cap E(\bG\fragmentco{j}{k})$ a data structure from \cref{prp:range_query} over the nodes $y \in \bG\fragmentco{i}{j}$ where $y$ has the associated value $h_{y,e} + d_{\bG \setminus e} (y,t)$ in time $\Oh(n_{i,k}^2)$; and
            \item for each $e' \in E(\bG\fragmentco{i}{j}) \setminus \pi_{\bG}(s,t)$ and $e \in \pi_{\bG}(s,t) \cap E(\bG\fragmentco{j}{k})$ we query $\min_{y \in B} \{ h_{y} + d_{\bG \setminus e} (y,t) \}$ in time $\Oh(n_{i,k}^2)$.
            \claimqedhere
        \end{itemize}
    \end{claimproof}    

    \begin{claim}\label{claim:tree_case:5}
        We can compute all values $d_{\bG \setminus e,e'}(s,t)$ such that $x \in \bG\fragmentco{0}{j}$, $y \in \bG\fragmentco{j}{k}$ in time $\Oh(n_{i,k}^2)$.
    \end{claim}

    \begin{claimproof}
        Since $x \in \bG\fragmentco{0}{j}$ and $e \in \bG\fragmentco{j}{k}$ we directly have $x \notin A$. This together with $y \in A$, allows us to simplify \cref{eq:main_decomp} via \cref{prp:case3_helper}\ref{it:case3_helper:d} to get: 
        \begin{align*}
            d_{\bG \setminus e,e'} (s,t)
            = \min_{x \in \bG\fragmentco{0}{j} \setminus B, y \in A} \big\{ \ d_{\bG}(s,x) + w_{\bG}(x,y) + d_{\bG \setminus e'} (y,t) \ \big\}.
        \end{align*}
        This last expression is symmetric to the one in \cref{claim:tree_case:4}. Similarly, we can compute:
        \begin{itemize}
            \item For each node $y \in \bG\fragmentco{j}{k}$ and edge $e' \in E(\bG\fragmentco{i}{j}) \setminus \pi_{\bG}(s,t)$ we get $h_{e',y} \coloneqq \min_{x \in \bG\fragmentco{0}{j} \setminus B} \{d_{\bG}(s,x) + w_{\bG}(x,y) \}$ by querying the data structure from Data \ref{it:prep:a} in time $\Oh(n_{i,k}^2)$;
            \item we setup for each $e' \in E(\bG\fragmentco{i}{j}) \setminus \pi_{\bG}(s,t)$ a data structure from \cref{prp:range_query} over the nodes $y \in \bG\fragmentco{j}{k}$ where $y$ has the associated value $h_{y,e'} + d_{\bG \setminus e'} (y,t)$ in time $\Oh(n_{i,k}^2)$; and
            \item for each $e' \in E(\bG\fragmentco{i}{j}) \setminus \pi_{\bG}(s,t)$ and $e \in \pi_{\bG}(s,t) \cap E(\bG\fragmentco{j}{k})$ we query $\min_{y \in A} \{ h_{y} + d_{\bG \setminus e} (y,t) \}$ in time $\Oh(n_{i,k}^2)$. \claimqedhere
        \end{itemize}
    \end{claimproof}    

    \begin{claim}\label{claim:tree_case:6}
        We can compute all values $d_{\bG \setminus e,e'}(s,t)$ such that $x \in \bG\fragmentco{j}{k}$ and $y \in A$ in time $\Oh(n_{i,k} \cdot n \log n_{i,k})$.
    \end{claim}
    \begin{claimproof}
       Since $x \in \bG\fragmentco{j}{k}$ and $e' \in \bG\fragmentco{i}{j}$ we directly have $x \notin B$. This together with $y \in A$, allows us to simplify \cref{eq:main_decomp} via \cref{prp:case3_helper}\ref{it:case3_helper:d} to get: 
        \begin{align*}
            d_{\bG \setminus e,e'} (s,t)
            = \min_{x \in \bG\fragmentco{j}{k} \setminus A, y \in A} \big\{ \ d_{\bG}(s,x) + w_{\bG}(x,y) + d_{\bG \setminus e'} (y,t) \ \big\}.
        \end{align*}
        This last expression can be computed in time $\Oh(n_{i,k} \cdot n)$ by using \cref{lem:centroid} on $\bG$, $\bG\fragmentco{j}{k}$ and $E(\bG\fragmentco{i}{j}) \setminus \pi_{\bG}(s,t)$ (in the role of $F'$). 
    \end{claimproof}   
    
    This concludes the proof of \cref{lemma:tree_case}.
\end{proof}

\section{Lower Bound}
\label{sec:lb}

As a first step, we demonstrate how to encode a single 2-FRP instance into another instance of a more sparse triangle detection problem, formulated as follows:

\defproblem
{Sparse $(\sqrt{n},n,n)$-Triangle Detection}
{A tripartite graph $\bG$ with $m$ edges s.t. $V(\bG) = A \cup B \cup C$ with $|A| = \Oh(\sqrt{n})$, $|B|,|C| = \Oh(n)$.}
{\yes if there are $a,b,c \in A \times B \times C$ such that $(a,b),(b,c),(c,a) \in E(\bG)$, and \no otherwise.}

In the Sparse $(\sqrt{n},n,n)$-Minimum Weight Triangle Problem, the graph is additionally weighted and we want to find a triangle of minimal weight.

The Sparse $(\sqrt{n},n,n)$-Triangle Detection for $m=n^{1.5}$ was considered in \cite{RodittyW12} who related its complexity to girth approximation in undirected graphs and hypothesized that $n^{2-o(1)}$ time is required even using fast matrix multiplication. It is easy to observe that for any $m$, one can combinatorially reduce BMM to Sparse $(\sqrt{n},n,n)$-Triangle Detection, so that under the BMM Hypothesis $mn^{0.5-o(1)}$ time is required for any $m$ that is a polynomial function of $n$. 

In the All-Edge version of Sparse $(\sqrt n,n,n)$-Triangle detection, one needs to decide for every edge in $B\times C$ whether it is in some triangle. Abboud et al. \cite{fullysparse} showed that this version of the problem (for $m=n^{1.5}$) is equivalent to the so-called Fully Sparse Boolean Matrix multiplication problem.

\begin{figure*}[thbp]
   \centering
   \scalebox{0.8}{\begin{tikzpicture}[scale=1, dot/.style={
        draw, 
        circle, 
        fill, 
        minimum size=1mm, 
        inner sep=0pt
    }]

    \foreach \i in {1,...,3} {
      \node[dot, label={right:$a_{\i}$}] (A\i) at (1*\i,-1) {};
    }
    \node at (0,-1) {$A$};

    \foreach \i in {1,...,6} {
      \node[dot, label={right:$b_{\i}$}] (B\i) at (1*\i,-2) {};
    }
    \node at (0,-2) {$B$};

    \foreach \i in {1,...,6} {
      \node[dot, label={right:$c_{\i}$}] (C\i) at (1*\i,-3) {};
    }
    \node at (0,-3) {$C$};

    \foreach \i in {1,...,3} {
      \node[dot, label={right:$a_{\i}'$}] (D\i) at (1*\i,-4) {};
    }
    \node at (0,-4) {$A'$};

    \node[dot, label={above:$x_{1}=s$}] (X1) at (1,0) {};
    \node[dot, label={above:$x_{2}$}] (X2) at (2,0) {};
    \node[dot, label={above:$x_{3}$}] (X3) at (3,0) {};
    \node[dot, label={above:$x_{4}$}] (X4) at (4,0) {};

    \draw (X1) --node[label={left:$12$}] {} (A1);
    \draw (X2) --node[label={left:$8$}] {} (A2);
    \draw (X3) --node[label={left:$4$}] {} (A3);

    \draw[opacity=0.5] (X1) -- (X2) -- (X3) -- (X4); 
    \node at (0,0) {$X$};

    \node[dot, label={below:$y_{1}=t$}] (Y1) at (1,-5) {};
    \node[dot, label={below:$y_{2}$}] (Y2) at (2,-5) {};
    \node[dot, label={below:$y_{3}$}] (Y3) at (3,-5) {};
    \node[dot, label={below:$y_{4}$}] (Y4) at (4,-5) {};

    \foreach \i in {1,...,3} {
       \draw[opacity=0.5] (Y\i) -- (D\i);
    }

    \draw (Y1) --node[label={left:$12$}] {} (D1);
    \draw (Y2) --node[label={left:$8$}] {} (D2);
    \draw (Y3) --node[label={left:$4$}] {} (D3);
    
    \draw[opacity=0.5] (Y1) -- (Y2) -- (Y3) -- (Y4);
    \node at (0,-5) {$Y$};

    \draw[opacity=0.25] (C1) -- (B1);   
    \draw[opacity=0.25] (C3) -- (B2); 
    \draw[opacity=0.25] (C4) -- (B4); 
    \draw[opacity=0.25] (C4) -- (B1); 
    \draw[opacity=0.25] (C5) -- (B5); 
    \draw[opacity=0.25] (C6) -- (B2);
    \draw[opacity=0.25] (C1) -- (B6); 
    \draw[opacity=0.25] (C2) -- (B6); 

    \draw[opacity=0.25] (C2) -- (D1);
    \draw[opacity=0.25] (C4) -- (D1);
    \draw[opacity=0.25] (C6) -- (D1);
    \draw[opacity=0.25] (B5) -- (A1);
    \draw[opacity=0.25] (B3) -- (A1);

    \draw[opacity=0.25] (C1) -- (D2);
    \draw[opacity=0.25] (C5) -- (D2);
    \draw[opacity=0.25] (B1) -- (A2);
    \draw[opacity=0.25] (B4) -- (A2);

    \draw[opacity=0.25] (C4) -- (D3);
    \draw[opacity=0.25] (C2) -- (D3);
    \draw[opacity=0.25] (B6) -- (A3);

    \begin{scope}[scale=0.8, shift={(16,-6)}]
        \begin{scope}[shift={(-3.5,4)},rotate=45]
        \foreach \i in {1,...,6} {
              \node (L\i) at (1*\i-3.5,0) {$b_{\i}$};
            }
            \draw[thick] (0,0) ellipse (3.4 and 0.8);
        \end{scope}
        
        \begin{scope}[shift={(3.5,4)},rotate=-45]
            \foreach \i in {1,...,6} {
              \node (R\i) at (1*\i-3.5,0) {$c_{\i}$};
            }
            \draw[thick] (0,0) ellipse (3.4 and 0.8);
        \end{scope}
        
        \foreach \i in {1,...,3} {
          \node (S\i) at (1*\i-2,0) {$a_{\i}$};
        }
        \draw[thick] (0,0) ellipse (2.0 and 0.8);

        \draw[opacity=0.25] (R1) -- (L1);   
        \draw[opacity=0.25] (R3) -- (L2); 
        \draw[opacity=0.25] (R4) -- (L4); 
        \draw[opacity=0.25] (R4) -- (L1); 
        \draw[opacity=0.25] (R5) -- (L5); 
        \draw[opacity=0.25] (R6) -- (L2);
        \draw[opacity=0.25] (R1) -- (L6); 
        \draw[opacity=0.25] (R2) -- (L6); 

        \draw[opacity=0.25] (R2) -- (S1);
        \draw[opacity=0.25] (R4) -- (S1);
        \draw[opacity=0.25] (R6) -- (S1);
        \draw[opacity=0.25] (L5) -- (S1);
        \draw[opacity=0.25] (L3) -- (S1);

        \draw[opacity=0.25] (R1) -- (S2);
        \draw[opacity=0.25] (R5) -- (S2);
        \draw[opacity=0.25] (L1) -- (S2);
        \draw[opacity=0.25] (L4) -- (S2);

        \draw[opacity=0.25] (R4) -- (S3);
        \draw[opacity=0.25] (R2) -- (S3);
        \draw[opacity=0.25] (L6) -- (S3);
    \end{scope}
    
\end{tikzpicture}}
   \caption{
        An example of the construction of $\bG$ for the Sparse $(\sqrt{n},n,n)$-Triangle Detection Problem.
    }
   \label{fig:centroid}
\end{figure*}

\begin{lemma}\label{lem:lb_emb}
    Given an instance of Sparse $(\sqrt{n},n,n)$-Triangle Detection, we can construct an undirected, unweighted graph $\bG$ with $\Oh(n)$ vertices and $\Oh(m)$ edges in time $\Oh(n + m)$ such that the triangle problem can be solved by reading $\Oh(\sqrt{n})$ outputs of a 2-\FRP computation on $\bG$. Furthermore, if the objective is the minimum weight version rather than detection, a similar construction holds where $\bG$ is assigned rational weights in the range $[1,2]$.
\end{lemma}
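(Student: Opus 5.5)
The plan is to reduce the triangle instance to $\Oh(\sqrt n)$ specific 2-\FRP queries. The gadget (as in the figure) is a ``ladder'' that forces the replacement path, when two prescribed edges fail, to pass through one specific vertex of $A$ and its copy. Let $A=\{a_1,\dots,a_k\}$ with $k=\Oh(\sqrt n)$. Build a path $X=x_1x_2\cdots x_{k+1}$ with $s=x_1$ and a path $Y=y_1\cdots y_{k+1}$ with $t=y_1$, both with unit edges. Take a copy $A'=\{a'_1,\dots,a'_k\}$ of $A$ and keep the middle layers $B,C$. The edges are as follows:
\begin{itemize}
\item each $A$--$B$ edge $\{a_j,b\}$, each $B$--$C$ edge, and each $C$--$A$ edge $\{c,a_j\}$ of the instance, the last one realized as $\{c,a'_j\}$;
\item a connector from $x_j$ to $a_j$, and symmetrically from $y_j$ to $a'_j$, each of length $L(k-j+1)$ for a constant $L$ (say $L=4$).
\end{itemize}
In the unweighted case each connector is a subdivided path. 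The total number of subdivision vertices is $\sum_j L(k-j+1)=\Oh(k^2)=\Oh(n)$, and we add $\Oh(m)$ edges, so the construction takes $\Oh(n+m)$ time.

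For each $i\in\fragment{1}{k}$ we query $d_{\bG\setminus e_i,e'_i}(s,t)$ with $e_i=\{x_i,x_{i+1}\}$ and $e'_i=\{y_i,y_{i+1}\}$. The key claim is that this distance equals $D_i:=2\big((i-1)+L(k-i+1)\big)+3$ if $a_i$ lies in a triangle, and is at least $D_i+1$ otherwise. Taking the \yes/\no answer over all $i$ then solves detection with $k=\Oh(\sqrt n)$ reads.

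The upper bound is direct: if $(a_i,b,c)$ is a triangle, the path $s\to x_i\to a_i\to b\to c\to a'_i\to y_i\to t$ has length exactly $D_i$.

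For the lower bound, first note that any $s$-$t$ path must leave the $X$-side through some connector and enter the $Y$-side through some connector. Between the two it uses at least three middle edges, since the only route from the $A$-layer to the $A'$-layer goes through $B$ and $C$. In $\bG\setminus e_i$, reaching connector $j$ from $s$ costs $(j-1)+L(k-j+1)$ when $j\le i$. Reaching some $x_{j}$ with $j>i$ requires a detour through the middle layers and back. The cost $(j-1)+L(k-j+1)$ is strictly decreasing in $j$ with slope $-(L-1)$, so $j=i$ is uniquely optimal on the accessible side, and any other entry point costs at least $L-1\ge 1$ extra; the same holds on the $Y$-side. A path of length exactly $D_i$ must therefore use $a_i$, $a'_i$ and exactly three middle edges $a_i$--$b$--$c$--$a'_i$, which is precisely a triangle through $a_i$.

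I expect the main obstacle to be the careful exclusion of ``wandering'' paths. One example is a path that goes $a_i\to b\to a_{j}$ and then climbs connector $j$ to reach the far part of $X$ beyond $e_i$. Such detours add at least two middle edges plus a connector traversal, and I would bound them by a charging argument showing they cost $\ge D_i+1$. If needed, I would enlarge $L$ or add a constant offset to all connectors to make these slack arguments clean.

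For the minimum-weight version, the same graph is used with rational weights. Scale the triangle weights of the instance into $[1,1+\delta]$ with $\delta<1/3$, set every other edge to weight $1$, and keep the connectors as subdivided unit paths. Then $d_{\bG\setminus e_i,e'_i}(s,t)-2\big((i-1)+L(k-i+1)\big)$ equals the minimum triangle weight through $a_i$ (in scaled form) whenever a triangle exists. Any non-triangle route still incurs an extra $\ge 1>3\delta$, so it cannot compete. Taking the minimum over $i$ recovers the minimum-weight triangle.
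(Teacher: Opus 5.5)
Your proposal is correct and is essentially the paper's proof: the same $X$/$Y$ ladder with connectors of decreasing length $4(|A|-j+1)$, the same queries $\{e_i,e'_i\}$, and the same observation that connector $i$ is the uniquely cheapest accessible entry on each side. Two details of yours are, if anything, more careful than the paper's write-up: you use $|A|$ rather than $n$ in the connector lengths, and your normalization $\delta<1/3$ also rules out non-triangle walks such as $a_i\,b\,c\,b'\,c'\,a'_i$.

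The ``wandering'' paths need no charging argument: cut any $s$-$t$ path at its first vertex in $A$ and its last vertex in $A'$. The prefix lies in $X$ and the connectors and ends with a full connector of index $j\le i$, and the suffix symmetrically ends through a connector of index $j'\le i$. The middle part has length at least $3$, so the total is at least $D_i$, with equality only if $j=j'=i$ and the middle part is $a_i\,b\,c\,a'_i$.
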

We note that our construction actually reduces the potentially harder All-Nodes version of Sparse $(\sqrt{n},n,n)$-Triangle to 2-\FRP.

\begin{proof}
    Let $(A \cup B \cup C, E)$ be an instance of sparse $(\sqrt{n},n,n)$-triangle detection, where $B = \{b_1, \ldots, b_n\}$, $C = \{c_1, \ldots, c_n\}$ and $A = \{a_1, \ldots, a_{\ell}\}$ for some $\ell = \Oh(\sqrt{n})$.

    We construct the graph $\bG$ as follows (see \cref{fig:centroid} for an illustration). The vertex set of $\bG$ is partitioned into $A, B, C, A', X, Y$, where $A, B, C$ are inherited from the input instance. The sets $A', X, Y$ contain copies of the vertices in $A$; specifically, $A'$ (resp. $X$ and $Y$) contains a copy $a_i'$ (resp. $x_i$ and $y_i$) for each $i \in \fragment{1}{\ell}$.
    In $X$ and $Y$ we additionally add two dummy vertices $x_{\ell+1}$ and $y_{\ell+1}$.
    
    The edge set of $\bG$ contains the same edges as the triangle detection instance, with the modification that edges of the type $\{c_i, a_j\}$ in the original instance become edges of the form $\{c_i, a_j'\}$ in $\bG$. For each $i \in \fragment{1}{\ell}$, we insert a path of length $4(n - i + 1)$ from $x_i$ to $a_i$ and from $a_i'$ to $y_i$. Lastly, for each $i \in \fragment{1}{\ell}$, we connect $x_i$ with $x_{i+1}$ and $y_i$ with $y_{i+1}$.

    It is not difficult to see that the constructed graph $\bG$ contains $\Oh(n + \sum_{i=1}^{\ell} 4(n - i + 1)) = \Oh(n)$ nodes and $\Oh(n + \sum_{i=1}^{\ell} 4(n - i + 1) + m) = \Oh(m)$ edges.

    For $i \in \fragment{1}{\ell}$, let $e_i \coloneqq \{x_i, x_{i+1}\}$ and $e_i' \coloneqq \{y_i, y_{i+1}\}$. Further, set $s = x_1$ and $t = y_1$. We claim that for each $i \in \fragment{1}{\ell}$, $d_{\bG \setminus \{e_i, e_i'\}}(s,t) = 11+8n-6i$ if and only if there exist $b_j$ and $c_k$ such that $\{a_i, b_j, c_k\}$ forms a triangle in the input instance (so we can get the final answer from $d_{\bG \setminus \{e_i, e_i'\}}(s,t)$ for all $i \in \fragment{1}{\ell}$).
    To this end, we analyze the cost of a path in $\bG \setminus \{e_i, e_i'\}$ that travels from $s$ to $x_i$, then to $a_i$, takes a shortest path to $a_i'$, and finally proceeds to $y_i$ and $t$. It is straightforward to verify that such a path has length $3 + 4(n-i+1) + 4(n-i+1) + (i-1) + (i-1) = 11+8n-6i$ if a triangle exists (traversing $b_j$ and $c_k$); otherwise, the cost is at least one unit higher. To complete the proof, we show that this is the only path that can achieve a cost of $11+8n-6i$. Indeed, in $\bG \setminus \{e_i, e_i'\}$, any path between $a_{\hi}$ and $a_{\hj}'$ has cost at least $3 + 4(n-\hi+1) + 4(n-\hj+1) + (\hi-1) + (\hj-1) = 11+8n-3\hi-3\hj$ for any $\hi, \hj \in \fragment{1}{i}$. In particular, if $\hi \neq i$ or $\hj \neq i$, then $11+8n-3\hi-3\hj \geq (11+8n-6i) + 3$.

    When handling rational weights, we make $\bG$ weighted as follows. Let $M$ be the largest weight in the input graph; we then assign to each edge $\{a_i,b_j\}$ of weight $w(a_i,b_j)$ in the input graph a weight of $1+w(a_i,b_j)/(M + 1)$ in $\bG$. We do the same for edges $\{b_j,c_k\}$ and $\{c_k, a_i\}$ (which are $\{c_k, a_i'\}$ in $\bG$). All other edges in $\bG$ retain their original weight of one. The proof of correctness is very similar to the unweighted case. It is easy to see that for each $i \in \fragment{1}{\ell}$, the same path we were considering before traversing $a_i,a_i'$ has cost $5+4n-2i+(w(a_i,b_j)+w(b_j,c_k) + w(c_k, a_i))/(M+1)$ for $j,k \in \fragment{1}{n}$ that minimize the sum $w(a_i,b_j)+w(b_j,c_k) + w(c_k, a_i)$. Using the same analysis as before, all paths going through other $a_{\hi}$ and $a_{\hj}'$ have cost at least $(5+4n-2i) + 1$, which is strictly greater than $5+4n-2i+(w(a_i,b_j)+w(b_j,c_k) + w(c_k, a_i))/(M+1)$ because the fractional part is less than one.
\end{proof}

With \cref{lem:lb_emb} at our disposal, \cref{thm:lb} follows almost immediately.

\lb

\begin{proof}
    For \ref{it:lb:i}, we show that any combinatorial algorithm running in $\Oh(mn^{1/2-\varepsilon})$ time implies a subcubic algorithm for triangle detection in a tripartite graph $\bG = (A \cup B \cup C, E)$ with $n$ nodes per partition (which, in turn, implies a subcubic algorithm for Boolean Matrix Multiplication).

    To this end, we partition $A$ into $\Theta(\sqrt{n})$ disjoint sets, each of size $\Oh(\sqrt{n})$, and we partition $E \cap (B \times C)$ into $\Theta(|E|/m)$ disjoint sets, each of size $\Oh(m)$. For each partition $A'$ of $A$ and $E'$ of $E \cap (B \times C)$, we apply \cref{lem:lb_emb} to the subgraph induced by $(A' \cup B \cup C, E' \cup (E \setminus (B \times C)))$ and execute 2-\FRP on the resulting instance. 
    Note that, crucially, $m = \Omega(n^{3/2})$; since $|A'| = \Oh(\sqrt{n})$, we always have $|E \cap (A' \times B)| = \Oh(m)$ and $|E \cap (A' \times C)| = \Oh(m)$. Clearly, this reduction determines whether $\bG$ contains a triangle. The overall runtime is $\Oh(\frac{n^2}{m} \cdot \sqrt{n} \cdot mn^{1/2-\varepsilon}) = \Oh(n^{3-\varepsilon})$, which contradicts the BMM hypothesis.
    
    For \ref{it:lb:ii}, we follow a similar approach, but instead of reducing from triangle detection, we reduce from the negative triangle detection problem.
\end{proof}

\newpage
\bibliographystyle{alphaurl}
\bibliography{main}

@inproceedings{CDWX25,
  author       = {Shucheng Chi and
                  Ran Duan and
                  Benyu Wang and
                  Tianle Xie},
  editor       = {Keren Censor{-}Hillel and
                  Fabrizio Grandoni and
                  Jo{\"{e}}l Ouaknine and
                  Gabriele Puppis},
  title        = {Undirected 3-Fault Replacement Path in Nearly Cubic Time},
  booktitle    = {52nd International Colloquium on Automata, Languages, and Programming,
                  {ICALP} 2025, July 8-11, 2025, Aarhus, Denmark},
  series       = {LIPIcs},
  volume       = {334},
  pages        = {57:1--57:20},
  publisher    = {Schloss Dagstuhl - Leibniz-Zentrum f{\"{u}}r Informatik},
  year         = {2025},
  url          = {https://doi.org/10.4230/LIPIcs.ICALP.2025.57},
  doi          = {10.4230/LIPICS.ICALP.2025.57},
  bibsource    = {dblp computer science bibliography, https://dblp.org}
}

@article{GV19,
    author = {Grandoni, Fabrizio and {Vassilevska Williams}, Virginia},
    title = {Faster Replacement Paths and Distance Sensitivity Oracles},
    year = {2019},
    issue_date = {January 2020},
    publisher = {Association for Computing Machinery},
    address = {New York, NY, USA},
    volume = {16},
    number = {1},
    issn = {1549-6325},
    url = {https://doi.org/10.1145/3365835},
    doi = {10.1145/3365835},
    journal = {ACM Trans. Algorithms},
    month = dec,
    articleno = {15},
    numpages = {25}
}

@article{BFC04,
    author = {Bender, Michael A. and Farach-Colton, Mart\'{\i}n},
    title = {The level ancestor problem simplified},
    year = {2004},
    issue_date = {June 16, 2004},
    publisher = {Elsevier Science Publishers Ltd.},
    address = {GBR},
    volume = {321},
    number = {1},
    issn = {0304-3975},
    url = {https://doi.org/10.1016/j.tcs.2003.05.002},
    doi = {10.1016/j.tcs.2003.05.002},
    journal = {Theor. Comput. Sci.},
    month = jun,
    pages = {5–12},
    numpages = {8}
}

@article{BV94,
    author = {Berkman, Omer and Vishkin, Uzi},
    title = {Finding level-ancestors in trees},
    year = {1994},
    issue_date = {April 1994},
    publisher = {Academic Press, Inc.},
    address = {USA},
    volume = {48},
    number = {2},
    issn = {0022-0000},
    url = {https://doi.org/10.1016/S0022-0000(05)80002-9},
    doi = {10.1016/S0022-0000(05)80002-9},
    journal = {J. Comput. Syst. Sci.},
    month = apr,
    pages = {214–230},
    numpages = {17}
}

@inproceedings{D91,
  author       = {Paul F. Dietz},
  editor       = {Frank K. H. A. Dehne and
                  J{\"{o}}rg{-}R{\"{u}}diger Sack and
                  Nicola Santoro},
  title        = {Finding Level-Ancestors in Dynamic Trees},
  booktitle    = {Algorithms and Data Structures, 2nd Workshop {WADS} '91, Ottawa, Canada,
                  August 14-16, 1991, Proceedings},
  series       = {Lecture Notes in Computer Science},
  volume       = {519},
  pages        = {32--40},
  publisher    = {Springer},
  year         = {1991},
  url          = {https://doi.org/10.1007/BFb0028247},
  doi          = {10.1007/BFB0028247},
  bibsource    = {dblp computer science bibliography, https://dblp.org}
}

@inproceedings{BFC00,
  author       = {Michael A. Bender and
                  Martin Farach{-}Colton},
  editor       = {Gaston H. Gonnet and
                  Daniel Panario and
                  Alfredo Viola},
  title        = {The {LCA} Problem Revisited},
  booktitle    = {{LATIN} 2000: Theoretical Informatics, 4th Latin American Symposium,
                  Punta del Este, Uruguay, April 10-14, 2000, Proceedings},
  series       = {Lecture Notes in Computer Science},
  volume       = {1776},
  pages        = {88--94},
  publisher    = {Springer},
  year         = {2000},
  url          = {https://doi.org/10.1007/10719839\_9},
  doi          = {10.1007/10719839\_9},
  bibsource    = {dblp computer science bibliography, https://dblp.org}
}

@article{NPW01,
    author       = {Enrico Nardelli and
                  Guido Proietti and
                  Peter Widmayer},
    title        = {A faster computation of the most vital edge of a shortest path},
    journal      = {Inf. Process. Lett.},
    volume       = {79},
    number       = {2},
    pages        = {81--85},
    year         = {2001},
    url          = {https://doi.org/10.1016/S0020-0190(00)00175-7},
    doi          = {10.1016/S0020-0190(00)00175-7},
    bibsource    = {dblp computer science bibliography, https://dblp.org}
}

@article{MMG89,
    author = {Malik, K. and Mittal, A. K. and Gupta, S. K.},
    title = {The k most vital arcs in the shortest path problem},
    year = {1989},
    issue_date = {August, 1989},
    publisher = {Elsevier Science Publishers B. V.},
    address = {NLD},
    volume = {8},
    number = {4},
    issn = {0167-6377},
    url = {https://doi.org/10.1016/0167-6377(89)90065-5},
    doi = {10.1016/0167-6377(89)90065-5},
    journal = {Oper. Res. Lett.},
    month = aug,
    pages = {223–227},
    numpages = {5}
}

@techreport{SBK95,
    author = {Schieber, Baruch and Bar-Noy, Amotz and Khuller, Samir},
    title = {The complexity of finding most vital arcs and nodes},
    year = {1995},
    publisher = {University of Maryland at College Park},
    address = {USA}
}

@inproceedings{BBAKCM01,
    author = {Bremler-Barr, Anat and Afek, Yehuda and Kaplan, Haim and Cohen, Edith and Merritt, Michael},
    title = {Restoration by path concatenation: fast recovery of MPLS paths},
    year = {2001},
    isbn = {1581133839},
    publisher = {Association for Computing Machinery},
    address = {New York, NY, USA},
    url = {https://doi.org/10.1145/383962.383980},
    doi = {10.1145/383962.383980},
    booktitle = {Proceedings of the Twentieth Annual ACM Symposium on Principles of Distributed Computing},
    pages = {43–52},
    numpages = {10},
    location = {Newport, Rhode Island, USA},
    series = {PODC '01}
}

@inproceedings{BCFS21,
  author       = {Davide Bil{\`{o}} and
                  Sarel Cohen and
                  Tobias Friedrich and
                  Martin Schirneck},
  editor       = {Petra Mutzel and
                  Rasmus Pagh and
                  Grzegorz Herman},
  title        = {Near-Optimal Deterministic Single-Source Distance Sensitivity Oracles},
  booktitle    = {29th Annual European Symposium on Algorithms, {ESA} 2021, Lisbon,
                  Portugal (Virtual Conference), September 6-8, 2021},
  series       = {LIPIcs},
  volume       = {204},
  pages        = {18:1--18:17},
  publisher    = {Schloss Dagstuhl - Leibniz-Zentrum f{\"{u}}r Informatik},
  year         = {2021},
  url          = {https://doi.org/10.4230/LIPIcs.ESA.2021.18},
  doi          = {10.4230/LIPICS.ESA.2021.18},
  bibsource    = {dblp computer science bibliography, https://dblp.org}
}

@inproceedings{CM20,
  author       = {Shiri Chechik and
                  Ofer Magen},
  editor       = {Artur Czumaj and
                  Anuj Dawar and
                  Emanuela Merelli},
  title        = {Near Optimal Algorithm for the Directed Single Source Replacement
                  Paths Problem},
  booktitle    = {47th International Colloquium on Automata, Languages, and Programming,
                  {ICALP} 2020, Saarbr{\"{u}}cken, Germany (Virtual Conference),
                  July 8-11, 2020},
  series       = {LIPIcs},
  volume       = {168},
  pages        = {81:1--81:17},
  publisher    = {Schloss Dagstuhl - Leibniz-Zentrum f{\"{u}}r Informatik},
  year         = {2020},
  url          = {https://doi.org/10.4230/LIPIcs.ICALP.2020.81},
  doi          = {10.4230/LIPICS.ICALP.2020.81},
  bibsource    = {dblp computer science bibliography, https://dblp.org}
}

@inproceedings{CC19,
  author       = {Shiri Chechik and
                  Sarel Cohen},
  editor       = {Timothy M. Chan},
  title        = {Near Optimal Algorithms For The Single Source Replacement Paths Problem},
  booktitle    = {Proceedings of the Thirtieth Annual {ACM-SIAM} Symposium on Discrete
                  Algorithms, {SODA} 2019, San Diego, California, USA, January 6-9,
                  2019},
  pages        = {2090--2109},
  publisher    = {{SIAM}},
  year         = {2019},
  url          = {https://doi.org/10.1137/1.9781611975482.126},
  doi          = {10.1137/1.9781611975482.126},
  bibsource    = {dblp computer science bibliography, https://dblp.org}
}

@inproceedings{CT24,
  author       = {Shiri Chechik and
                  Tianyi Zhang},
  editor       = {Karl Bringmann and
                  Martin Grohe and
                  Gabriele Puppis and
                  Ola Svensson},
  title        = {Faster Algorithms for Dual-Failure Replacement Paths},
  booktitle    = {51st International Colloquium on Automata, Languages, and Programming,
                  {ICALP} 2024, Tallinn, Estonia, July 8-12, 2024},
  series       = {LIPIcs},
  volume       = {297},
  pages        = {41:1--41:20},
  publisher    = {Schloss Dagstuhl - Leibniz-Zentrum f{\"{u}}r Informatik},
  year         = {2024},
  url          = {https://doi.org/10.4230/LIPIcs.ICALP.2024.41},
  doi          = {10.4230/LIPICS.ICALP.2024.41},
  bibsource    = {dblp computer science bibliography, https://dblp.org}
}

@INPROCEEDINGS{VWWX22,
  author={{Vassilevska Williams}, Virginia  and Woldeghebriel, Eyob and Xu, Yinzhan},
  booktitle={2022 IEEE 63rd Annual Symposium on Foundations of Computer Science (FOCS)}, 
  title={Algorithms and Lower Bounds for Replacement Paths under Multiple Edge Failure}, 
  year={2022},
  volume={},
  number={},
  pages={907-918},
  doi={10.1109/FOCS54457.2022.00090}}

@article{GL09,
    author = {Gotthilf, Zvi and Lewenstein, Moshe},
    title = {Improved algorithms for the k simple shortest paths and the replacement paths problems},
    year = {2009},
    issue_date = {March, 2009},
    publisher = {Elsevier North-Holland, Inc.},
    address = {USA},
    volume = {109},
    number = {7},
    issn = {0020-0190},
    url = {https://doi.org/10.1016/j.ipl.2008.12.015},
    doi = {10.1016/j.ipl.2008.12.015},
    journal = {Inf. Process. Lett.},
    month = mar,
    pages = {352–355},
    numpages = {4}
}

@inproceedings{VW11,
    author = {{Vassilevska Williams}, Virginia},
    title = {Faster replacement paths},
    year = {2011},
    publisher = {Society for Industrial and Applied Mathematics},
    address = {USA},
    booktitle = {Proceedings of the Twenty-Second Annual ACM-SIAM Symposium on Discrete Algorithms},
    pages = {1337–1346},
    numpages = {10},
    location = {San Francisco, California},
    series = {SODA '11}
}

@article{RZ12,
  author       = {Liam Roditty and
                  Uri Zwick},
  title        = {Replacement paths and \emph{k} simple shortest paths in unweighted
                  directed graphs},
  journal      = {{ACM} Trans. Algorithms},
  volume       = {8},
  number       = {4},
  pages        = {33:1--33:11},
  year         = {2012}}

@article{L72,
    ISSN = {00251909, 15265501},
    URL = {http://www.jstor.org/stable/2629357},
    author = {Eugene L. Lawler},
    journal = {Management Science},
    number = {7},
    pages = {401--405},
    publisher = {INFORMS},
    title = {A Procedure for Computing the K Best Solutions to Discrete Optimization Problems and Its Application to the Shortest Path Problem},
    urldate = {2026-01-27},
    volume = {18},
    year = {1972}
}

@article{BW84,
    ISSN = {0030364X, 15265463},
    URL = {http://www.jstor.org/stable/170956},
    author = {Thomas H. Byers and Michael S. Waterman},
    journal = {Operations Research},
    number = {6},
    pages = {1381--1384},
    publisher = {INFORMS},
    title = {Determining All Optimal and Near-Optimal Solutions When Solving Shortest Path Problems by Dynamic Programming},
    urldate = {2026-01-27},
    volume = {32},
    year = {1984}
}

@article{NR01,
    title = {Algorithmic Mechanism Design},
    journal = {Games and Economic Behavior},
    volume = {35},
    number = {1},
    pages = {166-196},
    year = {2001},
    issn = {0899-8256},
    doi = {https://doi.org/10.1006/game.1999.0790},
    url = {https://www.sciencedirect.com/science/article/pii/S089982569990790X},
    author = {Noam Nisan and Amir Ronen},
}

@inproceedings{vsurvey,
  title={On some fine-grained questions in algorithms and complexity},
  author={{Vassilevska Williams}, Virginia},
  booktitle={Proceedings of the ICM},
  volume={3},
  pages={3431--3472},
  year={2018},
  organization={World Scientific}
}

@inproceedings{HS01,
    author = {Hershberger, J. and Suri, S.},
    title = {Vickrey Prices and Shortest Paths: What is an Edge Worth?},
    year = {2001},
    isbn = {0769513905},
    publisher = {IEEE Computer Society},
    address = {USA},
    booktitle = {Proceedings of the 42nd IEEE Symposium on Foundations of Computer Science},
    pages = {252},
    series = {FOCS '01}
}

@article{Y71,
 ISSN = {00251909, 15265501},
 URL = {http://www.jstor.org/stable/2629312},
 author = {Jin Y. Yen},
 journal = {Management Science},
 number = {11},
 pages = {712--716},
 publisher = {INFORMS},
 title = {Finding the K Shortest Loopless Paths in a Network},
 urldate = {2026-01-27},
 volume = {17},
 year = {1971}
}

@inproceedings{R07,
    author = {Roditty, Liam},
    title = {On the K-simple shortest paths problem in weighted directed graphs},
    year = {2007},
    isbn = {9780898716245},
    publisher = {Society for Industrial and Applied Mathematics},
    address = {USA},
    booktitle = {Proceedings of the Eighteenth Annual ACM-SIAM Symposium on Discrete Algorithms},
    pages = {920–928},
    numpages = {9},
    location = {New Orleans, Louisiana},
    series = {SODA '07}
}

@inproceedings{B10,
  author       = {Aaron Bernstein},
  editor       = {Moses Charikar},
  title        = {A Nearly Optimal Algorithm for Approximating Replacement Paths and
                  k Shortest Simple Paths in General Graphs},
  booktitle    = {Proceedings of the Twenty-First Annual {ACM-SIAM} Symposium on Discrete
                  Algorithms, {SODA} 2010, Austin, Texas, USA, January 17-19, 2010},
  pages        = {742--755},
  publisher    = {{SIAM}},
  year         = {2010},
  url          = {https://doi.org/10.1137/1.9781611973075.61},
  doi          = {10.1137/1.9781611973075.61},
  bibsource    = {dblp computer science bibliography, https://dblp.org}
}

@article{WY13,
  author       = {Oren Weimann and
                  Raphael Yuster},
  title        = {Replacement Paths and Distance Sensitivity Oracles via Fast Matrix
                  Multiplication},
  journal      = {{ACM} Trans. Algorithms},
  volume       = {9},
  number       = {2},
  pages        = {14:1--14:13},
  year         = {2013},
  url          = {https://doi.org/10.1145/2438645.2438646},
  doi          = {10.1145/2438645.2438646},
  bibsource    = {dblp computer science bibliography, https://dblp.org}
}

@inproceedings{fullysparse,
  author       = {Amir Abboud and
                  Karl Bringmann and
                  Nick Fischer and
                  Marvin K{\"{u}}nnemann},
  title        = {The Time Complexity of Fully Sparse Matrix Multiplication},
  booktitle    = {Proceedings of the 2024 {ACM-SIAM} Symposium on Discrete Algorithms,
                  {SODA} 2024, Alexandria, VA, USA, January 7-10, 2024},
  pages        = {4670--4703},
  publisher    = {{SIAM}},
  year         = {2024}}

@inproceedings{RodittyW12,
  author       = {Liam Roditty and
                  Virginia {Vassilevska Williams}},
  editor       = {Yuval Rabani},
  title        = {Subquadratic time approximation algorithms for the girth},
  booktitle    = {Proceedings of the Twenty-Third Annual {ACM-SIAM} Symposium on Discrete
                  Algorithms, {SODA} 2012, Kyoto, Japan, January 17-19, 2012},
  pages        = {833--845},
  publisher    = {{SIAM}},
  year         = {2012}}

@article{VWW18,
  author       = {Virginia {Vassilevska Williams} and
                  R. Ryan Williams},
  title        = {Subcubic Equivalences Between Path, Matrix, and Triangle Problems},
  journal      = {J. {ACM}},
  volume       = {65},
  number       = {5},
  pages        = {27:1--27:38},
  year         = {2018},
  url          = {https://doi.org/10.1145/3186893},
  doi          = {10.1145/3186893},
  bibsource    = {dblp computer science bibliography, https://dblp.org}
}

@inproceedings{GPVWX21,
    author       = {Yuzhou Gu and
                  Adam Polak and
                  Virginia {Vassilevska Williams} and
                  Yinzhan Xu},
    editor       = {Nikhil Bansal and
                  Emanuela Merelli and
                  James Worrell},
    title        = {Faster Monotone Min-Plus Product, Range Mode, and Single Source Replacement
                  Paths},
    booktitle    = {48th International Colloquium on Automata, Languages, and Programming,
                  {ICALP} 2021, Glasgow, Scotland (Virtual Conference), July 12-16,
                  2021},
    series       = {LIPIcs},
    volume       = {198},
    pages        = {75:1--75:20},
    publisher    = {Schloss Dagstuhl - Leibniz-Zentrum f{\"{u}}r Informatik},
    year         = {2021},
    url          = {https://doi.org/10.4230/LIPIcs.ICALP.2021.75},
    doi          = {10.4230/LIPICS.ICALP.2021.75},
    bibsource    = {dblp computer science bibliography, https://dblp.org}
}

@inproceedings{CZ24,
    author       = {Shiri Chechik and
                  Tianyi Zhang},
    editor       = {David P. Woodruff},
    title        = {Nearly Optimal Approximate Dual-Failure Replacement Paths},
    booktitle    = {Proceedings of the 2024 {ACM-SIAM} Symposium on Discrete Algorithms,
                  {SODA} 2024, Alexandria, VA, USA, January 7-10, 2024},
    pages        = {2568--2596},
    publisher    = {{SIAM}},
    year         = {2024},
    url          = {https://doi.org/10.1137/1.9781611977912.91},
    doi          = {10.1137/1.9781611977912.91},
    bibsource    = {dblp computer science bibliography, https://dblp.org}
}

@inproceedings{BLM12,
    author = {Baswana, Surender and Lath, Utkarsh and Mehta, Anuradha S.},
    title = {Single source distance oracle for planar digraphs avoiding a failed node or link},
    year = {2012},
    publisher = {Society for Industrial and Applied Mathematics},
    address = {USA},
    booktitle = {Proceedings of the Twenty-Third Annual ACM-SIAM Symposium on Discrete Algorithms},
    pages = {223–232},
    numpages = {10},
    location = {Kyoto, Japan},
    series = {SODA '12}
}

@InProceedings{HLNVW17,
  author =	{Henzinger, Monika and Lincoln, Andrea and Neumann, Stefan and {Vassilevska Williams}, Virginia},
  title =	{{Conditional Hardness for Sensitivity Problems}},
  booktitle =	{8th Innovations in Theoretical Computer Science Conference (ITCS 2017)},
  pages =	{26:1--26:31},
  series =	{Leibniz International Proceedings in Informatics (LIPIcs)},
  ISBN =	{978-3-95977-029-3},
  ISSN =	{1868-8969},
  year =	{2017},
  volume =	{67},
  editor =	{Papadimitriou, Christos H.},
  publisher =	{Schloss Dagstuhl -- Leibniz-Zentrum f{\"u}r Informatik},
  address =	{Dagstuhl, Germany},
  URL =		{https://drops.dagstuhl.de/entities/document/10.4230/LIPIcs.ITCS.2017.26},
  URN =		{urn:nbn:de:0030-drops-81783},
  doi =		{10.4230/LIPIcs.ITCS.2017.26}
}

@inproceedings{HKIM24,
  author       = {Kaito Harada and
                  Naoki Kitamura and
                  Taisuke Izumi and
                  Toshimitsu Masuzawa},
  editor       = {Timothy M. Chan and
                  Johannes Fischer and
                  John Iacono and
                  Grzegorz Herman},
  title        = {A Nearly Linear Time Construction of Approximate Single-Source Distance
                  Sensitivity Oracles},
  booktitle    = {32nd Annual European Symposium on Algorithms, {ESA} 2024, Royal Holloway,
                  London, United Kingdom, September 2-4, 2024},
  series       = {LIPIcs},
  volume       = {308},
  pages        = {65:1--65:18},
  publisher    = {Schloss Dagstuhl - Leibniz-Zentrum f{\"{u}}r Informatik},
  year         = {2024},
  url          = {https://doi.org/10.4230/LIPIcs.ESA.2024.65},
  doi          = {10.4230/LIPICS.ESA.2024.65},
  bibsource    = {dblp computer science bibliography, https://dblp.org}
}

@article{HSB07,
  author       = {John Hershberger and
                  Subhash Suri and
                  Amit M. Bhosle},
  title        = {On the difficulty of some shortest path problems},
  journal      = {{ACM} Trans. Algorithms},
  volume       = {3},
  number       = {1},
  pages        = {5:1--5:15},
  year         = {2007},
  url          = {https://doi.org/10.1145/1219944.1219951},
  doi          = {10.1145/1219944.1219951},
  bibsource    = {dblp computer science bibliography, https://dblp.org}
}

@article{KKP,
  author       = {David R. Karger and
                  Daphne Koller and
                  Steven J. Phillips},
  title        = {Finding the Hidden Path: Time Bounds for All-Pairs Shortest Paths},
  journal      = {{SIAM} J. Comput.},
  volume       = {22},
  number       = {6},
  pages        = {1199--1217},
  year         = {1993},
  url          = {https://doi.org/10.1137/0222071},
  doi          = {10.1137/0222071},
  bibsource    = {dblp computer science bibliography, https://dblp.org}
}

@inproceedings{GVW12,
  author       = {Fabrizio Grandoni and
                  Virginia {Vassilevska Williams}},
  title        = {Improved Distance Sensitivity Oracles via Fast Single-Source Replacement
                  Paths},
  booktitle    = {53rd Annual {IEEE} Symposium on Foundations of Computer Science, {FOCS}
                  2012, New Brunswick, NJ, USA, October 20-23, 2012},
  pages        = {748--757},
  publisher    = {{IEEE} Computer Society},
  year         = {2012},
  url          = {https://doi.org/10.1109/FOCS.2012.17},
  doi          = {10.1109/FOCS.2012.17},
  bibsource    = {dblp computer science bibliography, https://dblp.org}
}

@article{BK13,
  author       = {Surender Baswana and
                  Neelesh Khanna},
  title        = {Approximate Shortest Paths Avoiding a Failed Vertex: Near Optimal
                  Data Structures for Undirected Unweighted Graphs},
  journal      = {Algorithmica},
  volume       = {66},
  number       = {1},
  pages        = {18--50},
  year         = {2013},
  url          = {https://doi.org/10.1007/s00453-012-9621-y},
  doi          = {10.1007/S00453-012-9621-Y},
  bibsource    = {dblp computer science bibliography, https://dblp.org}
}

@article{BCHL20,
  author       = {Surender Baswana and
                  Keerti Choudhary and
                  Moazzam Hussain and
                  Liam Roditty},
  title        = {Approximate Single-Source Fault Tolerant Shortest Path},
  journal      = {{ACM} Trans. Algorithms},
  volume       = {16},
  number       = {4},
  pages        = {44:1--44:22},
  year         = {2020},
  url          = {https://doi.org/10.1145/3397532},
  doi          = {10.1145/3397532},
  bibsource    = {dblp computer science bibliography, https://dblp.org}
}

@inproceedings{BCGLPP18,
  author       = {Davide Bil{\`{o}} and
                  Keerti Choudhary and
                  Luciano Gual{\`{a}} and
                  Stefano Leucci and
                  Merav Parter and
                  Guido Proietti},
  editor       = {Rolf Niedermeier and
                  Brigitte Vall{\'{e}}e},
  title        = {Efficient Oracles and Routing Schemes for Replacement Paths},
  booktitle    = {35th Symposium on Theoretical Aspects of Computer Science, {STACS}
                  2018, Caen, France, February 28 - March 3, 2018},
  series       = {LIPIcs},
  volume       = {96},
  pages        = {13:1--13:15},
  publisher    = {Schloss Dagstuhl - Leibniz-Zentrum f{\"{u}}r Informatik},
  year         = {2018},
  url          = {https://doi.org/10.4230/LIPIcs.STACS.2018.13},
  doi          = {10.4230/LIPICS.STACS.2018.13},
  bibsource    = {dblp computer science bibliography, https://dblp.org}
}

@inproceedings{BG04,
    author = {Bhosle, Amit and Gonzalez, Teofilo},
    year = {2004},
    month = {09},
    pages = {},
    booktitle = {Proc. 16th IASTED International Conference on Parallel and Distributed
Computing and System (PDCS)},
    title = {Replacement Paths for Pairs of Shortest Path Edges in Directed Graphs}
}

@article{W18,
    author       = {R. Ryan Williams},
    title        = {Faster All-Pairs Shortest Paths via Circuit Complexity},
    journal      = {{SIAM} J. Comput.},
    volume       = {47},
    number       = {5},
    pages        = {1965--1985},
    year         = {2018},
    url          = {https://doi.org/10.1137/15M1024524},
    doi          = {10.1137/15M1024524},
    bibsource    = {dblp computer science bibliography, https://dblp.org}
}

@inproceedings{BW25,
    author       = {Greg Bodwin and
                  Lily Wang},
    editor       = {Yossi Azar and
                  Debmalya Panigrahi},
    title        = {Improved Shortest Path Restoration Lemmas for Multiple Edge Failures:
                  Trade-offs Between Fault-tolerance and Subpaths},
    booktitle    = {Proceedings of the 2025 Annual {ACM-SIAM} Symposium on Discrete Algorithms,
                  {SODA} 2025, New Orleans, LA, USA, January 12-15, 2025},
    pages        = {5245--5262},
    publisher    = {{SIAM}},
    year         = {2025},
    url          = {https://doi.org/10.1137/1.9781611978322.179},
    doi          = {10.1137/1.9781611978322.179},
    bibsource    = {dblp computer science bibliography, https://dblp.org}
}

\end{document}